\renewcommand*{\@fnsymbol}[1]{\ensuremath{\ifcase#1\or *\or 1\or 2\or
 3\or 4\or 5\or 6\or 7 \or 8 \else\@ctrerr\fi}}
\newtheorem{theorem}{Theorem}[section]
\newtheorem{lemma}[theorem]{Lemma}
\newtheorem{proposition}[theorem]{Proposition}
\theoremstyle{definition}
\theoremstyle{remark}
\newtheorem{remark}[theorem]{Remark}
\numberwithin{equation}{section}
\DeclareMathOperator*{\esssup}{ess\,sup}
\newcommand\R{{\mathbb R}}
\newcommand\X{{\R^d}}
\newcommand\N{{\mathbb N}}
\newcommand\F{{\mathcal F}}
\newcommand\B{{\mathcal B}}
\newcommand\K{{\mathcal K}}
\newcommand\Ph{{\mathbb{\phi}}}
\newcommand\La{\Lambda}
\newcommand\la{\lambda}
\newcommand\Ga{\Gamma}
\newcommand\ga{\gamma}
\newcommand\eps{\varepsilon}
\newcommand\Bbs{B_\mathrm{bs}(\Ga_0)}
\newcommand\Fcyl{\F_{\mathrm{cyl}}(\Ga)}
\author{Dmitri Finkelshtein\thanks{Department of Mathematics,
Swansea University, Singleton Park, Swansea SA2 8PP, U.K. ({\tt d.l.finkelshtein@swansea.ac.uk}).} \and Yuri Kondratiev\thanks{Fakult\"{a}t
f\"{u}r Mathematik, Universit\"{a}t Bielefeld, 33615 Bielefeld,
Germany ({\tt kondrat@math.uni-bielefeld.de}).} \and Oleksandr
Kutoviy\thanks{Department of Mathematics, Massachusetts Institute of Technology,
77 Massachusetts Avenue E18-420, Cambridge, MA, USA ({\tt
kutovyi@mit.edu}); Fakult\"{a}t f\"{u}r Mathematik, Universit\"{a}t
Bielefeld, 33615 Bielefeld, Germany ({\tt
kutoviy@math.uni-bielefeld.de}).} \and Elena
Zhizhina\thanks{Institute for Information Transmission Problems,
Moscow, Russia ({\tt ejj@iitp.ru}).}}
\title{On an aggregation in birth-and-death stochastic dynamics\thanks{The financial
support of DFG through the SFB 701 (Bielefeld University), grant RFBR 13-01-12410 OFI-M, and NSF Grant DMS-1008132  is gratefully
acknowledged.}}
\begin{document}

\maketitle

\begin{abstract}
We consider birth and death stochastic dynamics
of particle systems with attractive interaction.
The heuristic generator of the dynamics has
a constant birth rate and density dependent decreasing
death rate.
%that defines prolongation of life for elements
%of configuration provided that there are many other elements nearby.
%We
The corresponding statistical dynamics is constructed. Using the Vlasov type scaling we derive the limiting mesoscopic
evolution and prove that this evolution propagates chaos.
We study a non-linear non-local kinetic equation for
the first correlation function (density of population). The existence of
uniformly bounded solutions as well as solutions
growing inside of a bounded domain and expanding in the space are shown.
These solutions describe two regimes in the mesoscopic system:
regulation and aggregation.
\end{abstract}

\section{Introduction}

The notion of an aggregation has many particular forms depending on the class of problems under considerations. One particular case is related to the study of the animal motion in ecology where the aggregation is a process of finding a higher density of animals at some place comparing to the overall mean density. Usually, this notion is described by a heuristic equation for the density, e.g., of reaction-convention-diffusion type, see e.g. \cite{DdF2008,BCM2007,BB2010,Eft2012} and the literature therein. The problem of a derivation of such equation from underlying interacting particle stochastic dynamics was discussed recently in \cite{BHW2012}. In this paper, the authors considered the case of interacting diffusions for individual based models and the evolution equation for the population density appeared in a mean field limit as a kinetic equation for the microscopic model. The derivation of this equation was done on a physical level of the rigor and the analysis of the resulting evolution equation was realized essentially by numeric simulations. But an aggregation may be observed also in many other stochastic dynamics of interacting particle systems. The aim of this work is to give a concrete illustration of aggregation effect.

We will deal with birth-and-death Markov dynamics in the continuum. The phase space of such processes is the space $\Ga=\Ga(\X)$ of locally finite configurations (subsets) in the Euclidean space $\X$. A structural description of considered processes may be given by means of their heuristic generators which on proper functions (observables) $F:\Ga\to\R$ have the following form
\begin{multline}\label{BADG}
(LF)(\ga)= \sum_{x\in\ga} d(x,\ga) [F(\ga\setminus \{x\})-F(\ga)] \\+\int_{\X} b(x,\ga)[F(\ga\cup \{x\})-F(x)] dx.\end{multline}
Here nonnegative functions $d$ and $b$ are given death and birth rates. We will write informally $L=L_d +L_b$ to separate birth and death parts. Note that at the present time the existence problem for Markov processes in $\Ga$ for a given general birth and death rates is an essentially open problem. We refer to \cite{GK2006} for a detailed discussion of this point. An alternative way of studying the system is to consider rather statistical dynamics then stochastic (i.e., Markov process). The latter means that we are interested in construction and properties of solutions to the corresponding forward Kolmogorov (Fokker--Planck) equation. It describes a time evolution of initial states (initial distributions) of the system, see \cite{KoKutZh,KKM09,FKK2011a} for details.

For particular choices of rates the situation with the construction of corresponding Markov dynamics may be essentially simpler. For example, let $d\equiv 0$ and $b\equiv \uplambda >0$. Then the related Markov process describes independent birth of particles in $\X$ with uniformly distributed birth locations and activity parameter $\uplambda $ (the latter process may be considered also as independent immigration of particles). It is easy to see that the density of particles in such process will grow linearly, see e.g. \cite{FK2009}. A presence of a nontrivial death rate leads to the regulation of the population. Consider the case of constant $d\equiv m>0$ that is interpreted as mortality of existing particles after independent exponentially distributed (with parameter $m$) random moments of time. Then the generator $L_m+L_\uplambda $ gives correctly defined Surgailis process of independent birth and death events in $\Ga$ and this process has a unique invariant measure on $\Ga$ that is just Poisson measure with the constant intensity $\uplambda /m$, see \cite{Fin2010Surg,Sur1983,Sur1984}. Therefore, we observe here a self-regulation of the density of particles due to the death part of the generator.

To create the notion of aggregation, we will consider a population with mutualism that produces lower mortality in the dense part of a population. To~realize this effect in our model, we take a potential $\phi:\X \to \R_+$ given by
an even function $\phi\in L^1 (\X)$. Say, for simplicity, $\phi $ is a continuous function with compact support. For $x\in \ga$ set
$$
E(x, \ga \setminus x):= \sum_{y\in \ga\setminus x} \phi(x-y).
$$
The death rate is defined as
\begin{equation}
\label{DR}
d(x,\ga):= \exp(-E(x,\ga\setminus x)).
\end{equation}
This rate has the following property: for $\gamma, \gamma' \in \Ga$, the inclusion $\gamma\subset \gamma'$ yields
$d(x,\gamma') \leq d(x,\gamma)$, i.e., the mortality of an individual $x$ is decaying with the growth of the population. Such models belong to the class of systems with attraction in the terminology of interacting particle systems theory, see e.g. \cite{Lig1985}.
Note that (contrary to the lattice case) interacting particle systems with the attraction property in the continuum form rather exotic
class. For example, the heuristic invariant measure for our Markov generator $L:=L_d+L_\uplambda $ must be the grand canonical Gibbs measure with the density $\uplambda $
and the interacting potential $-\phi$. But such a measure does not exist! It is why models with attraction in the continuum may
have unusual properties comparing with typical statistical physics systems.

Let $\Lambda=B(x_0,r)\subset \X$ be a ball such that
$$
\inf_{x,y\in \Lambda} \phi(x-y)\geq c_{\Lambda}>0.
$$
Then, for the function $V(\gamma):=|\gamma\cap \Lambda|$ ($|A|$ denotes the cardinality of a set $A$),
we have
$$
(LV)(\ga)\geq - V(\ga) e^{-c_{\Lambda}V(\ga)} +\uplambda m(\Lambda),
$$
where $m(\Lambda)$ denotes the Lebesgue measure of $\Lambda$. Assuming the existence of the Markov process $\ga_t$ for our model and using a reverse Gronwall inequality, we easily deduce that for the large enough
initial number of individuals $V(\ga)$ in $\Lambda$ for the initial configuration $\ga$ the expected number of
individuals in $\Lambda$, that is
$E^{\ga} V(\ga_t)$, will grow with time. It means that an initial large enough fluctuation of particles will grow that is an aggregation effect on the microscopic level. But such fluctuation always will appear in our process with a positive probability.
Unfortunately, this heuristic consideration one can not make rigorous because, first of all, the existence of corresponding Markov process is still an open problem.
Secondly, in such microscopic approach any information about the growth of an initial fluctuation in the space
is difficult to obtain. We shall expect that not only number of particles in a given initial volume but
also the size of the region with the high density will grow with time.

Instead of this microscopic ``king's way'' of the analysis of our system, we will realize the following program:

\begin{enumerate}[(i)]
\item we will construct a statistical dynamics of the system for certain class of initial states;

\item using a Vlasov type scaling for the statistical dynamics we will derive the limiting mesoscopic hierarchy
for the correlation functions; the convergence of rescaled dynamics to the solution to the limiting hierarchy will be proven rigorously;

\item it will be shown that this hierarchy has a chaos preservation property;

\item the latter property produces a kinetic equation for the density of population;

\item we will analyze this kinetic equation in detail showing an aggregation notion.

\end{enumerate}

Note that the considered model admits another interpretation in the framework of mathematical physics. Let us consider a process of the snow rain where the snow particles appear on a surface randomly and uniformly
distributed. The intensity of the snow melt in a point of the surface depends obviously on the density of the snow layer
around this point: higher density will effect slower melt. Then an aggregation may be considered as an appearing of snowbanks on the surface and may be described by means of our model.

The paper consists of two parts. The first part is devoted to the mathematically rigorous realization of the items (i)--(iv) above. Namely, in Section~2, we briefly describe the background for analysis on the configuration space $\Ga$ (more detailed explanation can be found in e.g. \cite{KK2002,KoKut}). In Section~3, we construct the microscopic dynamics in the sense that we solve the evolution equation for correlation functions of our systems. This equation is an analog of the well-known BBGKY-hierarchy for Hamiltonian dynamics. The solution exists in a space of correlation functions with the so-called Ruelle bounds but on a finite time interval only. In Section 4, we study a mesoscopic description for our systems, which is based on the approach proposed in \cite{FKK2010a} and realized for some particular models in e.g. \cite{FKK2011a,FKK2010b,FKKoz2011}. As a result, we obtain a reduced (limiting) dynamics of correlation functions, however, this dynamics has the so-called chaotic preservation property. The latter is that the Poisson (free) distributions are preserved in the course of the reduced evolution. The densities of these Poisson distribution will evolve in time and we derive the evolution equation for these densities (the kinetic equation). It has the form
\begin{equation}\label{eqqeeq}
  \dfrac{\partial }{\partial t}u_{t}\left( x\right) =-mu_{t}\left( x\right)
e^{-\left( u_{t}\ast \phi \right) \left( x\right) }+\uplambda.
\end{equation}
The second part of the paper (Section 5) is devoted to the detailed analysis of this non-linear non-local equation.
In Subsection 5.1 we prove the existence and uniqueness of non-negative continuous solutions to \eqref{eqqeeq}, study the stability of an equilibrium solution to this equation. Moreover, one can show that (Theorem~\ref{Thbdd}) the small enough initial data for the equation \eqref{eqqeeq} leads to the uniformly bounded solution in space and time. We also obtain a comparison principle for solutions. In Subsection 5.2, however, we show that if the initial data is large enough in some volume then the corresponding solution grows pointwise to infinity at this volume for any parameters of our system. This effect is known as the aggregation in the system. We also prove that non-local character of the equation \eqref{eqqeeq} yields an expansion of the initially localized aggregation.

\section{Basic facts and notation}\label{sect-Prelim}

Let ${\B}({\X})$ be the family of all Borel sets in ${\X}$, $d\geq
1$; ${\B}_{\mathrm{b}} ({\X})$ denotes the system of all bounded
sets from ${\B}({\X})$.

The configuration space over space $\X$ consists of all locally
finite subsets (configurations) of $\X$. Namely,
\begin{equation} \label{confspace}
\Ga =\Ga\bigl(\X\bigr) :=\Bigl\{ \ga \subset \X \Bigm| |\ga _\La
|<\infty, \ \mathrm{for \ all } \ \La \in {\B}_{\mathrm{b}}
(\X)\Bigr\}.
\end{equation}
Here $|\cdot|$ means the cardinality of a~set, and
$\ga_\La:=\ga\cap\La$. The space $\Ga$ is equipped with the vague
topology, i.e., the weakest topology for which all mappings
$\Ga\ni\ga\mapsto \sum_{x\in\ga} f(x)\in{\R}$ are continuous for any
continuous function $f$ on $\X$ with compact support. The
corresponding Borel $\sigma $-algebra $\B(\Ga )$ is the smallest
$\sigma $-algebra for which all mappings $\Ga \ni \ga \mapsto |\ga_
\La |\in{ \N}_0:={\N}\cup\{0\}$ are measurable for any $\La\in{
\B}_{\mathrm{b}}(\X)$, see e.g. \cite{AKR1998a}.

The space of $n$-point configurations in $Y\in\B(\X)$ is defined by
\begin{equation*}
\Ga^{(n)}(Y):=\Bigl\{ \eta \subset Y \Bigm| |\eta |=n\Bigr\} ,\quad
n\in { \N}.
\end{equation*}
We set $\Ga^{(0)}(Y):=\{\emptyset\}$. As a~set, $\Ga^{(n)}(Y)$ may
be identified with the symmetrization of $\widetilde{Y^n} = \bigl\{
(x_1,\ldots ,x_n)\in Y^n \bigm| x_k\neq x_l \ \mathrm{if} \ k\neq
l\bigr\}$. Hence one can introduce the corresponding Borel $\sigma
$-algebra, which we denote by $\B\bigl(\Ga^{(n)}(Y)\bigr)$. The
space of finite configurations in $Y\in\B(\X)$ is defined as
\begin{equation*}
\Ga_0(Y):=\bigsqcup_{n\in {\N}_0}\Ga^{(n)}(Y).
\end{equation*}
This space is equipped with the topology of the disjoint union. Let
$\B \bigl(\Ga_0(Y)\bigr)$ denote the corresponding Borel $\sigma
$-algebra. In the case of $Y=\X$ we will omit the index $Y$ in the
previously defined notations. Namely, $\Ga_0:=\Ga_{0}(\X)$,
$\Ga^{(n)}:=\Ga^{(n)}(\X)$.

The restriction of the Lebesgue product measure $(dx)^n$ to
$\bigl(\Ga^{(n)}, \B(\Ga^{(n)})\bigr)$ we denote by $m^{(n)}$. We
set $m^{(0)}:=\delta_{\{\emptyset\}}$. The Lebesgue--Poisson measure
$\la $ on $\Ga_0$ is defined by
\begin{equation} \label{LP-meas-def}
\la :=\sum_{n=0}^\infty \frac {1}{n!}m^{(n)}.
\end{equation}
For any $\La\in\B_{\mathrm{b}}(\X)$ the restriction of $\la$ to $\Ga
(\La):=\Ga_{0}(\La)$ will be also denoted by $\la $. The space
$\bigl( \Ga, \B(\Ga)\bigr)$ is the projective limit of the family of
spaces $\bigl\{\bigl( \Ga(\La), \B(\Ga(\La))\bigr)\bigr\}_{\La \in
\B_{\mathrm{b}} (\X)}$. The Poisson measure $\pi$ on $\bigl(\Ga
,\B(\Ga )\bigr)$ is given as the projective limit of the family of
measures $\{\pi^\La \}_{\La \in \B_{\mathrm{b}} (\X)}$, where $
\pi^\La:=e^{-m(\La)}\la $ is the probability measure on $\bigl(
\Ga(\La), \B(\Ga(\La))\bigr)$ and $m(\La)$ is the Lebesgue measure
of $\La\in \B_{\mathrm{b}} (\X)$ (see e.g. \cite{AKR1998a} for
details).

A set $M\in \B (\Ga_0)$ is called bounded if there exists $ \La \in
\B_{\mathrm{b}} (\X)$ and $N\in { \N}$ such that $M\subset
\bigsqcup_{n=0}^N\Ga^{(n)}(\La)$. The set of bounded measurable
functions with bounded support we denote by $
B_{\mathrm{bs}}(\Ga_0)$, i.e., $G\in B_{\mathrm{bs}}(\Ga_0)$ if $
G\upharpoonright_{\Ga_0\setminus M}=0$ for some bounded $M\in {\B
}(\Ga_0)$. Any $\B(\Ga_0)$-measurable function $G$ on $ \Ga_0$, in
fact, is defined by a~sequence of functions
$\bigl\{G^{(n)}\bigr\}_{n\in{ \N}_0}$ where $G^{(n)}$ is a
$\B(\Ga^{(n)})$-measurable function on $\Ga^{(n)}$. The set of
\textit{cylinder functions} on $\Ga$ we denote by ${{\mathcal{
F}}_{\mathrm{cyl}}}(\Ga )$. Each $F\in
{{\mathcal{F}}_{\mathrm{cyl}}}(\Ga )$ is characterized by the
following relation: $F(\ga )=F(\ga_\La )$ for some $\La\in
\B_{\mathrm{b}}(\X)$. Functions on $\Ga$ will be called {\em
observables} whereas functions on $\Ga_0$ well be called {\em
quasi-observables}.

There exists mapping from $B_{\mathrm{bs}} (\Ga_0)$ into ${{
\mathcal{F}}_{\mathrm{cyl}}}(\Ga )$, which plays the key role in our
further considerations:
\begin{equation}
(KG)(\ga ):=\sum_{\eta \Subset \ga }G(\eta ), \quad \ga \in \Ga,
\label{KT3.15}
\end{equation}
where $G\in B_{\mathrm{bs}}(\Ga_0)$, see e.g.
\cite{KK2002,Len1975,Len1975a}. The summation in \eqref{KT3.15} is
taken over all finite subconfigurations $\eta\in\Ga_0$ of the
(infinite) configuration $\ga\in\Ga$; we denote this by the symbol,
$\eta\Subset\ga $. The mapping $K$ is linear, positivity preserving,
and invertible, with
\begin{equation}
(K^{-1}F)(\eta ):=\sum_{\xi \subset \eta }(-1)^{|\eta \setminus \xi
|}F(\xi ),\quad \eta \in \Ga_0. \label{k-1trans}
\end{equation}
\begin{remark}\label{rem:finonfin}
We would like to stress that the right hand side of \eqref{k-1trans}
is well-defined for any $F$ which is pointwise defined at least on
the whole $\Ga_0$.
\end{remark}

The so-called coherent state corresponding to a~$\B(\X)$-measurable
function $f$ is defined by
\begin{equation}\label{LPexp}
e_\la (f,\eta ):=\prod_{x\in \eta }f(x) ,\ \eta \in \Ga
_0\!\setminus\!\{\emptyset\},\quad e_\la (f,\emptyset ):=1.
\end{equation}Then
\begin{equation}\label{Kexp}
(Ke_\la (f))(\eta)=e_\la(f+1,\eta), \quad \eta\in\Ga_0
\end{equation}
and for any $f\in L^1(\X,dx)$
\begin{equation}\label{intexp}
\int_{\Ga_0}e_\la (f,\eta)d\la(\eta)=\exp\Bigl\{\int_\X
f(x)dx\Bigr\}.
\end{equation}

A measure $\mu \in {\mathcal{M}}_{\mathrm{fm} }^1(\Ga )$ is called
locally absolutely continuous with respect to the Poisson measure
$\pi$ if for any $\La \in \B_{\mathrm{b}} (\X)$ the projection of
$\mu$ onto $\Ga(\La)$ is absolutely continuous with respect to the
projection of $ \pi$ onto $\Ga(\La)$. In this case, according to
\cite{KK2002}, there exists a~\emph{correlation functional}
$k_{\mu}:\Ga_0 \rightarrow [0,+\infty)=:{\R}_+$ such that for any $G\in
B_{\mathrm{bs}} (\Ga_0)$ the following equality holds
\begin{equation} \label{eqmeans}
\int_\Ga (KG)(\ga) d\mu(\ga)=\int_{\Ga_0}G(\eta)
k_\mu(\eta)d\la(\eta).
\end{equation}
The functions $ k_{\mu}^{(n)}:(\R^{d})^{n}\longrightarrow\R_{+} $
given by
\[
k_{\mu}^{(n)}(x_{1},\ldots,x_{n}):=
\begin{cases}
k_{\mu}(\{x_{1},\ldots,x_{n}\}), & \mathrm{if} \
(x_{1},\ldots,x_{n})\in \widetilde{(\R^{d})^{n}}\\
0, & \mathrm{ otherwise}
\end{cases}
\]
are called \emph{correlation functions} of the measure $\mu$. Note
that $k_\mu^{(0)}=1$.

Below we would like to mention without proof the partial case of the
well-known technical lemma (see e.g. \cite{KMZ2004}) which plays
very important role in our calculations.

\begin{lemma}
\label{Minlos} For any measurable function $H:\Ga_0\times\Ga_0\times
\Ga_0\rightarrow{\R}$
\begin{equation} \label{minlosid}
\int_{\Ga _{0}}\sum_{\xi \subset \eta }H\left( \xi ,\eta \setminus
\xi ,\eta \right) d\la \left( \eta \right) =\int_{\Ga _{0}}\int_{\Ga
_{0}}H\left( \xi ,\eta ,\eta \cup \xi \right) d\la \left( \xi
\right) d\la \left( \eta \right)
\end{equation}
if both sides of the equality make sense.
\end{lemma}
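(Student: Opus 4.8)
The plan is to prove the identity first for nonnegative measurable $H$ — where both sides are well-defined elements of $[0,+\infty]$ and all the rearrangements of sums and integrals below are legitimate by Tonelli's theorem — and then deduce the general case by applying the nonnegative version to $|H|$: when the common value so obtained is finite (which is how ``if both sides make sense'' should be read here), one writes $H=H_+-H_-$ and subtracts the two resulting identities.

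So let $H\ge 0$. First I would unfold the left-hand side with \eqref{LP-meas-def}, which turns $\int_{\Ga_0}(\cdots)\,d\la(\eta)$ into $\sum_{n\ge0}\frac1{n!}\int_{\Ga^{(n)}}(\cdots)\,dm^{(n)}(\eta)$, and identify $\Ga^{(n)}$ with the symmetric part of $(\X)^n$, writing $\eta=\{x_1,\dots,x_n\}$ for $(x_1,\dots,x_n)\in(\X)^n$. The sum $\sum_{\xi\subset\eta}$ then runs over the $2^n$ subsets $I\subseteq\{1,\dots,n\}$; I would group these by their cardinality $k=|I|$ and use that after integration against $(dx)^n$ the summand does not depend on which $k$-element $I$ is chosen (permutation invariance of Lebesgue measure), so that, together with $n!=k!\,(n-k)!\binom nk$ to absorb the $\binom nk$ such $I$,
\begin{multline*}
\frac1{n!}\int_{(\X)^n}\sum_{\xi\subset\eta}H(\xi,\eta\setminus\xi,\eta)\,dx_1\cdots dx_n \\
=\sum_{k=0}^n\frac1{k!\,(n-k)!}\int_{(\X)^n} H\bigl(\{x_1,\dots,x_k\},\{x_{k+1},\dots,x_n\},\{x_1,\dots,x_n\}\bigr)\,dx_1\cdots dx_n.
\end{multline*}
Summing over $n\ge0$ and reindexing the double sum over $(n,k)$ by $(k,m)$ with $m:=n-k$ — both then ranging freely over $\N_0$ — the whole left-hand side of the lemma becomes, after relabelling $x_{k+1},\dots,x_{k+m}$ as $y_1,\dots,y_m$ and splitting the product integral,
\begin{multline*}
\sum_{k=0}^\infty\sum_{m=0}^\infty\frac1{k!\,m!}\int_{(\X)^{k}}\int_{(\X)^{m}} \\
H\bigl(\{x_1,\dots,x_k\},\{y_1,\dots,y_m\},\{x_1,\dots,x_k\}\cup\{y_1,\dots,y_m\}\bigr)\,dy_1\cdots dy_m\,dx_1\cdots dx_k,
\end{multline*}
which by \eqref{LP-meas-def} read backwards is precisely $\int_{\Ga_0}\int_{\Ga_0}H(\xi,\eta,\eta\cup\xi)\,d\la(\xi)\,d\la(\eta)$.

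The one step that genuinely needs care — and where I expect the only real subtlety to be — is the passage between $\Ga^{(j)}$ and $(\X)^j$ and, in particular, the identification of $\Ga^{(k+m)}$ with the configurations $\{x_1,\dots,x_k\}\cup\{y_1,\dots,y_m\}$: this union has $k+m$ distinct points only off the diagonal set $\{x_i=y_j\text{ for some }i,j\}$, which is $(dx)^k(dy)^m$-null and hence immaterial for the integrals; likewise the diagonals inside $(\X)^n$ are null, which is exactly the reason $m^{(n)}$ was defined as the restriction of $(dx)^n$ to the off-diagonal set $\widetilde{(\X)^n}$. Everything else is bookkeeping with factorials and binomial coefficients plus an appeal to Tonelli.
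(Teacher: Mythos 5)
The paper does not give its own proof of Lemma~\ref{Minlos}: it is stated ``without proof'' with a pointer to \cite{KMZ2004}, so there is no in-paper argument to compare against. Your proof is correct and is the standard direct computation for such identities involving the Lebesgue--Poisson measure.

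Briefly, on the merits: the reduction to nonnegative $H$ via Tonelli and then to general $H$ via $H=H_+-H_-$ (under the reading that ``both sides make sense'' means finiteness of the common value for $|H|$, which is exactly how the paper uses the lemma just below its statement) is sound. The regrouping of the subset sum by cardinality $k$, collapsing the $\binom{n}{k}$ choices of $I$ into a single term by permutation invariance of $(dx)^n$ and absorbing the binomial coefficient via $\tfrac{1}{n!}\binom{n}{k}=\tfrac{1}{k!\,(n-k)!}$, is the key combinatorial step and you execute it correctly. The reindexing $(n,k)\mapsto(k,m)$ with $m=n-k$ and the appeal to Tonelli to interchange sums and split the product integral are all justified for nonnegative integrands. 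You also correctly flag the only measure-theoretic subtlety: passing between $\Ga^{(j)}$ and $(\X)^j$ and treating $\{x_1,\dots,x_k\}\cup\{y_1,\dots,y_m\}$ as an element of $\Ga^{(k+m)}$ requires that the coincidence sets be null, which is precisely why $m^{(n)}$ lives on $\widetilde{(\X)^n}$. The only cosmetic point worth noting is that your final expression appears with $d\lambda(\eta)\,d\lambda(\xi)$ in the opposite iteration order from \eqref{minlosid}; for nonnegative $H$ these agree by Tonelli, so nothing is lost, but it could be mentioned for completeness.
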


\section{The microscopic description of the model}

In the present paper we consider an infinite system of undistinguishable particles in $\R^{d}$ evolving in time. The mechanism of the evolution includes birth with the constant rate and the density dependent death. It can be described by the following heuristic generator
\begin{multline}\label{gen}
\left( LF\right) \left( \gamma \right) :=m\sum_{x\in \gamma }e^{-E^{\phi
}\left( x,\gamma \setminus \{x\}\right) }\left[ F\left( \gamma \setminus
\{x\}\right) -F\left( \gamma \right) \right] \\+\uplambda \int_{\mathbb{R}^{d}}\left[
F\left( \gamma \cup \{x\}\right) -F\left( \gamma \right) \right] dx.
\end{multline}
%for any $F\in K\bigl(\Bbs\bigr)$ (we will subsequently use the notation $x$
%instead of the more precise~$\{x\}$).
Here $m>0$ is a mortality rate, $\uplambda >0$ is an immigration (birth) rate, and
\begin{equation}\label{energy}
E^\phi(x,\ga\setminus x):=\sum_{y\in\ga\setminus x}\phi(x-y)\in[0,+\infty],
\end{equation}
where $\phi:\X\to\R_+$ is such
that $\phi(-x)=\phi(x)$, $x\in\X$ and
\begin{equation}\label{exp-int}
C_\phi:=\int_\X \bigl(1-e^{-\phi(x)}\bigr)\,dx<\infty.
\end{equation}
By abuse of notation we continue to write $x$ for $\{x\}$ when no confusion can arise.
It is worth pointing out that the death rate is a decreasing function in $\ga$. Namely, $\ga^{\prime}\subset\ga$ yields
\begin{equation}\label{deathr}
d(x,\,\ga\setminus x):=me^{-E^{\phi}(x,\,\ga\setminus x)}\leq d(x,\,\ga^{\prime}\setminus x)
\end{equation}
Another way of stating \eqref{deathr} is to say that probability to die for an element of a configuration $\ga$ becomes lower if this element is densely surrounded by other elements of $\ga$.

In order to define all objects in the formula \eqref{gen} rigorously it is necessary to put some restrictions on the class of functions $F$. Let $F\in K\bigl(\Bbs\bigr)$. Since $K\bigl(\Bbs\bigr)\subset\Fcyl$ the integral in \eqref{gen}
is taking over a bounded domain
%, say, $\La\in\Bb$,
and, hence, it is finite. By the same
argument, the sum in \eqref{gen} is taking over a finite configuration.
%$\ga_\La$ and putting $e^{-\infty}:=0$ we obtain that \eqref{gen}
%is well-defined pointwise, for any $\ga\in\Ga$.
We adhere to the convention that $d(x,\ga)=0$ if $E^{\phi}=\infty$. As a result the expression \eqref{gen} is pointwise well-defined.
In particular, the right hand side of \eqref{gen} is
finite, for any $\ga\in\Ga_0$.

Our next goal is to derive the evolution equation for correlation functions determined by the mechanism  \eqref{gen}. The corresponding scheme was proposed, e.g., in \cite{KKM09,FKO2009,FKK2011a}. Below we introduce and study objects which are necessary for the construction of evolution of correlation function as well as for the mesoscopic description of the system under consideration.

First, we have to define the symbol operator
\begin{equation}\label{descdef}
(\widehat{L}G)(\eta):=(K^{-1}LKG)(\eta),\quad \eta\in\Ga_0,
\end{equation}
corresponding to the operator $L$. By Remark~\ref{rem:finonfin}, the expression at the right hand side of \eqref{descdef} will have sense if we define it
pointwise for any $G\in\Bbs$.
\begin{proposition}\label{symgen}
For any $G\in\Bbs$, $\eta\in\Ga_0$,
\begin{align}\notag
(\hat{L}G)\left( \eta \right) =&-m\sum_{\xi \subset \eta }G(\xi )\sum_{x\in
\xi }e^{-E^{\phi }\left( x,\xi \setminus x\right) }e_{\lambda }\bigl(
e^{-\phi \left( x-\cdot \right) }-1,\eta \setminus \xi \bigr)\\
&+\uplambda \int_{\mathbb{R}^{d}}G\left( \eta \cup x\right) dx.\label{descgen}
\end{align}
\end{proposition}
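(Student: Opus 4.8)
The plan is to compute $K^{-1}LKG$ directly by treating the birth and death parts separately, using the combinatorial formula \eqref{k-1trans} for $K^{-1}$ together with the explicit action of $K$ from \eqref{KT3.15} and \eqref{LPexp}. Since $G\in\Bbs$, everything is a finite sum and the manipulations are legitimate; by Remark~\ref{rem:finonfin} the right-hand side of \eqref{descdef} is pointwise well-defined on $\Ga_0$.

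First I would handle the birth part $L_\uplambda F(\ga)=\uplambda\int_{\X}[F(\ga\cup x)-F(\ga)]\,dx$. One checks on coherent states (or directly from \eqref{KT3.15}) that $K^{-1}L_\uplambda K$ acts by $G\mapsto \uplambda\int_{\X}G(\cdot\cup x)\,dx$; the constant term $-\uplambda\int_\X G(\eta)\,dx$ that would appear from the $-F(\ga)$ piece is cancelled by the contribution coming from adding a point $x\in\eta$ in the $F(\ga\cup x)$ piece, since $\ga\cup x=\ga$ when $x\in\ga$ and the "diagonal" is Lebesgue-null — this is the standard computation, and I would present it compactly rather than in full detail. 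This produces the second line of \eqref{descgen}.

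The main work is the death part $L_d F(\ga)=-m\sum_{x\in\ga}e^{-E^\phi(x,\ga\setminus x)}[F(\ga)-F(\ga\setminus x)]$. Writing $F=KG$ and using \eqref{KT3.15}, for a fixed configuration the sum over $x\in\ga$ and over subconfigurations splits; the key identity is that for $x\in\eta$ and $\xi\subset\eta$ one has $E^\phi(x,\xi\setminus x)$ depending only on $\xi$, and the "extra" points $\eta\setminus\xi$ contribute a product $\prod_{y\in\eta\setminus\xi}(e^{-\phi(x-y)}-1)=e_\lambda(e^{-\phi(x-\cdot)}-1,\eta\setminus\xi)$ after one applies $K^{-1}$ via \eqref{k-1trans} and regroups. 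Concretely I would use that $(K^{-1}(\mathbf{1}_{\{x\in\cdot\}}\,e^{-E^\phi(x,\cdot\setminus x)}\,(KG))(\cdot\setminus x))(\eta)$ can be evaluated by \eqref{k-1trans} and the identity $K^{-1}(f\cdot KG)=(K^{-1}f)\star G$-type convolution relations, or — more elementarily — by expanding both $K$ and $K^{-1}$ as alternating sums over subsets and performing the resulting double sum over subsets of $\eta$, in which the alternating signs collapse exactly to the factor $e_\lambda(e^{-\phi(x-\cdot)}-1,\eta\setminus\xi)$. This yields the first line of \eqref{descgen}, with the sum $\sum_{\xi\subset\eta}G(\xi)\sum_{x\in\xi}$.

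I expect the main obstacle to be the bookkeeping in the death term: one must carefully track which point $x$ is removed, over which subconfigurations the $K$ and $K^{-1}$ sums run, and verify that the alternating-sign cancellation produces precisely the coherent-state factor $e_\lambda(e^{-\phi(x-\cdot)}-1,\eta\setminus\xi)$ rather than $e_\lambda(e^{-\phi(x-\cdot)},\cdot)$ or some shifted variant. A clean way to organize this is to first establish the auxiliary formula $\bigl(K^{-1}\bigl[\ga\mapsto \textstyle\sum_{y\in\ga}H(y,\ga\setminus y)\bigr]\bigr)(\eta)=\sum_{x\in\eta}\bigl(K^{-1}H(x,\cdot)\bigr)(\eta\setminus x)$ for suitable $H$, apply it with $H(x,\ga)=e^{-E^\phi(x,\ga)}(KG)(\ga)$, and then use the multiplicativity $e^{-E^\phi(x,\xi\setminus x)}=e_\lambda(e^{-\phi(x-\cdot)},\xi\setminus x)$ together with the known rule for $K^{-1}$ of a product of a coherent state with $KG$. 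With that lemma in hand the rest is routine, and one assembles the two contributions to obtain \eqref{descgen}.
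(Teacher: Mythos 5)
Your route is genuinely different from the paper's. The paper invokes the pre-derived formula from \cite[Proposition~3.1]{FKK2011a} for the symbol of a general birth-and-death generator in terms of $K^{-1}$ of the rates, and then the present statement reduces to two one-line computations: $K^{-1}$ of $e^{-E^\phi(x,\cdot)}=e_\la(e^{-\phi(x-\cdot)},\cdot)$ via \eqref{Kexp}, and $K^{-1}$ of the constant birth rate. You instead propose to reconstruct the symbol from scratch, which is a legitimate (if longer) route: your auxiliary identity $\bigl(K^{-1}\bigl[\ga\mapsto\sum_{y\in\ga}H(y,\ga\setminus y)\bigr]\bigr)(\eta)=\sum_{x\in\eta}\bigl(K^{-1}H(x,\cdot)\bigr)(\eta\setminus x)$ is correct (a re-indexing of \eqref{k-1trans}) and, together with the $\star$-convolution rule $K^{-1}\bigl((KG_1)\cdot(KG_2)\bigr)=G_1\star G_2$ and \eqref{Kexp}, is enough machinery.

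There is, however, a concrete mis-step in your choice of $H$. With $H(x,\ga)=e^{-E^\phi(x,\ga)}(KG)(\ga)$, the auxiliary identity computes $K^{-1}$ of $\ga\mapsto\sum_{y\in\ga}e^{-E^\phi(y,\ga\setminus y)}(KG)(\ga\setminus y)$, i.e.\ only the $F(\ga\setminus x)$ half of $L_d F$; the $-F(\ga)$ half is not captured. Carrying your $H$ through, one in fact lands on $\sum_{x\in\eta}\sum_{\xi\subset\eta\setminus x}G(\xi)\,e^{-E^\phi(x,\xi)}\,e_\la(e^{-\phi(x-\cdot)}-1,\eta\setminus(\xi\cup x))$, a sum over $\xi$ \emph{not containing} $x$, whereas \eqref{descgen} sums over $\xi\ni x$. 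The clean fix is to note first, for $F=KG$, that
\[
F(\ga)-F(\ga\setminus x)=\sum_{\zeta\Subset\ga,\,x\in\zeta}G(\zeta)=\bigl(K[G(\cdot\cup x)]\bigr)(\ga\setminus x),
\]
so that $L_d(KG)(\ga)=-m\sum_{x\in\ga}\widetilde{H}(x,\ga\setminus x)$ with $\widetilde{H}(x,\ga):=e^{-E^\phi(x,\ga)}\bigl(K[G(\cdot\cup x)]\bigr)(\ga)$. Then $K^{-1}\widetilde{H}(x,\cdot)=e_\la(e^{-\phi(x-\cdot)}-1)\star G(\cdot\cup x)$, and evaluating at $\eta\setminus x$ and re-indexing over $\xi=\eta_2\cup\eta_3\cup\{x\}$ gives exactly the first line of \eqref{descgen}. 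Incidentally, the same identity $F(\ga\cup x)-F(\ga)=\bigl(K[G(\cdot\cup x)]\bigr)(\ga)$ (valid for a.e.\ $x$, since $x\notin\ga$ off a Lebesgue-null set) yields the birth line immediately; the cancellation you invoke there is not quite what happens --- the set $\{x\in\ga\}$ is Lebesgue-null, so the "diagonal'' contributes nothing rather than cancelling a (formally infinite) constant --- although your final formula for the birth part is correct.
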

\begin{proof}
As was shown in \cite[Proposition~3.1]{FKK2011a}, for any $G\in\Bbs$,
$\eta\in\Ga_0$,
\begin{align} (\hat{L}G)(\eta ) =&-\sum_{\xi \subset \eta
}G(\xi )\sum_{x\in \xi }\bigl(K^{-1}d(x,\cdot\cup\xi\setminus
x)\bigr)(\eta\setminus
\xi)\notag\\
&+\sum_{\xi \subset \eta }\int_{\R^{d}}\,G(\xi \cup
x)\bigl(K^{-1}b(x,\cdot\cup\xi)\bigr)(\eta\setminus \xi) dx, \label{newexpr}
\end{align}
where $d(x,\ga)=me^{-E^\phi(x,\ga)}$, $b(x,\ga)=\uplambda $, $x\notin\ga$.
By \eqref{LPexp} and \eqref{Kexp}, one has
\begin{align*}
\bigl(K^{-1}d(x,\cdot\cup\xi\setminus
x)\bigr)(\eta\setminus
\xi)&=me^{-E^\phi(x,\xi\setminus x)}\bigl(K^{-1}e^{-E^\phi(x,\cdot)}\bigr)
(\eta\setminus\xi)\\&=me^{-E^\phi(x,\xi\setminus x)}e_\la\bigl(e^{-\phi(x-\cdot)}-1,\eta\setminus
\xi\bigr),
\end{align*}
and, clearly,
\[
\bigl(K^{-1}b(x,\cdot\cup\xi)\bigr)(\eta\setminus \xi)=\uplambda 0^{|\eta\setminus\xi|}=\begin{cases}\uplambda ,&\xi=\eta,\\0,&\text{otherwise}.\end{cases}
\]
The statement is proved.
\end{proof}

Let us mention one important consequence of the previous proposition. It is easy to check that both sides of the equality  \eqref{minlosid} are finite for the functions
$$
H_{1}(\xi,\eta,\eta\cup\xi)=-mG(\xi)\sum_{x\in\xi}e^{E^{\phi}(x,\xi\setminus x)}e_{\la}(e^{-\phi(x-\cdot)}-1, \eta )
k(\eta\cup x)
$$
and
$$
H_{2}(\xi,\eta,\eta\cup\xi)=\uplambda1\!\!1_{\{x\}}(\xi)k(\eta)G(\eta\cup \xi),
$$
where $k\in B_{bs}(\Ga_{0})$ is an arbitrary function. By Lemma \ref{Minlos}, Propositon \ref{symgen}, and the assumption \eqref{exp-int} it may be concluded that
\begin{equation}\label{operdual}
\int_{\Ga_0}(\widehat{L}G)(\eta) k(\eta)\,d\la(\eta)=\int_{\Ga_0}G(\eta)(L^\triangle
k)(\eta) \,d\la(\eta),
\end{equation}
where the mapping $L^\triangle k$ is given by
%By Lemma~\ref{Minlos}, we obtain from \eqref{descgen} and \eqref{operdual} that, say, for any $k\in\Bbs$, $\eta \in\Ga_0$,
\begin{align}\notag
\left( L^{\Delta }k\right) \left( \eta \right) =&-m\sum_{x\in \eta
}e^{-E^{\phi }\left( x,\eta \setminus x\right) }\int_{\Gamma _{0}}k\left(
\eta \cup \xi \right) e_{\lambda }\bigl( e^{-\phi \left( x-\cdot \right)
}-1,\xi \bigr) d\lambda \left( \xi \right) \\&+\uplambda \sum_{x\in \eta }k\left( \eta
\setminus x\right).\label{Ltriangle}
\end{align}

For any $C>0$, we consider the following Banach space of functions

\begin{equation*}
{\K}_{C}:=\left\{ k:\Ga _{0}\rightarrow {\R}\,\Bigm| k\cdot
C^{-|\cdot |}\in L^{\infty }(\Ga _{0},\la )\right\}
\end{equation*} with the norm
\[
\Vert k\Vert _{{\K}_{C}}:=\Vert C^{-|\cdot |}k(\cdot )\Vert
_{L^{\infty }(\Ga _{0},\la )}.
\]
Clearly,
\begin{equation}\label{RB}
|k(\eta)|\leq \Vert k\Vert_{\K_C}C^{|\eta|}, \quad \text{for $\la$-a.a.}\
 \eta\in\Ga_0.
\end{equation}
It is easily seen, that
\begin{equation}\label{scaleofspaces}
\K_C\subset\K_{C'}, \qquad \|\cdot\|_{\K_C}\geq \|\cdot\|_{\K_{C'}},\qquad C\leq C',
\end{equation}
and hence, one has a continuum family of embedded Banach spaces $\K_C$. We will deal with the
operator \eqref{Ltriangle} acting in this family from any $\K_C$ to any larger space $\K_{C'}$, with $C'>C$.

\begin{proposition}\label{normest}
Suppose that \eqref{exp-int} holds.
Let $C>C_{0}>0$ be arbitrary. Then, for any $C^{\prime },C^{\prime \prime }$
such that $C_{0}\leq C^{\prime }<C^{\prime \prime }\leq C$, the mapping \eqref{Ltriangle} defines a linear operator which acts from $\K_{C'}$ to $\K_{C''}$. Moreover, for any $%
k\in \mathcal{K}_{C^{\prime }}$,
\begin{equation}\label{normoper}
\bigl\Vert L^\triangle k\bigr\Vert_{\K_{C''}}\leq \frac{1}{C^{\prime \prime
}-C^{\prime }} \frac{C}{e}%
\left( me^{CC_{\phi }}+\frac{\uplambda }{C_{0}}\right) \left\Vert k\right\Vert _{\mathcal{K}_{C^{\prime }}}.
\end{equation}
\end{proposition}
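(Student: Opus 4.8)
The plan is to estimate the two sums in \eqref{Ltriangle} separately, bounding each term pointwise in $\eta$ against $C^{|\eta|}$ times a constant, and then to divide by $C^{\prime\prime\,|\eta|}$ and take the supremum over $\eta\in\Ga_0$. The elementary inequality $t\,e^{-at}\le \tfrac{1}{ae}$ for $t\ge 0$ (applied with $t=|\eta|$ and $a=\ln(C^{\prime\prime}/C^{\prime})$, or, in a slightly cruder but sufficient form, the inequality $|\eta|\,q^{|\eta|}\le \tfrac{1}{e(1-q)}$ valid for $0\le q<1$ with $q=C^{\prime}/C^{\prime\prime}$) is what produces the factor $\tfrac{C}{e(C^{\prime\prime}-C^{\prime})}$; I would isolate this as the first step, since it is the only genuinely quantitative trick and it is where the $\tfrac1e$ and the difference $C^{\prime\prime}-C^{\prime}$ in the denominator come from.

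For the death part, the summand for a fixed $x\in\eta$ contains $e^{-E^{\phi}(x,\eta\setminus x)}\le 1$ and the integral $\int_{\Ga_0}k(\eta\cup\xi)\,e_\la\bigl(e^{-\phi(x-\cdot)}-1,\xi\bigr)\,d\la(\xi)$. Using the Ruelle-type bound \eqref{RB}, $|k(\eta\cup\xi)|\le \|k\|_{\K_{C^{\prime}}}\,C^{\prime\,|\eta|}C^{\prime\,|\xi|}$, the integral is dominated by $\|k\|_{\K_{C^{\prime}}}C^{\prime\,|\eta|}\int_{\Ga_0}C^{\prime\,|\xi|}\bigl|e_\la(e^{-\phi(x-\cdot)}-1,\xi)\bigr|\,d\la(\xi)$; since $e_\la(C^{\prime}|e^{-\phi(x-\cdot)}-1|,\xi)=\prod_{y\in\xi}C^{\prime}(1-e^{-\phi(x-y)})$ (note $0\le e^{-\phi}\le 1$ so the absolute value is harmless), formula \eqref{intexp} gives the value $\exp\bigl\{C^{\prime}\int_{\X}(1-e^{-\phi(x-z)})\,dz\bigr\}=e^{C^{\prime}C_\phi}\le e^{CC_\phi}$, uniformly in $x$. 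Summing over the $|\eta|$ choices of $x$ yields a bound $m\,|\eta|\,e^{CC_\phi}\|k\|_{\K_{C^{\prime}}}C^{\prime\,|\eta|}$ for the whole death term; after dividing by $C^{\prime\prime\,|\eta|}$ and applying the Step-1 inequality one gets the $me^{CC_\phi}$ contribution with the correct prefactor $\tfrac{C}{e(C^{\prime\prime}-C^{\prime})}$.

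For the birth part, $\bigl|\uplambda\sum_{x\in\eta}k(\eta\setminus x)\bigr|\le \uplambda\,|\eta|\,\|k\|_{\K_{C^{\prime}}}\,C^{\prime\,|\eta|-1}$ by \eqref{RB}; here one must use $C^{\prime}\ge C_0$, so $C^{\prime\,|\eta|-1}\le C_0^{-1}C^{\prime\,|\eta|}$, which accounts for the $\tfrac{\uplambda}{C_0}$ in \eqref{normoper}. Dividing by $C^{\prime\prime\,|\eta|}$ and again invoking Step 1 produces $\tfrac{C}{e(C^{\prime\prime}-C^{\prime})}\cdot\tfrac{\uplambda}{C_0}\|k\|_{\K_{C^{\prime}}}$. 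Adding the two contributions gives \eqref{normoper}, and linearity of $L^{\triangle}$ is immediate from the formula, so $L^{\triangle}\colon\K_{C^{\prime}}\to\K_{C^{\prime\prime}}$ is a bounded operator.

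I do not expect any deep obstacle; the only points requiring a little care are (a) getting the constant in Step 1 to match exactly the claimed $\tfrac{C}{e(C^{\prime\prime}-C^{\prime})}$ — one wants the cleanest form of the inequality $x q^{x}\le 1/(e\ln(1/q))$ and the observation that $\ln(C^{\prime\prime}/C^{\prime})\ge (C^{\prime\prime}-C^{\prime})/C^{\prime\prime}\ge (C^{\prime\prime}-C^{\prime})/C$, so that $1/\ln(C^{\prime\prime}/C^{\prime})\le C/(C^{\prime\prime}-C^{\prime})$ — and (b) making sure the interchange of $\sum_{x\in\eta}$, the $\Ga_0$-integral, and the pointwise estimate is justified, which is guaranteed because everything is nonnegative after taking absolute values and $C_\phi<\infty$ by \eqref{exp-int}. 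Everything else is the routine Lebesgue–Poisson bookkeeping already used in Proposition~\ref{symgen}.
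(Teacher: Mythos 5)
Your proposal is correct and follows essentially the same route as the paper's proof: bound $e^{-E^\phi}\le 1$, apply the Ruelle bound \eqref{RB}, evaluate the $\Ga_0$-integral via \eqref{intexp} to get $e^{C'C_\phi}$, factor out $|\eta|(C'/C'')^{|\eta|}$, and then use $\max_{t\ge 0}tq^t=\tfrac{1}{e(-\ln q)}$ together with $\ln(C''/C')\ge (C''-C')/C$ to produce the constant $\tfrac{C}{e(C''-C')}$. The paper derives the last inequality from the mean value theorem while you quote $\ln(C''/C')\ge(C''-C')/C''$, but these are the same observation; everything else matches step for step.
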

\begin{proof} Using \eqref{Ltriangle}, one has, for any $k\in\K_{C'}$,
$\eta\in\Ga_0$,
\begin{align*}
&\quad\left( C^{\prime \prime }\right) ^{-\left\vert \eta \right\vert
}\left\vert \bigl( L^{\Delta }k\bigr) \left( \eta \right) \right\vert \\
&\leq \left( C^{\prime \prime }\right) ^{-\left\vert \eta \right\vert
}m\sum_{x\in \eta }e^{-E^{\phi }\left( x,\eta \setminus x\right)
}\int_{\Gamma _{0}}\left\vert k\left( \eta \cup \xi \right) \right\vert
e_{\lambda }\left( \left\vert e^{-\phi \left( x-\cdot \right) }-1\right\vert
,\xi \right) d\lambda \left( \xi \right) \\&\quad+\uplambda \left( C^{\prime \prime }\right)
^{-\left\vert \eta \right\vert }\sum_{x\in \eta }\left\vert k\left( \eta
\setminus x\right) \right\vert \\\intertext{and, since $\phi\geq0$
and \eqref{RB} holds, one can continue}
&\leq \left( C^{\prime \prime }\right) ^{-\left\vert \eta \right\vert
}m\sum_{x\in \eta }\int_{\Gamma _{0}}\left( C^{\prime }\right) ^{\left\vert
\eta \right\vert +\left\vert \xi \right\vert }\left\Vert k\right\Vert _{\mathcal{K}_{C^{\prime }}}e_{\lambda }\left( 1- e^{-\phi \left(
x-\cdot \right) },\xi \right) d\lambda \left( \xi \right)
\\&\quad+\uplambda \left( C^{\prime \prime }\right) ^{-\left\vert \eta \right\vert
}\sum_{x\in \eta }\left( C^{\prime }\right) ^{\left\vert \eta \right\vert
-1}\left\Vert k\right\Vert _{\mathcal{K}_{C^{\prime }}} \\\intertext{and,
finally, using \eqref{intexp} and \eqref{exp-int}, the latter expression is equal to}
&\left\Vert k\right\Vert _{\mathcal{K}_{C^{\prime }}}\left( \frac{%
C^{\prime }}{C^{\prime \prime }}\right) ^{\left\vert \eta \right\vert
}\left\vert \eta \right\vert \left( me^{C^{\prime }C_{\phi }}+\frac{\uplambda }{%
C^{\prime }}\right) \\&\leq\left\Vert k\right\Vert _{\mathcal{K}_{C^{\prime }}}\left( \frac{%
C^{\prime }}{C^{\prime \prime }}\right) ^{\left\vert \eta \right\vert
}\left\vert \eta \right\vert \left( me^{CC_{\phi }}+\frac{\uplambda }{%
C_0}\right).
\end{align*}%
Since, for any $a\in \left( 0,1\right)$,
\begin{equation}\label{max1}
\max_{t\geq 0}\left( ta^{t}\right) =\frac{1}{e\left( -\ln a\right) },
\end{equation}
it follows that%
\[
\left( C^{\prime \prime }\right) ^{-\left\vert \eta \right\vert }\left\vert
\left( L^{\Delta }k\right) \left( \eta \right) \right\vert \leq \left\Vert
k\right\Vert _{\mathcal{K}_{C^{\prime }}}\frac{1}{e\left( \ln C^{\prime
\prime }-\ln C^{\prime }\right) }\left( e^{CC_{\phi }}+\frac{\uplambda }{C_{0}}%
\right).
\]%
But there exists $c\in \left[ C^{\prime },C^{\prime \prime }\right] \subset %
\left[ C_{0},C\right] $ such that%
\[
\ln C^{\prime \prime }-\ln C^{\prime }=\frac{1}{c}\left( C^{\prime \prime
}-C^{\prime }\right) \geq \frac{1}{C}\left( C^{\prime \prime }-C^{\prime
}\right) >0,
\]
which proves the proposition.
\end{proof}

Our next concern will be to study the evolution equation
\begin{equation}\label{coreq}
\frac{\partial}{\partial t}k_t(\eta)=(L^\triangle k_{t})(\eta),
\end{equation}
$$
k_{t}|_{t=0}=k_{0}
$$
in the scale of Banach spaces \eqref{scaleofspaces}. The important point to note here is that a solution to this equation describes the evolution of correlation functions of the system. Roughly, speaking by solving this equation we construct statistical dynamics of states corresponding to the mechanism of evolution described by the operator $L$ and a given initial distribution with correlation function $k_{0}$. To study equation \eqref{coreq}  we use the following approach which goes back to the
the Picard-type approximations and a method by A.~G.~Kostyuchenko and G.~E.~Shilov presented in \cite[Appendix 2, A2.1]{GS1967}. This method, originally considered for
equations with time independent coefficients, has been extended to an abstract
and general framework by T.~Yamanaka in \cite{Yam1960} and L.~V.~Ovsjannikov in
\cite{Ovs1965} in the linear case, and many applications were exposed by F.~Treves in \cite{Tre1968} who probably initiated the terminology
{\em Ovsjannikov theorem}. For the convenience of the reader we give below one of the versions of this theorem which may be found, e.g., in~\cite[Theorem 2.5]{FKO2011a}).

\begin{theorem}\label{propOvs}
Consider one-parameter decreasing family of Banach spaces $\{\mathbb{B}_s \mid 0 < s \leq s_0\}$ such that
$B_{s''}\subset B_{s'}$, $\Vert\cdot\Vert_{s'}\leq\Vert\cdot\Vert_{s''}$,
for any pair $s',s''$ such that $0 < s'< s''\leq s_0$, where $\Vert\cdot\Vert_s$ denotes the norm in $\mathbb{B}_s$. Consider also the initial
value problem
\begin{equation}\label{CPa}
\frac{du(t)}{dt}= Au(t), \quad u(0) = u_0\in\mathbb{B}_{s_0},
\end{equation}
where, for each fixed $s\in(0,s_0)$ and for each pair $s',s''$ such that $s\leq s'<s''\leq s_0$,
the mapping $A : \mathbb{B}_{s''} \to \mathbb{B}_{s'}$ is linear satisfying
\[
\Vert Au\Vert_{s'} \leq
\frac{M}{s''- s'}\Vert u\Vert_{s''},
\]
for some $M > 0$ and all $u\in\mathbb{B}_{s''}$. The constant $M$ is independent of $s',s''$ and $u$, however it might depend continuously on $s,s_0$.

Then, for each $s \in (0,s_0)$, there is a constant $\delta=(eM)^{-1} > 0$ and a unique function $u : \bigl[0,\delta(s_0- s)\bigr)\to
\mathbb{B}_s$ which is continuously differentiable on $\bigl(0,\delta (s_0- s)\bigr)$ in $\mathbb{B}_s$, $Au \in \mathbb{B}_s$, and $u$
solves \eqref{CPa} on the time-interval $0 \leq t < \delta(s_0- s)$.
\end{theorem}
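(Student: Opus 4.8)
\emph{Proof proposal.} The plan is to solve \eqref{CPa} by Picard iteration in the scale $\{\mathbb{B}_s\}$, turning the ``loss of regularity'' incurred at each step (the singular factor $(s''-s')^{-1}$) into convergence by spending only a controlled fraction of the available gap at the $n$-th step. Set $u_0(t)\equiv u_0$ and, recursively, $u_{n+1}(t):=u_0+\int_0^t Au_n(\tau)\,d\tau$. These are well defined: $u_0$ is constant, and if $u_n$ is continuous into $\mathbb{B}_{s''}$ then $\tau\mapsto Au_n(\tau)$ is continuous into $\mathbb{B}_{s'}$ for every $s'<s''$, so the integral makes sense in $\mathbb{B}_{s'}$; by induction each $u_n$ is defined for all $t\ge0$ and is continuous (indeed $C^1$) into $\mathbb{B}_{s'}$ for every $s'<s_0$.

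The core of the argument is the bound, for all $n\ge0$, all $\sigma$ with $s\le\sigma<s_0$, and all $t\ge0$,
\[
\|u_{n+1}(t)-u_n(t)\|_\sigma\le\|u_0\|_{s_0}\,M^{\,n+1}\,\frac{t^{\,n+1}}{(n+1)!}\left(\frac{n+1}{s_0-\sigma}\right)^{\!n+1},
\]
proved by induction on $n$ (the base case being $u_1(t)-u_0(t)=t\,Au_0$ together with $\|Au_0\|_\sigma\le\frac{M}{s_0-\sigma}\|u_0\|_{s_0}$). For the step, write $u_{n+1}(t)-u_n(t)=\int_0^t A(u_n(\tau)-u_{n-1}(\tau))\,d\tau$, insert an intermediate level $\sigma'\in(\sigma,s_0)$, apply $\|A\,\cdot\,\|_\sigma\le\frac{M}{\sigma'-\sigma}\|\cdot\|_{\sigma'}$ and the induction hypothesis at level $\sigma'$, and then take $\sigma'-\sigma=\frac1{n+1}(s_0-\sigma)$; with this choice $\frac1{\sigma'-\sigma}=\frac{n}{s_0-\sigma'}=\frac{n+1}{s_0-\sigma}$ and the induction closes. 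Putting $\sigma=s$ and using $(n+1)^{\,n+1}\le e^{\,n+1}(n+1)!$ gives $\|u_{n+1}(t)-u_n(t)\|_s\le\|u_0\|_{s_0}\bigl(eMt/(s_0-s)\bigr)^{n+1}$, so with $\delta=(eM)^{-1}$ the series $\sum_n(u_{n+1}-u_n)$ converges absolutely and uniformly on compact subsets of $[0,\delta(s_0-s))$ in $\mathbb{B}_s$. Its sum (plus $u_0$) defines $u(t)$, continuous into $\mathbb{B}_s$ as a uniform-on-compacts limit of continuous maps; the same estimate with $s$ replaced by any $s'<s_0$ shows $u_n\to u$ and that $u$ is continuous into $\mathbb{B}_{s'}$ on the shorter interval $[0,\delta(s_0-s'))$.

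It remains to verify the equation and uniqueness. Fix $t_*<\delta(s_0-s)$ and, by continuity of $s'\mapsto\delta(s_0-s')$, choose $s_1\in(s,s_0)$ with $t_*<\delta(s_0-s_1)$; on $[0,\delta(s_0-s_1))$ the iterates converge to $u$ uniformly in $\mathbb{B}_{s_1}$, and since $A\colon\mathbb{B}_{s_1}\to\mathbb{B}_s$ is bounded one may pass to the limit in $u_{n+1}(t)=u_0+\int_0^tAu_n(\tau)\,d\tau$ to obtain $u(t)=u_0+\int_0^tAu(\tau)\,d\tau$ in $\mathbb{B}_s$ with $\tau\mapsto Au(\tau)$ continuous into $\mathbb{B}_s$; hence $u$ is $C^1$ into $\mathbb{B}_s$ near $t_*$ with $u'=Au$, $u(0)=u_0$, and since $t_*$ is arbitrary this holds throughout $[0,\delta(s_0-s))$. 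For uniqueness, if $u,v$ are two solutions then $w=u-v$ satisfies $w(t)=\int_0^tAw(\tau)\,d\tau$; iterating this identity $n$ times and estimating exactly as above --- passing through $n$ equally spaced intermediate levels between $s$ and a suitable $s_1\in(s,s_0)$ on which $w$ is bounded --- gives $\|w(t)\|_s\le\bigl(eMt/(s_0-s_1)\bigr)^n\sup_{[0,t]}\|w\|_{s_1}\to0$ on the associated interval, so $w\equiv0$ there, and a connectedness argument propagates this to the full interval.

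The main obstacle is precisely the convergence estimate: one must not try to bound $u_{n+1}-u_n$ directly in $\mathbb{B}_s$ (which forfeits a whole factor $(s_0-s)^{-1}$ per step and diverges), but instead propagate the entire family of norms $\|\cdot\|_\sigma$, $s\le\sigma<s_0$, consuming only the fraction $\tfrac1{n+1}$ of the gap $s_0-\sigma$ at step $n$; the bookkeeping that makes the two competing factors $\frac1{\sigma'-\sigma}$ and $\bigl(\frac{n}{s_0-\sigma'}\bigr)^n$ collapse under the choice $\sigma'-\sigma=\frac1{n+1}(s_0-\sigma)$, and the Stirling-type inequality $(n+1)^{n+1}\le e^{n+1}(n+1)!$ which produces exactly the radius $\delta=(eM)^{-1}$, are where the care is needed. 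A more technical, less conceptual, nuisance is reconciling the time-intervals on which $u$ is known to be valued in $\mathbb{B}_{s'}$ versus in $\mathbb{B}_s$; this has to be threaded carefully both when passing to the limit in the integral equation and in the uniqueness argument.
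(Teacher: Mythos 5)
The paper does not prove Theorem~\ref{propOvs}; it is quoted from \cite[Theorem~2.5]{FKO2011a}, with the surrounding text explicitly attributing the method to Kostyuchenko--Shilov, Yamanaka, Ovsjannikov and Treves, so there is no internal proof to compare against. Your Picard-iteration argument in the scale $\{\mathbb{B}_s\}$ is precisely that classical route, and its core --- the induction on $\|u_{n+1}(t)-u_n(t)\|_\sigma$ in which at stage $n$ one spends exactly the fraction $\tfrac1{n+1}$ of the remaining gap $s_0-\sigma$, so that both $\tfrac1{\sigma'-\sigma}$ and $\bigl(\tfrac{n}{s_0-\sigma'}\bigr)^{n}$ collapse to powers of $\tfrac{n+1}{s_0-\sigma}$, followed by $(n+1)^{n+1}\le e^{n+1}(n+1)!$ --- is carried out correctly and does produce exactly the radius $\delta=(eM)^{-1}$. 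The passage to the limit in the integral equation through an intermediate level $s_1$ is also handled correctly.

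Two points in the uniqueness paragraph should be tightened. First, the displayed bound has the wrong gap in the denominator: iterating $n$ times through equally spaced levels between $s$ and $s_1$ gives $\|w(t)\|_s\le\bigl(eMt/(s_1-s)\bigr)^n\sup_{[0,t]}\|w\|_{s_1}$, with $s_1-s$, not $s_0-s_1$. Second, that estimate forces $t<\delta(s_1-s)$, while having $\sup_{[0,t]}\|w\|_{s_1}<\infty$ requires $t<\delta(s_0-s_1)$; both hold simultaneously only for $t<\tfrac12\delta(s_0-s)$. So the direct iteration gives uniqueness only on a half-length interval, and the ``connectedness argument'' you invoke is doing real work: one should let $T^\ast$ be the supremum of times up to which $u\equiv v$, pick $s_1$ with $T^\ast<\delta(s_0-s_1)$, and rerun the iteration started at $T^\ast$ (where $w(T^\ast)=0$) to contradict maximality. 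Since the whole point of the careful bookkeeping is the sharp radius, this step deserves to be spelled out. Finally, it is worth making explicit that uniqueness is asserted within the class of solutions taking values in the scale strictly above $s$ (which is what the hypothesis $Au\in\mathbb{B}_s$ tacitly encodes); the iteration cannot even be set up for a competing solution known only to take values in $\mathbb{B}_s$ itself.
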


The application of this theorem to our situation is stated below. It says that evolution of correlation functions corresponding to the system under consideration exists on a finite time interval.
\begin{proposition} \label{100502}
Suppose that \eqref{exp-int} holds. Let $C_{0}>0$ be arbitrary and fixed. Consider the initial value problem \eqref{coreq} with  $k_{0}\in\K_{C_0}$. Then, for each $C>C_{0}$, there is a time $0<T(C_{0},C)\leq(1+2\sqrt{\uplambda C_\phi})^{-1}$ and a unique function $k : \bigl[0,T(C_{0},C)\bigr)\to
\K_{C}$ which is continuously differentiable on $\bigl(0,T(C_{0},C)\bigr)$ in $\K_{C}$, and $k$
solves \eqref{coreq} on the time-interval $0 \leq t < T(C_{0},C)$.
\end{proposition}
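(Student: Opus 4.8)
The plan is to recognise this as a direct application of the Ovsjannikov-type Theorem~\ref{propOvs} to the evolution equation \eqref{coreq}, with the scale \eqref{scaleofspaces} and the operator bound \eqref{normoper} from Proposition~\ref{normest} supplying exactly the hypotheses. Fix $C>C_0$. The first step is to orient the Banach scale correctly: set $s_0:=C-C_0$ and, for $s\in(0,s_0]$, put $\mathbb{B}_s:=\K_{C-s}$. By \eqref{scaleofspaces} this is a decreasing family ($\mathbb{B}_{s''}\subset\mathbb{B}_{s'}$ and $\|\cdot\|_{s'}\le\|\cdot\|_{s''}$ whenever $s'<s''$), with $\mathbb{B}_{s_0}=\K_{C_0}$, so the initial datum $k_0$ indeed lies in $\mathbb{B}_{s_0}$ as required.

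The second step is to verify the contraction-type bound for $A:=L^\triangle$. Given $s\le s'<s''\le s_0$, apply Proposition~\ref{normest} with its outer parameter equal to $C$ and with $C':=C-s''$, $C'':=C-s'$ (the inequalities $C_0\le C'<C''\le C$ hold precisely because $s',s''\in(0,s_0]$). Then $L^\triangle$ maps $\K_{C'}=\mathbb{B}_{s''}$ into $\K_{C''}=\mathbb{B}_{s'}$, and \eqref{normoper} becomes
\[
\bigl\|L^\triangle k\bigr\|_{\mathbb{B}_{s'}}\le\frac{1}{s''-s'}\,M\,\|k\|_{\mathbb{B}_{s''}},\qquad M:=\frac{C}{e}\Bigl(me^{CC_\phi}+\frac{\uplambda}{C_0}\Bigr),
\]
and the key point is that $M$ depends only on the fixed data $C,C_0,m,\uplambda,C_\phi$, not on $s',s''$. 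Hence all hypotheses of Theorem~\ref{propOvs} are satisfied with this $M$.

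The third step is to assemble the conclusion. Theorem~\ref{propOvs} yields, for each $s\in(0,s_0)$, the constant $\delta=(eM)^{-1}=\bigl(C(me^{CC_\phi}+\uplambda/C_0)\bigr)^{-1}$ and a unique $u^{(s)}:[0,\delta(s_0-s))\to\K_{C-s}$, continuously differentiable on the open interval in $\K_{C-s}$, solving \eqref{coreq} with $u^{(s)}|_{t=0}=k_0$. By the uniqueness assertion the family $\{u^{(s)}\}_{s\in(0,s_0)}$ is consistent (each is the restriction of the others to the smaller interval), so gluing them produces a single function $k$ defined on $[0,\delta s_0)=[0,\delta(C-C_0))$; since $\K_{C-s}\hookrightarrow\K_C$ continuously, $k$ takes values in $\K_C$, is continuously differentiable in $\K_C$ on the open interval, and solves \eqref{coreq}. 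Finally one sets $T(C_0,C):=\min\bigl\{\delta(C-C_0),\,(1+2\sqrt{\uplambda C_\phi})^{-1}\bigr\}$, which is positive and obeys the stated bound, and restricts $k$ to $[0,T(C_0,C))$; uniqueness of $k$ follows from the uniqueness in each $\K_{C-s}$ (or, alternatively, from a direct Gronwall estimate on \eqref{coreq} in the scale).

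There is no deep obstacle here: the statement is essentially a corollary of Theorem~\ref{propOvs} and Proposition~\ref{normest}. The only points requiring care are bookkeeping ones: (a) choosing the parametrisation $\mathbb{B}_s=\K_{C-s}$ so that the scale-loss of $L^\triangle$ — it sends the \emph{smaller} space $\K_{C-s''}$ into the \emph{larger} space $\K_{C-s'}$ — matches the direction demanded in Theorem~\ref{propOvs}; (b) noting that the target space $\K_C$ is the limiting member ($s\to0^+$) of the scale rather than an interior one, so the $\K_C$-valued solution is obtained by gluing the $\K_{C-s}$-solutions and invoking the continuous embeddings; and (c) the explicit bound $(1+2\sqrt{\uplambda C_\phi})^{-1}$ on the lifespan, which is obtained simply by shrinking the interval, the natural Ovsjannikov lifespan $\delta(C-C_0)$ being possibly larger.
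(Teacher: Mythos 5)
Your strategy coincides with the paper's — both reduce the proposition to Theorem~\ref{propOvs} combined with Proposition~\ref{normest} — but your choice of parametrisation leaves a genuine gap in the uniqueness claim. You set $\mathbb{B}_s=\K_{C-s}$ for $s\in(0,s_0]$ with $s_0=C-C_0$, which places the target space $\K_C$ at the boundary index $s=0$ of the scale; Theorem~\ref{propOvs} then produces a unique $\K_{C-s}$-valued solution only for each $s\in(0,s_0)$. Your gluing does yield a well-defined $\K_C$-valued function on $[0,\delta(C-C_0))$ that solves \eqref{coreq}, but the asserted uniqueness in $\K_C$ does \emph{not} follow from ``the uniqueness in each $\K_{C-s}$'': a competing solution $\tilde k\colon[0,T)\to\K_C$ need not take values in any $\K_{C-s}$ for $s>0$ (the union $\bigcup_{s>0}\K_{C-s}$ is a proper subset of $\K_C$), so the theorem's uniqueness assertion never reaches it, and the ``direct Gronwall estimate in the scale'' you offer as a fallback is not an independent alternative but essentially the content of Theorem~\ref{propOvs} itself.

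The paper avoids this by taking $\mathbb{B}_s:=\K_{1/s}$ with $s_0=1/C_0$, which makes $\K_C=\mathbb{B}_{1/C}$ an \emph{interior} member of the scale ($1/C\in(0,s_0)$), so that Theorem~\ref{propOvs} directly delivers existence \emph{and} uniqueness of the $\K_C$-valued solution, with lifespan $\delta(s_0-s)$ computed to be the explicit formula $T(C_0,C)$ in \eqref{T}. A secondary difference: the paper then proves the bound $T(C_0,C)\le(1+2\sqrt{\uplambda C_\phi})^{-1}$ from this formula by a short chain of estimates ($C-C_0\le C$, then $e^x>1+x$, then the AM--GM inequality $CC_\phi+\uplambda/C\ge 2\sqrt{\uplambda C_\phi}$), whereas you impose the bound by truncating with a minimum; the explicit $T(C_0,C)$ is actually used later when it is compared with $T_1(C_0,C)$. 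To repair your argument with minimal changes, adopt the paper's reparametrisation, or embed $\K_C$ as an interior member of a slightly enlarged scale $\{\K_{C'-s}\}$ for some $C'>C$ after restating Proposition~\ref{normest} with outer parameter $C'$.
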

\begin{proof} We shall have established the proposition if we apply Theorem \ref{propOvs} to the following
family
\begin{equation}\label{newfamily}
\mathbb{B}_s:=\K_{\frac{1}{s}}, \quad s>0,
\end{equation}
with initial index $s_0:=\dfrac{1}{C_0}$. In this case, one can rewrite \eqref{normoper} for
\[
s'':=\dfrac{1}{C''}<\dfrac{1}{C'}=:s', \quad s=\dfrac{1}{C},
\]
in the following form
\begin{align}\notag
\bigl\Vert L^\triangle k\bigr\Vert_{\mathbb{B}_{s''}}&\leq \frac{s's''}{s'-s''} \frac{1}{se}%
\left( me^{\frac{C_{\phi }}{s}}+\uplambda s_0 \right) \left\Vert k\right\Vert _{\mathbb{B}_{s'}}\\
&\leq \frac{1}{s'-s''} \frac{s_0^2}{se}%
\left( me^{\frac{C_{\phi }}{s}}+\uplambda s_0 \right) \left\Vert k\right\Vert _{\mathbb{B}_{s'}}\label{normoper-rewr}\\
&=:\frac{M(s,s_{0})}{s'-s''}\left\Vert k\right\Vert _{\mathbb{B}_{s'}}.\notag
\end{align}
Then, by Theorem~\ref{propOvs}, we obtain an evolution $\mathbb{B}_{s_0}\ni
k_0\mapsto k_t\in\mathbb{B}_s$, for all times which are less then
\begin{align}\notag
\bigl(eM(s,s_{0})\bigr)^{-1}(s_0-s)&=\biggl(\frac{s_0^2}{s}%
\Bigl( me^{\frac{C_{\phi }}{s}}+\uplambda s_0 \Bigr)\biggr)^{-1}(s_0-s)\\ &=\frac{C_0^2}{C\bigl(e^{CC_\phi}+\frac{\uplambda }{C_0}\bigr)}\Bigl(\frac{1}{C_0}-\frac{1}{C}\Bigr)\notag\\&=\frac{C_{0}(C-C_{0})}{C
^{2}\bigl( e^{CC_{\phi }}+\frac{\uplambda }{C_{0}}\bigr) }=:T(C_{0},C).\label{T}
\end{align}
It should be noted that
\[
T(C_{0},C)\leq \frac{C_{0}C}{C
^{2}\bigl( e^{CC_{\phi }}+\frac{\uplambda }{C_{0}}\bigr) }\leq \frac{1}{ e^{CC_{\phi }}+\frac{\uplambda }{C} }< \frac{1}{ 1+CC_{\phi }+\frac{\uplambda }{C} }\leq\frac{1}{1+2\sqrt{\uplambda C_\phi}},
\]
and the proposition follows.
\end{proof}
%Summarizing, we have thus shown that  of correlation functions corresponding to the  mechanism of evolution $L$ on the finite time interval
%%\[
%%\mathcal{K}_{C_{0}}\ni k_{0}\mapsto
%%k_{t}\in \mathcal{K}_{C}, \quad t\in
%
%$\lbrack 0,T(C_0,C)$ is given by \eqref{T}.

\section{The mesoscopic description of the model}

In this section we will be concerned with the mesoscopic description of our microscopic model. To this end we implement the general scheme proposed in \cite{FKK2010a} (see also \cite{FKK2011a,FKK2010b,FKKoz2011}). The basic idea of this scheme is to scale the microscopic system making all interactions weak and the density of the system appropriately high.
We are thus led to the following formal steps. We rescale the operator \eqref{gen}:
\begin{align}\notag
\left( L_\eps F\right) \left( \gamma \right) :=&m\sum_{x\in \gamma }e^{-\eps E^{\phi
}\left( x,\gamma \setminus x\right) }\left[ F\left( \gamma \setminus
x\right) -F\left( \gamma \right) \right] \\&+\frac{\uplambda }{\eps}\int_{\mathbb{R}^{d}}\left[
F\left( \gamma \cup x\right) -F\left( \gamma \right) \right] dx.\label{scaleofgen}
\end{align}
We continue in the same fashion as before considering  $\widehat{L}_\eps:=K^{-1}L_\eps
K$ and, by analogy to \eqref{operdual}, the dual operator $L_\eps^\triangle$.
According to the general scheme we introduce the following renormalization of the operator $L_\eps^\triangle$:
\begin{equation}
L_{\varepsilon ,\mathrm{ren}}^{\Delta }:=R_\eps L_{\varepsilon }^{\Delta }R_{\eps^{-1}},
\end{equation}
where, for any $c>0$,
\[
(R_ck)(\eta):=c^{|\eta|}k(\eta), \quad\eta\in\Ga_0.
\]
Roughly speaking, the renormalization procedure means the following: we increase the density of the system (action of $R_{\eps^{-1}}$) making all interactions weaker (action of the operator $L_{\varepsilon }^{\Delta }$) and, afterwards, we return to the initial level of the density (action of $R_{\eps}$).

The precise form of the operator $L_{\varepsilon ,\mathrm{ren}}^{\Delta }$ is established by our next proposition.
\begin{proposition}
For any $k\in\Bbs$, $\eta\in\Ga_0$,
\begin{align}\notag
\left( L_{\varepsilon ,\mathrm{ren}}^{\Delta }k\right) \left( \eta \right)
=&-m\sum_{x\in \eta }e^{-\varepsilon E^{\phi }\left( x,\eta \setminus
x\right) }\int_{\Gamma _{0}}k\left( \eta \cup \xi \right) e_{\lambda }\left(
\frac{e^{-\varepsilon \phi \left( x-\cdot \right) }-1}{\varepsilon },\xi
\right) d\lambda \left( \xi \right) \\&+\uplambda \sum_{x\in \eta }k\left( \eta
\setminus x\right).\label{Ltriangleeps}
\end{align}%
\end{proposition}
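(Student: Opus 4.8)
The plan is to proceed in two stages, paralleling the construction of $L^\triangle$ in Section~3: first identify the (non-renormalized) dual operator $L_\eps^\triangle$ by repeating the computations behind Proposition~\ref{symgen} and the duality \eqref{operdual} verbatim for the rescaled rates, and then apply the renormalization $L_{\eps,\mathrm{ren}}^\triangle=R_\eps L_\eps^\triangle R_{\eps^{-1}}$ and simplify using the multiplicativity of the coherent state.

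\textbf{Step 1 (the operator $L_\eps^\triangle$).} The rescaled generator \eqref{scaleofgen} has exactly the structure of \eqref{gen} but with death rate $d_\eps(x,\ga)=me^{-\eps E^\phi(x,\ga)}$ and constant birth rate $b_\eps(x,\ga)=\uplambda/\eps$. Running the argument of Proposition~\ref{symgen} with $\phi$ replaced by $\eps\phi$ in the death part and $\uplambda$ replaced by $\uplambda/\eps$ in the birth part gives the symbol $\widehat{L}_\eps=K^{-1}L_\eps K$ on $\Bbs$; the integrability input \eqref{exp-int} is untouched because $1-e^{-\eps\phi(x)}\le 1-e^{-\phi(x)}$ for $\eps\in(0,1]$, so $C_{\eps\phi}\le C_\phi<\infty$. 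Then the duality identity \eqref{operdual}, obtained again through Lemma~\ref{Minlos} exactly as in the non-rescaled case, yields, for $k\in\Bbs$ and $\eta\in\Ga_0$,
\begin{multline*}
(L_\eps^\triangle k)(\eta)=-m\sum_{x\in\eta}e^{-\eps E^\phi(x,\eta\setminus x)}\int_{\Ga_0}k(\eta\cup\xi)\,e_\la\bigl(e^{-\eps\phi(x-\cdot)}-1,\xi\bigr)\,d\la(\xi)\\
+\frac{\uplambda}{\eps}\sum_{x\in\eta}k(\eta\setminus x).
\end{multline*}

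\textbf{Step 2 (renormalization).} Recall $(R_{\eps^{-1}}k)(\zeta)=\eps^{-|\zeta|}k(\zeta)$. Substituting $R_{\eps^{-1}}k$ into the formula of Step~1 and then multiplying the outcome by $\eps^{|\eta|}$, I would use two elementary facts: $(a)$ $|\eta\cup\xi|=|\eta|+|\xi|$ for $\la\otimes\la$-a.a.\ $(\eta,\xi)$ (the diagonal is $\la$-null, as in Lemma~\ref{Minlos}) and $|\eta\setminus x|=|\eta|-1$; $(b)$ the multiplicativity \eqref{LPexp}, which gives $\eps^{-|\xi|}e_\la(f,\xi)=e_\la(f/\eps,\xi)$. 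In the death term the surviving factor $\eps^{|\eta|}\cdot\eps^{-(|\eta|+|\xi|)}=\eps^{-|\xi|}$ is absorbed into the coherent state, converting $e^{-\eps\phi(x-\cdot)}-1$ into $\dfrac{e^{-\eps\phi(x-\cdot)}-1}{\eps}$; in the birth term the factor $\dfrac{\uplambda}{\eps}\cdot\eps^{|\eta|}\cdot\eps^{-(|\eta|-1)}=\uplambda$ appears. This is precisely \eqref{Ltriangleeps}.

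There is no substantial obstacle here: the computation is a short exercise in the algebra of $e_\la$ and $R_c$. The only two points deserving a line of care are (i) verifying that the derivation of $L_\eps^\triangle$ itself goes through under the \emph{same} hypothesis \eqref{exp-int} — which it does, since the $\eps$-scaling only decreases the relevant integral $\int_\X(1-e^{-\eps\phi(x)})\,dx$ — and (ii) the bookkeeping of the cardinality exponents, which I would perform on $\Ga_0$ modulo the $\la$-null diagonal sets exactly as in the proof of Proposition~\ref{symgen}.
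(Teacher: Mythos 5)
Your proof is correct and follows essentially the same route as the paper: obtain $L_\eps^\triangle$ from the formula for $L^\triangle$ by the substitutions $\phi\mapsto\eps\phi$, $\uplambda\mapsto\uplambda/\eps$ (which the paper simply states ``by \eqref{Ltriangle}, we obviously get''), then conjugate by $R_\eps$, $R_{\eps^{-1}}$ and cancel the powers of $\eps$ via $|\eta\cup\xi|=|\eta|+|\xi|$, $|\eta\setminus x|=|\eta|-1$, and the multiplicativity $\eps^{-|\xi|}e_\la(f,\xi)=e_\la(f/\eps,\xi)$. You fill in a couple of points the paper leaves implicit (the check $C_{\eps\phi}\le C_\phi$ and the $\la$-null diagonal issue), but the substance and structure are the same.
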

\begin{proof}
By \eqref{Ltriangle}, we obviously get
\begin{align*}
\left( L_{\varepsilon }^{\Delta }k\right) \left( \eta \right) =&-\sum_{x\in
\eta }e^{-\varepsilon E^{\phi }\left( x,\eta \setminus x\right)
}\int_{\Gamma _{0}}k\left( \eta \cup \xi \right) e_{\lambda }
\bigl(
e^{-\varepsilon \phi \left( x-\cdot \right) }-1,\xi \bigr) d\lambda \left(
\xi \right) \\&+\frac{\uplambda }{\varepsilon }\sum_{x\in \eta }k\left( \eta \setminus
x\right).
\end{align*}
Therefore, one has
\begin{align*}
&\left( L_{\varepsilon ,\mathrm{ren}}^{\Delta }k\right) \left( \eta \right)
\\
=&-m\varepsilon ^{\left\vert \eta \right\vert }\sum_{x\in \eta
}e^{-\varepsilon E^{\phi }\left( x,\eta \setminus x\right) }\int_{\Gamma
_{0}}\varepsilon ^{-\left\vert \eta \right\vert }\varepsilon ^{-\left\vert \xi \right\vert }k\left( \eta \cup \xi
\right) e_{\lambda }\left(
e^{-\varepsilon \phi \left( x-\cdot \right) }-1,\xi \right) d\lambda \left(
\xi \right) \\&+\varepsilon ^{\left\vert \eta \right\vert }\frac{\uplambda }{\varepsilon
}\sum_{x\in \eta }\varepsilon ^{-\left\vert \eta \right\vert +1}k\left( \eta
\setminus x\right),
\end{align*}%
which is our desired conclusion.
\end{proof}

Having disposed of this preliminary step, we can now look to the pointwise limit of  $L_{\varepsilon ,\mathrm{ren}}^{\Delta }$. It is easy to check that
\begin{align}\label{defLV}
\left( L_{V}^{\Delta }k\right) \left( \eta \right) :=&\lim_{\eps\to0}\left( L_{\varepsilon ,\mathrm{ren}}^{\Delta }k\right) \left( \eta \right)\\=&-m\sum_{x\in \eta}\int_{\Gamma _{0}}k\left( \eta \cup \xi \right) e_{\lambda }\left( -\phi
\left( x-\cdot \right) ,\xi \right) d\lambda \left( \xi \right)\notag \\&+\uplambda \sum_{x\in
\eta }k\left( \eta \setminus x\right).\label{LtriangleV}
\end{align}

From now on, we always suppose that $\phi\in L^1(\X)$, namely,
\begin{equation}\label{intcond}
\beta:=\int_\X \phi(x) \,dx<\infty.
\end{equation}
We see at once that \eqref{intcond} implies \eqref{exp-int}, and
\[
C_\phi\leq\beta.
\]
We next modify Proposition \ref{normest} to the case of the operators $L_{\varepsilon ,\mathrm{ren}}^{\Delta }$ and $L_{V}^{\Delta }$ under assumption \eqref{intcond}.

\begin{proposition}\label{normestscale}
Suppose that \eqref{intcond} holds.
Let $C>C_{0}>0$ be arbitrary. Then, for any $C^{\prime },C^{\prime \prime }$
such that $C_{0}\leq C^{\prime }<C^{\prime \prime }\leq C$, and for any $%
k\in \mathcal{K}_{C^{\prime }}$, one has
\begin{equation}\label{normopersharp}
\bigl\Vert L^\triangle_\sharp k\bigr\Vert_{\K_{C''}}\leq \frac{1}{C^{\prime \prime
}-C^{\prime }} \frac{C}{e}%
\left( me^{C\beta}+\frac{\uplambda }{C_{0}}\right) \left\Vert k\right\Vert _{\mathcal{K}_{C^{\prime }}},
\end{equation}
where $L^\triangle_\sharp$ means $L^\triangle_{\eps,\mathrm{ren}}$
or $L^\triangle_V$.
\end{proposition}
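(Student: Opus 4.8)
The plan is to repeat, almost verbatim, the argument used for Proposition~\ref{normest}, treating the two operators $L^\triangle_{\eps,\mathrm{ren}}$ and $L^\triangle_V$ simultaneously and checking that all estimates are uniform in $\eps\geq0$. The starting point is the explicit expressions \eqref{Ltriangleeps} and \eqref{LtriangleV}: in both cases the death part has the form $-m\sum_{x\in\eta}p_\eps(x,\eta\setminus x)\int_{\Ga_0}k(\eta\cup\xi)e_\lambda(g_\eps(x-\cdot),\xi)\,d\lambda(\xi)$ with a prefactor $p_\eps$ and a kernel $g_\eps$, while the birth part is exactly $\uplambda\sum_{x\in\eta}k(\eta\setminus x)$, identical to \eqref{Ltriangle}.

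First I would record the two elementary $\eps$-uniform bounds that make the argument go through. Since $\phi\geq0$ and $\eps\geq0$, the death prefactor satisfies $p_\eps(x,\eta\setminus x)=e^{-\eps E^\phi(x,\eta\setminus x)}\leq1$ (for $L^\triangle_V$ the prefactor is simply $1$). For the kernel, using $0\leq1-e^{-s}\leq s$ for $s\geq0$ one gets $\bigl\lvert g_\eps(t)\bigr\rvert=\bigl\lvert\tfrac{e^{-\eps\phi(t)}-1}{\eps}\bigr\rvert=\tfrac{1-e^{-\eps\phi(t)}}{\eps}\leq\phi(t)$ for $\eps>0$, and this degenerates correctly to $\lvert-\phi(t)\rvert=\phi(t)$ in the limiting case $L^\triangle_V$. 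Hence in all cases $e_\lambda(\lvert g_\eps(x-\cdot)\rvert,\xi)\leq e_\lambda(\phi(x-\cdot),\xi)$.

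Next I would run the chain of inequalities exactly as in the proof of Proposition~\ref{normest}. Multiplying by $(C'')^{-\lvert\eta\rvert}$, bounding $\lvert k(\eta\cup\xi)\rvert\leq\lVert k\rVert_{\K_{C'}}(C')^{\lvert\eta\rvert+\lvert\xi\rvert}$ and $\lvert k(\eta\setminus x)\rvert\leq\lVert k\rVert_{\K_{C'}}(C')^{\lvert\eta\rvert-1}$ via \eqref{RB}, and evaluating the integral over $\Ga_0$ by \eqref{intexp} together with \eqref{intcond}, which gives $\int_{\Ga_0}(C')^{\lvert\xi\rvert}e_\lambda(\phi(x-\cdot),\xi)\,d\lambda(\xi)=\exp\{C'\beta\}$ (this is the only place where $C_\phi$ gets replaced by $\beta$), I arrive at
\[
(C'')^{-\lvert\eta\rvert}\bigl\lvert(L^\triangle_\sharp k)(\eta)\bigr\rvert\leq\lVert k\rVert_{\K_{C'}}\Bigl(\tfrac{C'}{C''}\Bigr)^{\lvert\eta\rvert}\lvert\eta\rvert\Bigl(me^{C'\beta}+\tfrac{\uplambda}{C'}\Bigr)\leq\lVert k\rVert_{\K_{C'}}\Bigl(\tfrac{C'}{C''}\Bigr)^{\lvert\eta\rvert}\lvert\eta\rvert\Bigl(me^{C\beta}+\tfrac{\uplambda}{C_0}\Bigr).
\]
Then I apply \eqref{max1} with $a=C'/C''\in(0,1)$ to bound $\sup_{\lvert\eta\rvert}\bigl(\tfrac{C'}{C''}\bigr)^{\lvert\eta\rvert}\lvert\eta\rvert$ by $\bigl(e(\ln C''-\ln C')\bigr)^{-1}$, and finish with the mean value theorem: $\ln C''-\ln C'=\tfrac1c(C''-C')\geq\tfrac1C(C''-C')$ for some $c\in[C',C'']\subset[C_0,C]$. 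Taking the essential supremum over $\eta\in\Ga_0$ yields \eqref{normopersharp}.

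I do not expect a genuine obstacle here: the content is entirely a re-run of Proposition~\ref{normest}. The only point that needs care is the verification that the bounds $p_\eps\leq1$ and $\lvert g_\eps\rvert\leq\phi$ hold uniformly for all $\eps\geq0$, so that a single estimate covers $L^\triangle_{\eps,\mathrm{ren}}$ for every $\eps>0$ and also the pointwise limit $L^\triangle_V$; this is exactly what legitimizes passing to the limit $\eps\to0$ in \eqref{defLV} within the scale $\{\K_C\}$.
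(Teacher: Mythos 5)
Your proposal is correct and follows essentially the same route as the paper: bound $e^{-\eps E^\phi}\leq 1$ and $\bigl|\tfrac{e^{-\eps\phi}-1}{\eps}\bigr|\leq\phi$ uniformly in $\eps\geq0$ (with the obvious degeneration to $\phi$ for $L_V^\triangle$), then reduce via \eqref{RB}, \eqref{intexp} and \eqref{intcond} to the same maximization \eqref{max1} and mean-value step used in Proposition~\ref{normest}. Nothing is missing; you even correctly retain the factor $m$ in front of $e^{C'\beta}$, which the paper's displayed chain drops in a minor typo.
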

\begin{proof}
By \eqref{Ltriangleeps}, one has
\begin{align*}
&\left( C^{\prime \prime }\right) ^{-\left\vert \eta \right\vert
}\left\vert \left( L_{\varepsilon ,\mathrm{ren}}^{\Delta }k\right) \left(
\eta \right) \right\vert \\
\leq &\left( C^{\prime \prime }\right) ^{-\left\vert \eta \right\vert
}\sum_{x\in \eta }e^{-\varepsilon E^{\phi }\left( x,\eta \setminus x\right)
}\int_{\Gamma _{0}}\left\vert k\left( \eta \cup \xi \right) \right\vert
e_{\lambda }\left( \left\vert \frac{e^{-\varepsilon \phi \left( x-\cdot
\right) }-1}{\varepsilon }\right\vert ,\xi \right) d\lambda \left( \xi
\right) \\&+\uplambda \left( C^{\prime \prime }\right) ^{-\left\vert \eta \right\vert
}\sum_{x\in \eta }\left\vert k\left( \eta \setminus x\right) \right\vert. \end{align*}
Then, using inequalities $\phi\geq0$ and
\begin{equation}\label{obvest}
\left\vert \frac{e^{-\varepsilon \phi \left( y
\right) }-1}{\varepsilon }\right\vert =\frac{1-e^{-\varepsilon \phi \left( y
\right) }}{\varepsilon } \leq\phi(y), \quad y\in\X,
\end{equation}
we obtain
\begin{align*}
&\left( C^{\prime \prime }\right) ^{-\left\vert \eta \right\vert
}\left\vert \left( L_{\sharp}^{\Delta }k\right) \left(
\eta \right) \right\vert \\
\leq &\left( C^{\prime \prime }\right) ^{-\left\vert \eta \right\vert
}m\sum_{x\in \eta }\int_{\Gamma _{0}}\left( C^{\prime }\right) ^{\left\vert
\eta \right\vert +\left\vert \xi \right\vert }\left\Vert k\right\Vert _{%
\mathcal{K}_{C^{\prime }}}e_{\lambda }\left( \phi \left( x-\cdot \right)
,\xi \right) d\lambda \left( \xi \right) \\&+\uplambda \left( C^{\prime \prime }\right)
^{-\left\vert \eta \right\vert }\sum_{x\in \eta }\left( C^{\prime }\right)
^{\left\vert \eta \right\vert -1}\left\Vert k\right\Vert _{\mathcal{K}%
_{C^{\prime }}} \\\intertext{and, using \eqref{intexp}, one has}
=&\left\Vert k\right\Vert _{\mathcal{K}_{C^{\prime }}}\left( \frac{%
C^{\prime }}{C^{\prime \prime }}\right) ^{\left\vert \eta \right\vert
}\left\vert \eta \right\vert \left( e^{C^{\prime }\beta }+\frac{\uplambda }{C^{\prime
}}\right).
\end{align*}
The rest of the proof runs as in Proposition~\ref{normest}.
\end{proof}

Application of Proposition~\ref{propOvs} similar to that in the proof of Proposition \ref{100502} leads us to the following result, the detailed verification of which being left to the reader.
\begin{proposition}
Suppose that \eqref{intcond} holds. Let $C_{0}>0$ be arbitrary and fixed. Consider the initial value problems
\begin{equation}\label{CPeps}
\frac{\partial}{\partial t}k_{t,\eps}=L^\triangle_{\eps,\mathrm{ren}} k_{t,\eps}, \quad k_{t,\eps}|_{t=0}=k_{0,\eps}, \quad \eps>0,
\end{equation}
and
\begin{equation}\label{CPVl}
\frac{\partial}{\partial t}k_{t,V}=L^\triangle_{V} k_{t,V},\quad k_{t,V}|_{t=0}=k_{0,V}.
\end{equation}
with  $\bigl\{k_{0,\eps}, k_{0,V}\bigr\}_{\eps>0}\subset \K_{C_0}$. Then for each $C>C_{0}$, there exists a time
\begin{equation}\label{T1}
T_{1}=T_1(C_{0},C):=\frac{C_{0}(C-C_{0})}{C
^{2}\bigl( e^{\beta C}+\frac{\uplambda }{C_{0}}\bigr) }.
\end{equation}
and unique functions $k_{\eps},\, k_{V} : \bigl[0,T_{1}\bigr)\to
\K_{C}$, $\eps>0$ which are continuously differentiable on $\bigl(0,T_{1}\bigr)$ in $\K_{C}$, and they
solve \eqref{CPVl} and \eqref{T1}, respectively, on the time-interval $0 \leq t < T_{1}$.
\end{proposition}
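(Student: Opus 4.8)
The plan is to invoke the Ovsjannikov-type theorem (Theorem~\ref{propOvs}) in exactly the same manner as in the proof of Proposition~\ref{100502}, replacing the bound from Proposition~\ref{normest} by the sharpened bound from Proposition~\ref{normestscale}. First I would fix $C_0>0$ and $C>C_0$, and set up the decreasing scale of Banach spaces $\mathbb{B}_s:=\K_{1/s}$ for $s>0$ with initial index $s_0:=1/C_0$, precisely as in \eqref{newfamily}. The embedding and norm-monotonicity requirements of Theorem~\ref{propOvs} are then immediate from \eqref{scaleofspaces}.

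Next I would translate estimate \eqref{normopersharp} into the language of the $\mathbb{B}_s$-scale. Writing $s''=1/C''<1/C'=s'$ and $s=1/C$, the bound \eqref{normopersharp} becomes
\[
\bigl\Vert L^\triangle_\sharp k\bigr\Vert_{\mathbb{B}_{s''}}\leq \frac{1}{s'-s''}\,\frac{s_0^2}{s\,e}\Bigl(me^{\beta/s}+\uplambda s_0\Bigr)\left\Vert k\right\Vert_{\mathbb{B}_{s'}}=:\frac{M(s,s_0)}{s'-s''}\left\Vert k\right\Vert_{\mathbb{B}_{s'}},
\]
exactly as in \eqref{normoper-rewr}, the only change from Proposition~\ref{100502} being that $C_\phi$ is everywhere replaced by $\beta$ (legitimate since \eqref{intcond} holds and $C_\phi\leq\beta$). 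This holds simultaneously for $L^\triangle_{\eps,\mathrm{ren}}$ (for every $\eps>0$) and for $L^\triangle_V$, and crucially the constant $M(s,s_0)$ is uniform in $\eps$ because the bound \eqref{normopersharp} does not depend on $\eps$. Hence Theorem~\ref{propOvs} applies with a single constant $M$ to the whole family of Cauchy problems \eqref{CPeps}, \eqref{CPVl}.

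Then I would compute the resulting existence time. By Theorem~\ref{propOvs} the solution exists for $0\le t<\bigl(eM(s,s_0)\bigr)^{-1}(s_0-s)$, and the same chain of elementary manipulations as in \eqref{T} gives
\[
\bigl(eM(s,s_0)\bigr)^{-1}(s_0-s)=\frac{C_0(C-C_0)}{C^2\bigl(e^{\beta C}+\frac{\uplambda}{C_0}\bigr)}=T_1(C_0,C),
\]
which is the claimed time. Applying the theorem for each $\eps>0$ and for the Vlasov operator yields unique functions $k_\eps,k_V\colon[0,T_1)\to\K_C$, continuously differentiable on $(0,T_1)$ in $\K_C$, solving the respective equations; this is precisely the assertion.

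There is essentially no obstacle here: all the analytic content has already been carried out in Propositions~\ref{normest}--\ref{normestscale} and in Theorem~\ref{propOvs}; the only points requiring a word of care are (i) checking that the operator norm estimate is genuinely $\eps$-independent, so that $T_1$ does not degenerate as $\eps\to0$ — this is exactly what \eqref{normopersharp} provides — and (ii) noting that $L^\triangle_V$ is obtained as the pointwise limit \eqref{defLV} and satisfies the same bound, so that the limiting Cauchy problem \eqref{CPVl} is covered by the identical argument. This is why the detailed verification can indeed be left to the reader.
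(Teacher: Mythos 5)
Your proposal is correct and is precisely the argument the paper intends: apply Theorem~\ref{propOvs} to the scale $\mathbb{B}_s=\K_{1/s}$ with the $\eps$-uniform bound from Proposition~\ref{normestscale} in place of Proposition~\ref{normest}, giving the existence time $T_1(C_0,C)$ by the same computation as \eqref{T}. The paper explicitly leaves this verification to the reader as ``similar to that in the proof of Proposition~\ref{100502},'' and your write-up supplies exactly the intended details.
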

\begin{remark}
It is worth noting that $T_1<T$, where $T=T(C_0,C)$ is given by \eqref{T}.
\end{remark}

Having in mind \eqref{defLV}, it is of interest to know whether solutions for \eqref{CPeps} converge to the solution for \eqref{CPVl} on the interval $[0,T_1)$ as $\eps$ tends to $0$.
To shed some light on this problem we will use the following result presented in \cite[Theorem~4.3]{FKO2011a}. For the convenience of the reader we repeat the statement of this theorem below.
\begin{proposition}\label{Thconv} Let the family of Banach spaces
$\{\B_s: 0<s\leq s_0\}$ be such as in Proposition~\ref{propOvs}. Consider a family of initial value problems
\begin{equation}
\frac{du_\eps(t)}{dt}=A_{\eps}u_\eps(t),\quad u_\eps(0)=u_{\eps}\in
\mathbb{B}_{s_0},\quad \eps\geq0,\label{V1eps}
\end{equation}
where, for each $s\in(0,s_0)$ fixed and for each pair $s', s''$ such that
$s\leq s'<s''\leq s_0$, $A_\eps:\B_{s''}\to\B_{s'}$ is a linear mapping so that
there is an $M>0$ such that for all $u\in\B_{s''}$
\begin{equation}\label{normest2}
\|A_{\eps}u\|_{s'}\leq\frac{M}{s''-s'}\|u\|_{s''}.
\end{equation}
Here $M$ is independent of $\eps, s',s''$ and $u$, however it might depend
continuously on $s,s_0$. Assume that there is a $p\in\N$ and for each
$\eps>0$ there is an $N_\eps>0$ such that for each pair $s', s''$,
$s\leq s'<s''\leq s_0$, and all $u\in\B_{s''}$
\begin{equation}\label{estdif}
\|A_{\eps}u-A_0u\|_{s'}\leq \sum_{k=1}^p\frac{N_\eps}{(s''-s')^k}\|u\|_{s''}.
\end{equation}
In addition, assume that $\lim_{\eps\rightarrow0}N_\eps=0$ and
$\lim_{\eps\rightarrow0}\|u_{\eps}(0)-u_{0}(0)\|_{s_0}=0$.

Then, for each $s\in(0,s_0)$, there is a constant $\delta=(eM)^{-1}>0$ such that there is a unique solution
$u_\eps:\left[0,\delta(s_0-s)\right)\to\B_s$, $\eps\geq0$, to each initial
value problem (\ref{V1eps}) and for all $t\in\left[0,\delta(s_0-s)\right)$ we
have
\[
\lim_{\eps\rightarrow0}\|u_{\eps}(t)-u_{0}(t)\|_{s}=0.
\]
\end{proposition}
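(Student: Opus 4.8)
The plan is to obtain existence, uniqueness and the stated regularity of each $u_\eps$, $\eps\ge 0$, directly from Theorem~\ref{propOvs} applied to the Cauchy problem \eqref{V1eps} for each fixed $\eps$: since the estimate \eqref{normest2} holds with one and the same constant $M$ for all $\eps$, the number $\delta=(eM)^{-1}$ and the existence interval $[0,\delta(s_0-s))$ do not depend on $\eps$, so only the convergence assertion needs a separate argument. As a preliminary remark I will use that for every scale $\sigma\in(0,s_0)$ the function $u_0$ is continuous on $[0,\delta(s_0-\sigma))$ with values in $\mathbb{B}_\sigma$, hence bounded in $\mathbb{B}_\sigma$ on every compact subinterval; this bound does not involve $\eps$.

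Set $w_\eps:=u_\eps-u_0$ and $g_\eps:=u_\eps(0)-u_0(0)\in\mathbb{B}_{s_0}$. Subtracting the integral forms of the two Cauchy problems and writing $A_\eps u_\eps-A_0u_0=A_\eps w_\eps+(A_\eps-A_0)u_0$ gives
\[
w_\eps(t)=g_\eps+\int_0^t A_\eps w_\eps(\tau)\,d\tau+\int_0^t(A_\eps-A_0)u_0(\tau)\,d\tau .
\]
I would iterate this identity by the scheme from the proof of Theorem~\ref{propOvs}. The resulting series splits into two families: the terms built solely from $g_\eps$, which are summed exactly as in that proof (cutting the scale gap $[s,s_0]$ into $n$ equal pieces at the $n$-th step) and contribute at most $C(t)\,\|g_\eps\|_{s_0}$ on all of $[0,\delta(s_0-s))$; and the terms containing exactly one factor $(A_\eps-A_0)$ acting on some $u_0(\tau)$, each of which carries the factor $N_\eps$ together with the polynomial loss of scale of degree $p$ from \eqref{estdif}. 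For a term of the second type one evaluates $u_0(\tau)$ at an intermediate scale $\sigma^{*}<s_0$, estimates $\|u_0(\tau)\|_{\sigma^{*}}$ by the preliminary remark, and sums over the remaining $A_\eps$-iterations as before; their total is at most $C'(t)\,N_\eps$, so that $\|w_\eps(t)\|_s\le C(t)\,\|g_\eps\|_{s_0}+C'(t)\,N_\eps$ on the range where both series converge.

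The hard part is the second family of terms. Because $\|u_0(\tau)\|_{\sigma^{*}}$ is finite only for $\tau<\delta(s_0-\sigma^{*})$, a single fixed choice of $\sigma^{*}$ uses up part of the scale budget needed for the subsequent $A_\eps$-iterations, and the crude estimate makes the $N_\eps$-series converge only on the first half of the interval, $t<\tfrac12\delta(s_0-s)$. To reach all of $[0,\delta(s_0-s))$ I would bootstrap in time, exploiting that the target scale $s$ is arbitrary. Suppose one has already shown $\|w_\eps(\tau)\|_\sigma\le C\bigl(\|g_\eps\|_{s_0}+N_\eps\bigr)$ for all $\sigma\in(0,s_0)$ and all $\tau<\alpha\,\delta(s_0-\sigma)$, with a fraction $\alpha\in[\tfrac12,1)$ (the case $\alpha=\tfrac12$ being the previous paragraph). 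To estimate $\|w_\eps(t)\|_s$ for a given $t$, restart the inhomogeneous equation for $w_\eps$ at a suitable intermediate time $t_1$, taking as new initial datum $w_\eps(t_1)$ --- which, by the inductive hypothesis, is already of size $O\bigl(\|g_\eps\|_{s_0}+N_\eps\bigr)$ in a scale $\sigma_1$ close to $s_0-t_1/(\alpha\delta)$ --- and repeat the iteration above on $[t_1,t]$. Optimising $t_1$ against the two constraints that arise there (one from iterating $A_\eps$ off the datum $w_\eps(t_1)$, one from the $N_\eps$-source on $[t_1,t]$) strictly enlarges the admissible fraction; iterating, it runs through $\tfrac12,\tfrac23,\tfrac34,\dots\to1$, so every prescribed $t<\delta(s_0-s)$ is covered after finitely many steps, with an $\eps$-independent constant. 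Letting $\eps\to0$ then gives $\|u_\eps(t)-u_0(t)\|_s\to0$ for all $t\in[0,\delta(s_0-s))$. (Equivalently one could let the intermediate scales slide continuously with the time variable from the start; in either form this bookkeeping --- made heavier by the degree-$p$ loss in \eqref{estdif} --- is the technical core, and it is carried out in \cite[Theorem~4.3]{FKO2011a}.)
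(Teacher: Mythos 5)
The paper does not actually prove Proposition~\ref{Thconv}: the sentence immediately preceding it makes explicit that the statement is reproduced from \cite[Theorem~4.3]{FKO2011a} ``for the convenience of the reader,'' so there is no in-paper proof against which to check your argument. Your outline does set up the right reduction --- subtract the two integral equations, write $A_\eps u_\eps - A_0 u_0 = A_\eps w_\eps + (A_\eps - A_0)u_0$ with $w_\eps = u_\eps - u_0$, and iterate Ovsjannikov-style, handling the terms driven by the initial gap $g_\eps = u_\eps(0)-u_0(0)$ and those carrying a single factor $(A_\eps-A_0)u_0(\tau)$ separately --- and it correctly isolates the genuine obstacle, namely that feeding the source term through a single fixed intermediate scale $\sigma^*$ only yields convergence on a proper fraction of $[0,\delta(s_0-s))$. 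Your proposed remedy (restart at intermediate times $t_1$, or equivalently let the intermediate scale slide with $\tau$) is the right idea; for the record, if one optimises the restart time $t_1$ against the two competing scale constraints, the admissible fraction improves through $1/2,\ 2/3,\ 3/4,\dots \to 1$, so the bootstrap does close, and the degree-$p$ loss in \eqref{estdif} only affects constants, not the radius. That said, you explicitly leave this bookkeeping to \cite{FKO2011a}, so what you have is a correct road map rather than a self-contained proof; since the paper under review makes exactly the same deferral, you and the paper end up, in effect, in the same position.
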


We now turn back to our model. We wish to apply this theorem to the family of initial value problems \eqref{CPeps} and \eqref{CPVl}. The condition  \eqref{normest2} poses no problem because of Proposition~\ref{normestscale}. The only point remaining with direct application of  Proposition~\ref{Thconv} concerns  an analog of \eqref{estdif} (in Banach spaces  \eqref{newfamily}).
The latter problem is solved in the following proposition.

\begin{proposition}\label{prop_estdif}
Let \eqref{intcond} holds. Assume, additionally,
that
\begin{equation}\label{bddphi}
\bar{\phi}=\esssup_{x\in\X}\,\phi(x)<\infty.
\end{equation}
Let $C>C_{0}>0$ be fixed, and consider any $C^{\prime },C^{\prime \prime }$
such that $C_{0}\leq C^{\prime }<C^{\prime \prime }\leq C$. Then, for any $%
k\in \mathcal{K}_{C^{\prime }}$,
\begin{equation}\label{normdifest}
\bigl\Vert L_{\eps,\mathrm{ren}}^\triangle k-L^\triangle_V k\bigr\Vert_{\K_{C''}}\leq
\eps\biggl(\frac{M_1}{C''-C'}+\frac{M_2}{(C''-C')^2}\bigg)\Vert k\Vert_{\K_{C'}},
\end{equation}
where $M_1=\frac{m\beta C^2}{2e} \left\Vert \phi \right\Vert _{\infty }e^{C\beta }$, $M_2=\frac{4m C^2}{e^2} \left\Vert \phi \right\Vert _{\infty }e^{C\beta }$.
\end{proposition}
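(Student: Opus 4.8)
The plan is to estimate the difference $L_{\eps,\mathrm{ren}}^\triangle k-L^\triangle_V k$ pointwise on $\Ga_0$ by comparing \eqref{Ltriangleeps} with \eqref{LtriangleV} term by term, and then, exactly as in the proof of Proposition~\ref{normestscale}, convert the pointwise bound into a $\K_{C''}$-norm bound. The birth parts $\uplambda\sum_{x\in\eta}k(\eta\setminus x)$ coincide in the two operators, so they cancel and only the death parts contribute. Writing the death part of $L_{\eps,\mathrm{ren}}^\triangle$ as a sum over $x\in\eta$ of $e^{-\eps E^\phi(x,\eta\setminus x)}\int_{\Ga_0}k(\eta\cup\xi)\,e_\la\!\bigl(\tfrac{e^{-\eps\phi(x-\cdot)}-1}{\eps},\xi\bigr)\,d\la(\xi)$ and the death part of $L^\triangle_V$ as the same sum with $e^{-\eps E^\phi}$ replaced by $1$ and $\tfrac{e^{-\eps\phi(x-\cdot)}-1}{\eps}$ replaced by $-\phi(x-\cdot)$, I would split the difference into two pieces:
\begin{equation*}
\bigl(1-e^{-\eps E^\phi(x,\eta\setminus x)}\bigr)\int_{\Ga_0}k(\eta\cup\xi)\,e_\la\Bigl(\tfrac{e^{-\eps\phi(x-\cdot)}-1}{\eps},\xi\Bigr)d\la(\xi),
\end{equation*}
which I will call the ``prefactor'' term, plus
\begin{equation*}
\int_{\Ga_0}k(\eta\cup\xi)\Bigl[e_\la\Bigl(\tfrac{e^{-\eps\phi(x-\cdot)}-1}{\eps},\xi\Bigr)-e_\la\bigl(-\phi(x-\cdot),\xi\bigr)\Bigr]d\la(\xi),
\end{equation*}
the ``coherent-state'' term. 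For the prefactor term I would use $0\le 1-e^{-\eps E^\phi(x,\eta\setminus x)}\le \eps E^\phi(x,\eta\setminus x)=\eps\sum_{y\in\eta\setminus x}\phi(x-y)$, bound the inner integral by $\|k\|_{\K_{C'}}(C')^{|\eta|}e^{C'\beta}$ exactly as in Proposition~\ref{normestscale} using $|\tfrac{e^{-\eps\phi}-1}{\eps}|\le\phi$ and \eqref{intexp}, and then use $\phi(x-y)\le\bar\phi=\|\phi\|_\infty$ so that $\sum_{x\in\eta}\sum_{y\in\eta\setminus x}\phi(x-y)\le\bar\phi\,|\eta|(|\eta|-1)\le\bar\phi\,|\eta|^2$; this produces the $\eps\cdot|\eta|^2\cdot(C'/C'')^{|\eta|}$ behaviour that, via an elementary bound $t^2 a^t\le\frac{4}{e^2(-\ln a)^2}$ and the substitution $(-\ln a)^{-1}=(\ln C''-\ln C')^{-1}\le C(C''-C')^{-1}$, yields the $M_2/(C''-C')^2$ contribution.

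For the coherent-state term, the key elementary estimate is a difference bound for coherent states: for functions $f,g$ one has $|e_\la(f,\xi)-e_\la(g,\xi)|\le\sum_{z\in\xi}|f(z)-g(z)|\prod_{w\in\xi\setminus z}\max\{|f(w)|,|g(w)|\}$ (a telescoping identity). Applying this with $f(\cdot)=\tfrac{e^{-\eps\phi(x-\cdot)}-1}{\eps}$ and $g(\cdot)=-\phi(x-\cdot)$, and using the second-order Taylor-type estimate $\bigl|\tfrac{e^{-\eps\phi(y)}-1}{\eps}+\phi(y)\bigr|\le\tfrac{\eps}{2}\phi(y)^2\le\tfrac{\eps}{2}\bar\phi\,\phi(y)$ together with $|f(w)|,|g(w)|\le\phi(x-w)$, gives $|e_\la(f,\xi)-e_\la(g,\xi)|\le\tfrac{\eps}{2}\bar\phi\sum_{z\in\xi}\phi(x-z)\prod_{w\in\xi\setminus z}\phi(x-w)$. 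Integrating against $|k(\eta\cup\xi)|\le\|k\|_{\K_{C'}}(C')^{|\eta|+|\xi|}$ over $\Ga_0$ and invoking Lemma~\ref{Minlos} (or directly \eqref{intexp}) yields a factor $\tfrac{\eps}{2}\bar\phi\,C'\beta\,e^{C'\beta}(C')^{|\eta|}$; summing over $x\in\eta$ gives an extra $|\eta|$, so this term behaves like $\eps\cdot|\eta|\cdot(C'/C'')^{|\eta|}$, and the bound $t a^t\le\frac{1}{e(-\ln a)}$ from \eqref{max1} together with $(-\ln a)^{-1}\le C(C''-C')^{-1}$ produces the $M_1/(C''-C')$ contribution. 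Adding the two pieces and replacing $C'$ by the larger $C$ in the exponentials and polynomial prefactors gives precisely \eqref{normdifest} with the stated $M_1,M_2$.

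I expect the main obstacle to be the bookkeeping in the coherent-state term: one must be careful that the telescoping bound for $|e_\la(f,\xi)-e_\la(g,\xi)|$ is applied with the right ``max'' majorant (here conveniently $\phi(x-\cdot)$ dominates both $f$ and $g$ since $0\le 1-e^{-\eps\phi}\le\eps\phi$), and that the resulting sum $\sum_{z\in\xi}\phi(x-z)\prod_{w\in\xi\setminus z}\phi(x-w)$ integrates, by \eqref{intexp}, to $\bigl(\int_\X\phi\bigr)\exp\bigl(\int_\X\phi\bigr)=\beta e^\beta$ after the $C'$-scaling. The prefactor term is routine once one notes $1-e^{-t}\le t$. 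A minor point is to track constants so that the final polynomial-in-$|\eta|$ maxima $\max_t(t a^t)$ and $\max_t(t^2 a^t)$ are applied with $a=C'/C''\in(0,1)$ and then $\ln C''-\ln C'\ge C^{-1}(C''-C')$ is used, exactly mirroring the end of the proof of Proposition~\ref{normest}; combining everything and majorizing $C'\le C$, $C'^{-1}\le C_0^{-1}$ throughout yields the asserted estimate.
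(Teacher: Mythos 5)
Your proposal is correct and follows essentially the same route as the paper: the death-part difference is split into the prefactor piece ($1-e^{-\eps E^\phi}$ in front) and the coherent-state difference, the former handled by $1-e^{-t}\le t$ and $\phi\le\bar\phi$ to yield the $|\eta|^2$ contribution $M_2/(C''-C')^2$, the latter by the product-difference (telescoping) inequality and the second-order bound $t-1+e^{-t}\le t^2/2$ to yield the $|\eta|$ contribution $M_1/(C''-C')$. The paper states the telescoping inequality as an explicit inductive product estimate and derives the second-order bound via the auxiliary function $f(t)=(e^{-t}+t-1)/t^2$ with $0<f(t)<\tfrac12$, but the decomposition, key estimates, and resulting constants coincide with yours.
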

\begin{proof}
By \eqref{Ltriangleeps} and \eqref{LtriangleV}, one has
\begin{align*}
&\left( C^{\prime \prime }\right) ^{-\left\vert \eta \right\vert
}\bigl\vert \left( L_{\varepsilon ,\mathrm{ren}}^{\Delta }k\right) \left(
\eta \right) -\left( L_{V}^{\Delta }k\right) \left( \eta \right) \bigr\vert
 \\
\leq &\left( C^{\prime \prime }\right) ^{-\left\vert \eta \right\vert
}m\sum_{x\in \eta }\int_{\Gamma _{0}}\left\vert k\left( \eta \cup \xi \right)
\right\vert \\&\times\left\vert e^{-\varepsilon E^{\phi }\left( x,\eta \setminus
x\right) }e_{\lambda }\left( \frac{e^{-\varepsilon \phi \left( x-\cdot
\right) }-1}{\varepsilon },\xi \right) -e_{\lambda }\left( -\phi \left(
x-\cdot \right) ,\xi \right) \right\vert d\lambda \left( \xi \right).
\end{align*}%
Taking into account \eqref{obvest}, we obtain
\begin{align*}
&\left\vert e^{-\varepsilon E^{\phi }\left( x,\eta \setminus x\right)
}e_{\lambda }\left( \frac{e^{-\varepsilon \phi \left( x-\cdot \right) }-1}{%
\varepsilon },\xi \right) -e_{\lambda }\left( -\phi \left( x-\cdot \right)
,\xi \right) \right\vert \\
\leq &\left\vert e_{\lambda }\left( \frac{e^{-\varepsilon \phi \left(
x-\cdot \right) }-1}{\varepsilon },\xi \right) -e_{\lambda }\left( -\phi
\left( x-\cdot \right) ,\xi \right) \right\vert \\&+\left\vert \left(
1-e^{-\varepsilon E^{\phi }\left( x,\eta \setminus x\right) }\right)
e_{\lambda }\left( \frac{e^{-\varepsilon \phi \left( x-\cdot \right) }-1}{%
\varepsilon },\xi \right) \right\vert \\
=&\,e_{\lambda }\left( \phi \left( x-\cdot \right) ,\xi \right) -e_{\lambda
}\left( \frac{1-e^{-\varepsilon \phi \left( x-\cdot \right) }}{\varepsilon }%
,\xi \right) \\&+\left( 1-e^{-\varepsilon E^{\phi }\left( x,\eta \setminus
x\right) }\right) e_{\lambda }\left( \frac{1-e^{-\varepsilon \phi \left(
x-\cdot \right) }}{\varepsilon },\xi \right).
\end{align*}%
It can be easily seen by induction that for any $b_{i}\geq a_{i}>0$,
$1\leq i\leq n$, $n\in\N$,
\[
\prod_{i=1}^nb_{i}-\prod_{i=1}^na_{i}\leq \sum_{i=1}^n\left( b_{i}-a_{i}\right)
\prod_{\substack{j=1\\j\neq i}}^nb_{j}.
\]%
Then, for any $\eta\in\Ga_0$, $x\in\eta$, $\xi\in\Ga_0$, $\xi\cap\eta=\emptyset$, one has
\begin{align}\notag
&\left( 1-e^{-\varepsilon E^{\phi }\left( x,\eta \setminus x\right) }\right)
e_{\lambda }\left( \frac{1-e^{-\varepsilon \phi \left( x-\cdot \right) }}{%
\varepsilon },\xi \right) \\
%\leq &\sum_{y\in \eta \setminus x}\left(
%1-e^{-\varepsilon \phi \left( x-y\right) }\right) e^{-\varepsilon E^{\phi
%}\left( x,\left( \eta \setminus x\right) \setminus y\right) }e_{\lambda
%}\left( \frac{1-e^{-\varepsilon \phi \left( x-\cdot \right) }}{\varepsilon }%
%,\xi \right) \notag \\
\leq &\sum_{y\in \eta \setminus x}\varepsilon \phi \left( x-y\right)
e_{\lambda }\left( \phi \left( x-\cdot \right) ,\xi \right)\label{est1}
\end{align}%
and
\begin{align}\notag
0&\leq e_{\lambda }\left( \phi \left( x-\cdot \right) ,\xi \right) -e_{\lambda
}\left( \frac{1-e^{-\varepsilon \phi \left( x-\cdot \right) }}{\varepsilon }%
,\xi \right)\\ &\leq \sum_{y\in \xi }\left( \phi \left( x-y\right) -\frac{%
1-e^{-\varepsilon \phi \left( x-y\right) }}{\varepsilon }\right) e_{\lambda
}\left( \phi \left( x-\cdot \right) ,\xi \setminus y\right). \label{est2}
\end{align}
To estimate the right hand side of \eqref{est2}, we rewrite
\begin{align}\notag
&\quad\phi \left( x-y\right) -\frac{1-e^{-\varepsilon \phi \left( x-y\right) }}{\varepsilon }\\&=\frac{1}{\varepsilon ^{2}\phi ^{2}\left( x-y\right) }\left(
e^{-\varepsilon \phi \left( x-y\right) }+\varepsilon \phi \left( x-y\right)
-1\right) \varepsilon \phi ^{2}\left( x-y\right)\label{rewrite}
\end{align}
and consider the function
\[
f\left( t\right) =\frac{e^{-t}+t-1}{t^{2}}.
\]
Then, by an implicit differentiation, one has $f'(t)=-t^{-3}g(t)$,
where $g(t)=t+2e^{-t}+te^{-t}-2$. Next, $g'(t)=1-te^{-t}-e^{-t}$ and
$g''(t)=te^{-t}>0$ for $t>0$. Since $g'(0)=0$ we have $g'(t)>g'(0)=0$,
$t>0$ and then, since $g(0)=0$ one has $g(t)>0$, $t>0$. Therefore,
$f$ decays for $t>0$. Note also that $f(t)\to0$ as $t\to\infty$. As a result,
\begin{equation}\label{bdd}
0< f(t)< \lim_{s\to0}f(s)=\frac{1}{2},\quad t>0.
\end{equation}

Hence, by \eqref{est1}--\eqref{bdd}, one get
\begin{align*}
&\left( C^{\prime \prime }\right) ^{-\left\vert \eta \right\vert
}\bigl\vert \left( L_{\varepsilon ,\mathrm{ren}}^{\Delta }k\right) \left(
\eta \right) -\left( L_{V}^{\Delta }k\right) \left( \eta \right) \bigr\vert
\\
\leq\,& \frac{\varepsilon m}{2}\left( C^{\prime \prime }\right) ^{-\left\vert \eta
\right\vert }\sum_{x\in \eta }\int_{\Gamma _{0}}\left( C^{\prime }\right)
^{\left\vert \eta \right\vert +\left\vert \xi \right\vert }\left\Vert
k\right\Vert _{\mathcal{K}_{C^{\prime }}}\sum_{y\in \xi }\phi ^{2}\left(
x-y\right) e_{\lambda }\left( \phi \left( x-\cdot \right) ,\xi \setminus
y\right) d\lambda \left( \xi \right) \\
&+\varepsilon m \left( C^{\prime \prime }\right) ^{-\left\vert \eta
\right\vert }\sum_{x\in \eta }\int_{\Gamma _{0}}\left( C^{\prime }\right)
^{\left\vert \eta \right\vert +\left\vert \xi \right\vert }\left\Vert
k\right\Vert _{\mathcal{K}_{C^{\prime }}}\sum_{y\in \eta \setminus x}\phi
\left( x-y\right) e_{\lambda }\left( \phi \left( x-\cdot \right) ,\xi
\right) d\lambda \left( \xi \right) \\
=&\, \frac{\varepsilon m}{2}\left( \frac{C^{\prime }}{C^{\prime \prime }}\right)
^{\left\vert \eta \right\vert }C^{\prime }\left\Vert k\right\Vert _{\mathcal{%
K}_{C^{\prime }}}\sum_{x\in \eta }\int_{\Gamma _{0}}\int_{\mathbb{R}%
^{d}}\left( C^{\prime }\right) ^{\left\vert \xi \right\vert }\phi ^{2}\left(
x-y\right) dye_{\lambda }\left( \phi \left( x-\cdot \right) ,\xi \right)
d\lambda \left( \xi \right) \\
&+\varepsilon m \left( \frac{C^{\prime }}{C^{\prime \prime }}\right)
^{\left\vert \eta \right\vert }\left\Vert k\right\Vert _{\mathcal{K}%
_{C^{\prime }}}\sum_{x\in \eta }\sum_{y\in \eta \setminus x}\phi \left(
x-y\right) \int_{\Gamma _{0}}\left( C^{\prime }\right) ^{\left\vert \xi
\right\vert }e_{\lambda }\left( \phi \left( x-\cdot \right) ,\xi \right)
d\lambda \left( \xi \right) \\
\leq &\, \frac{\varepsilon m}{2}\left( \frac{C^{\prime }}{C^{\prime \prime }}\right)
^{\left\vert \eta \right\vert }C^{\prime }\left\Vert k\right\Vert _{\mathcal{%
K}_{C^{\prime }}}\beta \left\Vert \phi \right\Vert _{\infty }e^{C^{\prime
}\beta }\left\vert \eta \right\vert \\
&+\varepsilon m \left( \frac{C^{\prime }}{C^{\prime \prime }}\right)
^{\left\vert \eta \right\vert }\left\Vert k\right\Vert _{\mathcal{K}%
_{C^{\prime }}}\left\Vert \phi \right\Vert _{\infty }\left\vert \eta
\right\vert^2 e^{C^{\prime }\beta
}.
\end{align*}
For any $a\in(0,1)$, one has, cf. \eqref{max1},
\begin{equation}\label{max2}
\max_{t\geq 0}\left( t^2a^{t}\right) =\frac{4}{e^2\ln^2 a }.
\end{equation}
Therefore,
\begin{align*}
\esssup_{\eta\in\Ga_0}\,\left\vert \eta \right\vert^2\left( \frac{C^{\prime }}{C^{\prime \prime }}\right)
^{\left\vert \eta \right\vert }\leq\frac{4}{e^2(\ln C'-\ln C'')^2 }\leq \frac{4C^2}{e^2(C'-C'')^2 }.
\end{align*}

The rest of the proof is clear now.
\end{proof}

We are now in a position to prove the main result of this section.
\begin{theorem}
Let \eqref{intcond} and \eqref{bddphi} hold. Let $C>C_0>0$ be fixed
and $T_1=T_1(C_0,C)$ be given by~\eqref{T1}. Suppose also that $\bigl\{k_{0,\eps}, k_{0,V}\bigr\}_{\eps>0}\subset \K_{C_0}$ and, moreover,
\begin{equation}\label{initconv}
\lim_{\eps\rightarrow0}\|k_{0,\eps}-k_{0,V}\|_{\K_{C_0}}=0.
\end{equation}
Then, the equations \eqref{CPeps}, \eqref{CPVl} have solutions in
$\K_C$ and
\begin{equation}\label{timeconv}
\lim_{\eps\rightarrow0}\|k_{t,\eps}-k_{t,V}\|_{\K_C}=0.
\end{equation}
Moreover, if
\begin{equation}\label{initexp}
k_{0,V}(\eta)=e_\la(u_0,\eta), \quad \eta\in\Ga_0,
\end{equation}
for some $u_0\in L^{\infty}(\R^{d})$ such that
$0\leq u_0(x)\leq C_0$, a.a. $x\in\X$, then
\begin{equation}\label{timeexp}
k_{t,V}(\eta)=e_\la(u_t,\eta), \quad \eta\in\Ga_0,
\end{equation}
provided that $u_t$ is a solution to the non-linear
non-local mesoscopic equation
\begin{equation}\label{CP}
\begin{cases}
\dfrac{\partial }{\partial t}u_{t}\left( x\right) =-mu_{t}\left( x\right)
e^{-\left( u_{t}\ast \phi \right) \left( x\right) }+\uplambda \\[2mm]
u_{t}\left( x\right) \bigr\vert _{t=0}=u_0 \left( x\right),
\end{cases}
\end{equation}
with $u_t\in L^{\infty}(\R^{d})$ such that $0\leq u_t(x)\leq C$, for a.a. $x\in\X$ on the time interval $[0,T_1)$.
\end{theorem}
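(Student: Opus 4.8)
The plan is to obtain the first two assertions from the abstract comparison result Proposition~\ref{Thconv}, applied to the scale \eqref{newfamily} with initial index $s_0=1/C_0$, and then to identify $k_{t,V}$ with the coherent state $e_\la(u_t,\cdot)$ by substituting it into \eqref{CPVl} and invoking the uniqueness built into that same theorem. For the first step I would put $A_\eps:=L^\triangle_{\eps,\mathrm{ren}}$ for $\eps>0$ and $A_0:=L^\triangle_V$. Writing $s=1/C$, $s'=1/C'$, $s''=1/C''$ (so that $s''<s'\le s_0$), Proposition~\ref{normestscale} is exactly the bound \eqref{normest2} with $M=M(s,s_0):=\frac{s_0^2}{se}\bigl(me^{\beta/s}+\uplambda s_0\bigr)$, which depends continuously on $s,s_0$, and $\bigl(eM(s,s_0)\bigr)^{-1}(s_0-s)$ equals $T_1$ given by~\eqref{T1} by the same computation as in the proof of Proposition~\ref{100502} (with $\beta$ replacing $C_\phi$). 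Next, using $C''-C'=\frac{s'-s''}{s's''}$ and $s's''\le s_0^2$, Proposition~\ref{prop_estdif} (this is where the extra hypothesis \eqref{bddphi} enters) turns into \eqref{estdif} with $p=2$ and $N_\eps=\eps\max\{M_1 s_0^2,\,M_2 s_0^4\}$, so that $N_\eps\to0$ as $\eps\to0$; together with the hypothesis \eqref{initconv} all assumptions of Proposition~\ref{Thconv} are met. It follows that for each $C>C_0$ the problems \eqref{CPeps}, \eqref{CPVl} have unique solutions $k_\eps,k_V:[0,T_1)\to\K_C$, continuously differentiable on $(0,T_1)$, and $\|k_{t,\eps}-k_{t,V}\|_{\K_C}\to0$ for every $t\in[0,T_1)$; this is \eqref{timeconv}.

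For the ``moreover'' part, assume $k_{0,V}=e_\la(u_0,\cdot)$ with $0\le u_0\le C_0$, so that $|e_\la(u_0,\eta)|\le C_0^{|\eta|}$ and $e_\la(u_0,\cdot)\in\K_{C_0}$, and let $u_t$ solve \eqref{CP} with $0\le u_t\le C$ on $[0,T_1)$. The computational heart of the proof is to check that $t\mapsto e_\la(u_t,\cdot)$ solves \eqref{CPVl}. Since $\eta\in\Ga_0$ is finite, differentiating $e_\la(u_t,\eta)=\prod_{x\in\eta}u_t(x)$ in $t$ and substituting \eqref{CP} gives, for $\la$-a.a.\ $\eta$,
\[
\frac{\partial}{\partial t}e_\la(u_t,\eta)=\sum_{x\in\eta}\Bigl(-mu_t(x)e^{-(u_t\ast\phi)(x)}+\uplambda\Bigr)e_\la(u_t,\eta\setminus x);
\]
since $u_t(x)\,e_\la(u_t,\eta\setminus x)=e_\la(u_t,\eta)$, the death part collapses to $-m\,e_\la(u_t,\eta)\sum_{x\in\eta}e^{-(u_t\ast\phi)(x)}$. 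On the other side, in \eqref{LtriangleV} one has $e_\la(u_t,\eta\cup\xi)=e_\la(u_t,\eta)\,e_\la(u_t,\xi)$ for $\la$-a.a.\ $\xi$ (which is disjoint from the finite set $\eta$), whence by \eqref{intexp}
\[
\int_{\Ga_0}e_\la(u_t,\eta\cup\xi)\,e_\la\bigl(-\phi(x-\cdot),\xi\bigr)\,d\la(\xi)=e_\la(u_t,\eta)\,e^{-(u_t\ast\phi)(x)}.
\]
Therefore $\bigl(L^\triangle_V e_\la(u_t,\cdot)\bigr)(\eta)$ coincides with the right-hand side of the first display, i.e.\ $e_\la(u_t,\cdot)$ solves \eqref{CPVl} pointwise almost everywhere, with the correct initial value $e_\la(u_0,\cdot)=k_{0,V}$.

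It remains to promote this to an identity of $\K_C$-valued functions and to apply uniqueness. From \eqref{CP} one has $\partial_t u_t\le\uplambda$, hence $u_t\le u_0+\uplambda t\le C_0+\uplambda T_1$ on $[0,T_1)$; since $e^{\beta C}+\uplambda/C_0\ge\uplambda/C_0$ yields $\uplambda T_1\le\frac{C_0^2(C-C_0)}{C^2}<C-C_0$, we conclude $0\le u_t\le C_1:=C_0+\uplambda T_1<C$, so $e_\la(u_t,\cdot)\in\K_{C_1}$. Using this \emph{strict} bound together with $\|\partial_t u_t\|_\infty\le mC+\uplambda$ and the continuity of $t\mapsto u_t$ and $t\mapsto\partial_t u_t$ in $L^\infty$ (both Lipschitz in $t$, directly from \eqref{CP}), the difference quotients of $t\mapsto e_\la(u_t,\cdot)$ converge in $\K_{C_1}\subset\K_C$, because the sums over $\eta$ are controlled, via \eqref{max1} and \eqref{max2}, by finite quantities of the form $\sup_{\eta}|\eta|^{j}(c/C)^{|\eta|}$ with $c<C$; thus $t\mapsto e_\la(u_t,\cdot)$ is a $\K_C$-valued solution of \eqref{CPVl}, continuously differentiable on $(0,T_1)$, with initial value $k_{0,V}$. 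By the uniqueness part of Proposition~\ref{Thconv}, $k_{t,V}=e_\la(u_t,\cdot)$ on $[0,T_1)$, which is \eqref{timeexp}. The main obstacle is precisely this last, seemingly soft, step: the algebra showing that $e_\la(u_t,\cdot)$ satisfies the hierarchy is a few lines, but making it an honest statement in the Banach scale forces one to derive the a~priori bound $u_t<C$ and hence to use the smallness of $T_1$ — without a bound strictly below $C$ the map $t\mapsto e_\la(u_t,\cdot)$ need not even be norm-continuous.
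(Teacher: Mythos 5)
Your proof is correct and follows essentially the same route as the paper: apply Proposition~\ref{Thconv} in the scale $\mathbb{B}_s=\K_{1/s}$ using Propositions~\ref{normestscale} and~\ref{prop_estdif} for \eqref{normest2} and \eqref{estdif}, then verify by direct computation with \eqref{LPexp}, \eqref{intexp}, and \eqref{LtriangleV} that $e_\la(u_t,\cdot)$ solves \eqref{CPVl}, and conclude by uniqueness. The only genuine addition is your final paragraph: the paper invokes uniqueness without checking that $t\mapsto e_\la(u_t,\cdot)$ is actually a continuously differentiable $\K_C$-valued curve; you supply the missing a~priori bound $u_t\le C_0+\uplambda T_1<C$ (derived from $\dot u_t\le\uplambda$ and the explicit form \eqref{T1}) and use it to control the difference quotients in the $\K_C$-norm. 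This closes a small gap the paper leaves implicit, and is a worthwhile refinement rather than a different proof.
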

\begin{proof}
The existence of solutions to \eqref{CPeps} and \eqref{CPVl} in $\K_C$ was shown
before. The convergence \eqref{timeconv} may be established if we
apply Theorem~\ref{Thconv} to our model for the family~\eqref{newfamily} by using
Proposition~\ref{prop_estdif}. Therefore, we only need to prove the second
part of the statement. By uniqueness of solution to \eqref{CPVl}
it suffices to show that $k_{t,V}$, given by \eqref{timeexp}, solves
\eqref{CPVl} provided that $u_t$ solves \eqref{CP}. From \eqref{LPexp} it follows that
\begin{equation}
\frac{\partial}{\partial t}e_\la(u_t,\eta)=\sum_{x\in\eta}
\frac{\partial}{\partial t}u_t(x)e_\la(u_t,\eta\setminus x)\label{100ta}
\end{equation}
and, by \eqref{LPexp}, \eqref{intexp}, and \eqref{LtriangleV}, one has
\begin{align*}
\bigl(L_V^\triangle e_\la(u_t)\bigr)(\eta)=&-m\sum_{x\in \eta}e_\la(u_t,\eta)\int_{\Gamma _{0}}e_\la(u_t,\xi)e_{\lambda }\left( -\phi
\left( x-\cdot \right) ,\xi \right) d\lambda \left( \xi \right) \\  \label{502ta}
&+\uplambda \sum_{x\in
\eta }e_\la(u_t,\eta\setminus x)\\
=&-m\sum_{x\in
\eta }e_\la(u_t,\eta\setminus x)u_t(x)\exp\{-(u_t\ast \phi)(x)\}\\&+\uplambda \sum_{x\in
\eta }e_\la(u_t,\eta\setminus x).
\end{align*}
Comparing the right hand sides of the latter expression with \eqref{100ta} and taking
into account \eqref{CP} we conclude that $e_\la(u_t)$ solves \eqref{CPVl}.
This proves the theorem.
\end{proof}

\section{Solution to the mesoscopic equation}

In this section we will look more closely at the properties of solutions to the mesoscopic
equation \eqref{CP}. We will be interested in solutions which are bounded and continuous in space variable. Namely, for the time interval that is either $I=[0,a]$, $a>0$ or $I=\R_+:=[0,+\infty)$, a function $u_t(x)=u(t,x)$ is defined to be a solution to \eqref{CP} on $I$ iff $u\in C^1\bigl(I\to C_b(\X)\bigr)$ and it solves \eqref{CP}  for any $t\in I$. Here $C_b(\X)$ is the Banach space of bounded continuous functions on $\X$ with sup-norm denoted by $\|\cdot\|_\infty$.
%Note also, that, for $t=0$, we will consider the right derivative only.
For simplicity of notations, we continue to write $\dot{u}_t(x)$ for $\frac{\partial}{\partial t} u(t,x)$. It will cause no confusion if we use the same letter for the right derivative considered in \eqref{CP} for $t=0$.

We look as usual at an integral version of equation \eqref{CP}
\begin{equation}\label{intCP}
 u_t(x)=u_0(x)+\uplambda t - m \int_0^t u_\tau(x) e^{- (\phi\ast u_\tau)(x)} \,d\tau
\end{equation}
in the space $C\bigl(I\to C_b(\X)\bigr)$. It is easy to see that each solution to \eqref{intCP} will be continuously differentiable in $t$ in the sense of norm in $C_b(\X)$, and, hence, will be a solution to \eqref{CP}.
%It worth be noting that we do not need here Banach structure neither on the set $C^1\bigl(I\to C_b(\X)\bigr)$ nor on the set $C\bigl(I\to C_b(\X)\bigr)$.

Through this section we always suppose that $\uplambda>0$, $m>0$ and \eqref{intcond} holds.

\subsection{Existence, uniqueness, stability, and boundedness}

We begin with a general result on existence and uniqueness.
\begin{proposition}\label{exuniq}
Let $u_0\in C_b(\X)$. Then there exists $a>0$
such that the equation \eqref{CP} has a unique solution $u\in C^1\bigl([0,a]\to C_b(\X)\bigr)$.
If, additionally, $u_0\geq0$, then this solution may be extended to a nonnegative
solution $0\leq u\in C^1\bigl([0,+\infty)\to C_b(\X)\bigr)$.
\end{proposition}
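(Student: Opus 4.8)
The plan is to treat \eqref{intCP} as a fixed-point problem in $C\bigl([0,a]\to C_b(\X)\bigr)$ and invoke Banach's contraction principle for local existence and uniqueness, then bootstrap to a global nonnegative solution using the sign structure of the right-hand side.

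\textbf{Local existence and uniqueness.} First I would fix $R:=\|u_0\|_\infty+1$ and work in the closed ball $\mathcal{B}_{R,a}:=\bigl\{u\in C([0,a]\to C_b(\X)): \|u_t-u_0\|_\infty\leq 1 \text{ for all } t\in[0,a]\bigr\}$, on which $\|u_t\|_\infty\leq R$. Define the map $\Phi$ by $(\Phi u)_t(x):=u_0(x)+\uplambda t-m\int_0^t u_\tau(x)e^{-(\phi\ast u_\tau)(x)}\,d\tau$. Since $\phi\in L^1(\X)$ and $0\le e^{-(\phi\ast u_\tau)(x)}\le 1$ (here the assumption $\phi\ge 0$, hence $\phi\ast u_\tau\ge -\beta\|u_\tau\|_\infty$ needs a little care: to get the bound $\le 1$ one actually uses $u\ge 0$; for the local step without sign assumption one just bounds $e^{-(\phi\ast u_\tau)(x)}\le e^{\beta R}$), the integrand is bounded by $mR e^{\beta R}$ in sup-norm, so $\|(\Phi u)_t-u_0\|_\infty\le \uplambda a+mRe^{\beta R}a\le 1$ for $a$ small; thus $\Phi$ maps $\mathcal B_{R,a}$ into itself, and continuity in $t$ is clear. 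For the contraction estimate I would use that $z\mapsto ze^{-w}$ and the convolution $u\mapsto\phi\ast u$ are Lipschitz on the relevant bounded sets: $\bigl|u_\tau(x)e^{-(\phi\ast u_\tau)(x)}-v_\tau(x)e^{-(\phi\ast v_\tau)(x)}\bigr|\le \bigl(e^{\beta R}+R\beta e^{\beta R}\bigr)\|u_\tau-v_\tau\|_\infty$ (splitting via a telescoping term and using $|e^{-a}-e^{-b}|\le|a-b|$ together with $\|\phi\ast(u_\tau-v_\tau)\|_\infty\le\beta\|u_\tau-v_\tau\|_\infty$). Hence $\|(\Phi u)_t-(\Phi v)_t\|_\infty\le m a (1+R\beta)e^{\beta R}\sup_{\tau\in[0,a]}\|u_\tau-v_\tau\|_\infty$, which is a contraction for $a$ small enough. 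The unique fixed point solves \eqref{intCP}, and as noted after \eqref{intCP} it lies in $C^1\bigl([0,a]\to C_b(\X)\bigr)$ and solves \eqref{CP}.

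\textbf{Nonnegativity and global extension.} Now suppose $u_0\ge 0$. First I would show the local solution stays nonnegative: for fixed $x$, writing $\dot u_t(x)=-m u_t(x)e^{-(\phi\ast u_t)(x)}+\uplambda$, whenever $u_t(x)=0$ we have $\dot u_t(x)=\uplambda>0$, so $u_t(x)$ cannot cross zero downward; more rigorously, set $K(t):=m\sup_{\tau\le t}e^{-(\phi\ast u_\tau)(x)}$-type bound and use the representation $u_t(x)=e^{-\int_0^t m e^{-(\phi\ast u_\tau)(x)}d\tau}u_0(x)+\uplambda\int_0^t e^{-\int_s^t m e^{-(\phi\ast u_\tau)(x)}d\tau}\,ds\ge 0$, treating the coefficient $m e^{-(\phi\ast u_t)(x)}$ as a known bounded continuous function of $t$ once $u$ is known (variation of constants, pointwise in $x$). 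Given $u_t\ge 0$, the crucial \emph{a priori bound} follows: since $\phi\ast u_t\ge 0$, we get $e^{-(\phi\ast u_t)(x)}\le 1$, so $\dot u_t(x)\le \uplambda$ and therefore $0\le u_t(x)\le \|u_0\|_\infty+\uplambda t$ for all $x$, i.e. the sup-norm grows at most linearly and stays finite on any bounded time interval. Finally I would run the usual continuation argument: the local existence time $a$ in the contraction step depends only on $R=\|u_0\|_\infty+1$, i.e. only on the sup-norm of the current initial datum; since that sup-norm cannot blow up in finite time by the linear bound, the solution extends to all of $[0,+\infty)$, remaining in $C^1\bigl([0,+\infty)\to C_b(\X)\bigr)$ and nonnegative.

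\textbf{Main obstacle.} The only genuinely delicate point is justifying nonnegativity and the variation-of-constants representation rigorously at the level of $C_b(\X)$-valued functions rather than just pointwise — one must check that $t\mapsto e^{-(\phi\ast u_t)(x)}$ is, for each $x$, a continuous (indeed $C^1$) scalar function so that the scalar linear ODE theory applies pointwise, and then that the resulting pointwise formula is consistent with the $C_b$-valued solution already constructed (which it is, by uniqueness of the scalar Cauchy problem). Everything else — the contraction estimates, the linear-growth a priori bound, and the continuation — is routine once $\phi\in L^1$ and $\phi\ge 0$ are in hand.
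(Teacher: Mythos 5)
Your proposal is correct, and in broad outline it parallels the paper's (Picard-type local step, then positivity, then a global bound), but the route through the global part is genuinely different and worth comparing.

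For local well-posedness the paper invokes an abstract Picard--Lindel\"of theorem in Banach spaces after checking $f(u)=-mu\,e^{-\phi\ast u}+\uplambda$ is locally Lipschitz on $C_b(\X)$, whereas you run the Banach contraction directly on the integral form \eqref{intCP}; these are equivalent in content. The real divergence is in the global step. The paper \emph{first} obtains global existence for \emph{arbitrary} $u_0\in C_b(\X)$ by replacing $\phi\ast u$ with $\phi\ast|u|$ in the exponent (their auxiliary equation \eqref{DS2}), which gives the sign-free growth bound $\|g(u)\|_\infty\leq\uplambda+m\|u\|_\infty$ and allows them to invoke a comparison theorem (Ladas--Lakshmikantham, Thm.~5.6.1) against the scalar ODE $\dot r=\uplambda+mr$. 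Only \emph{afterwards} do they prove nonnegativity (by a contradiction at $t_0=\inf\{t:u(t,x_0)<0\}$, using $\dot u(t_0,x_0)=\uplambda>0$), which lets them replace $|u|$ by $u$ again. You instead prove nonnegativity of the local solution \emph{first} via the scalar variation-of-constants formula, and then use it to get the a priori bound $\dot u_t\leq\uplambda$, hence $\|u_t\|_\infty\leq\|u_0\|_\infty+\uplambda t$, and close with a continuation argument. Your ordering is slightly tighter logically (no auxiliary equation needed) and your variation-of-constants formula gives positivity more directly than the paper's infimum-of-bad-times argument; the paper's route buys global existence for signed $u_0$ as a by-product of their trick, though the proposition does not ask for that.

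Two small points to tighten in your write-up. First, the inequality $|e^{-a}-e^{-b}|\leq|a-b|$ you cite holds only for $a,b\geq 0$; since $\phi\ast u_\tau$ may be negative when $u_\tau$ is signed, the correct estimate is $|e^{-a}-e^{-b}|\leq e^{\beta R}|a-b|$ via the mean value theorem — your final Lipschitz constant $(1+R\beta)e^{\beta R}$ already reflects this, so only the parenthetical explanation is off. Second, the continuation step should be phrased via the maximal interval of existence: if $T^\ast<\infty$ were maximal, the a priori bound $\|u_t\|_\infty\leq\|u_0\|_\infty+\uplambda T^\ast$ on $[0,T^\ast)$ gives a uniform lower bound on the contraction time starting from any $t_0<T^\ast$, contradicting maximality. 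As stated, "the local existence time depends only on $R=\|u_0\|_\infty+1$'' glosses over the fact that $R$ grows along the solution; the argument is fine, it just needs this standard wrapper.
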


\begin{proof}
Let us consider, for arbitrary $a>0$, $b>0$, the rectangle
\[
R:=\bigl\{(t,u):|t|\leq a, \|u-u_0\|_\infty\leq b,\, u\in C_b(\X) \bigr\}.
\]
Set
$
f(u):=-mue^{-u*\phi}+\uplambda.
$
Since $\|\phi*u\|_\infty\leq \beta \|u\|_\infty$ it follows for $(t,u)\in
R$ that
\begin{align}
\|f(u)\|_\infty &\leq \uplambda+m\|u\|_\infty \exp\bigl\{
\beta\|u\|_\infty\bigr\}\notag\\
& \leq \uplambda +m \bigl(\|u_0\|_\infty+b\bigr)\exp\bigl\{\beta \bigl(\|u_0\|_\infty+b\bigr)\bigr\}=:M.
\end{align}
Next, for any $(t,u_1)\in R, (t,u_2)\in R$, one has
\begin{align}
\|f(u_1)-f(u_2)\|_\infty&\leq m\|u_1 e^{-\phi*u_1}-u_2 e^{-\phi*u_1}\|_\infty
+m\|u_2 e^{-\phi*u_1}-u_2 e^{-\phi*u_2}\|_\infty\notag\\&\leq m\exp\bigl\{\beta\|u_1\|_\infty\bigr\}
\|u_1-u_2\|_\infty\notag\\&\qquad+m\|u_2\|_\infty\|e^{-\phi*u_2}\|_\infty\|e^{-\phi*(u_1-u_2)}-1\|_\infty.\label{est:1}
\end{align}
We denote, for fixed $|t|\leq a$, $x\in\X$,
\[
h_{t,x}(s)=\exp\bigl\{-s\bigl(\phi*(u_1(t,\cdot)-u_2(t,\cdot))\bigr)(x)\bigr\},
\quad s\in[0,1],
\] then
\begin{align*}
&|h_{t,x}(1)-h_{t,x}(0)|\leq 1\cdot\sup_{s\in[0,1]}|h'_{t,x}(s)|\\
\leq&\,\Bigl|\bigl(\phi*(u_1(t,\cdot)-u_2(t,\cdot))\bigr)(x)\Bigr|
\exp\Bigl\{\Bigl|\bigl(\phi*(u_1(t,\cdot)-u_2(t,\cdot))\bigr)(x)\Bigr|\Bigr\}\\
\leq& \beta \|u_1-u_2\|_\infty\exp\bigl\{\beta \|u_1-u_2\|_\infty\bigr\}.
\end{align*}
and we may now proceed to conclude from \eqref{est:1} that
\begin{align*}
&\quad\|f(u_1)-f(u_2)\|_\infty\\&\leq m\exp\bigl\{\beta(b+\|u_0\|_\infty)\bigr\}
\|u_1-u_2\|_\infty\\
&\qquad\quad+m(b+\|u_0\|_\infty) \exp\bigl\{\beta(b+\|u_0\|_\infty)\bigr\}
\beta \|u_1-u_2\|_\infty\exp\bigl\{\beta \|u_1-u_2\|_\infty\bigr\}\\
&\leq m\exp\bigl\{\beta(b+\|u_0\|_\infty)\bigr\}
 \bigl( 1 + \beta(b+\|u_0\|_\infty)e^{2\beta b} \bigr)\|u_1-u_2\|_\infty.
\end{align*}
Therefore, $f$ is a locally Lipschizian function. Thus, by e.g. \cite[Theorem~5.1.1]{LL1972},
for $a=\frac{b}{M}$,
there exists a unique strongly continuously differentiable function $u:[-a,a]\to C_b(\X)$ which satisfies \eqref{CP}.

To prove the existence of a global nonnegative solution to \eqref{CP} (on the whole $\R_+$), let us consider
the following equation
\begin{equation} \label{DS2}
\dot{u} = g(u),\qquad u(0):=u_0, \qquad g(u):= \uplambda - m u e^{- \phi
\ast|u|}.
\end{equation}
Then, clearly,
\[
\|g(u)\|_\infty \leq \uplambda + m\|u\|_\infty\bigl\| e^{- \phi
\ast|u|}\bigr\|_\infty\leq\uplambda + m\|u\|_\infty.
\]
We are now in a position to apply \cite[Theorem 5.6.1]{LL1972}. Namely, we set $h(r):=\uplambda+mr$ (which is an increasing function in $r$) and consider the following equation
\[
\dot{r}=h(r), \quad r(0):=\|u_0\|_\infty.
\]
Since the latter equation has a unique solution on the whole $\R_+$,
then the largest interval of existence of a solution to \eqref{DS2}
is also $\R_+$.

Next, suppose that $u_0(x)\geq0$, for all $x\in\X$.
Let us prove that the solution $u(t,x)$ to \eqref{DS2} is
also nonnegative, for any $t\geq0$, $x\in\X$. Suppose that there
is a point $x_0\in\X$ such that $u(\cdot,x_0)$ takes negative values for some $t$. Then there exists
$t_0:=\inf\{t>0\mid u(t,x_0)<0\}\geq 0$. Due to continuity of $u(\cdot,x_0)$, one has that
$u(t_0,x_0)=0$. Then, by \eqref{DS2},
\begin{equation}\label{poee}
 \dot{u}(t_0,x_0)=\uplambda>0.
\end{equation}
Let now $\{t_n\}_{n\in\N}\subset\{t>0\mid u(t,x_0)<0\}$ be such that $t_n\downarrow t_0$.
Then
\[
\dot{u}(t_0,x_0)=\lim_{n\to\infty}\frac{u(t_n,x_0)-u(t_0,x_0)}{t_n-t_0}=\lim_{n\to\infty}\frac{u(t_n,x_0)}{t_n-t_0}\leq 0,
\]
that is in contradiction to \eqref{poee}.
Hence, \eqref{DS2} has a global
nonnegative solution provided that $u_0$ is nonnegative. But, clearly,
this solution solves \eqref{CP} with the same initial condition.

It is worth noting that the arguments above about the positivity preservation of a solution may be also applied directly to the equation \eqref{CP}.
\end{proof}

In the sequel we will be concerned with properties of a non-negative solution
to \eqref{CP} on $[0,+\infty)$.
Consider the stationary space-homogeneous equation for \eqref{CP}. Namely,
\begin{equation}\label{stateqn}
 \uplambda =mue^{-\beta u}.
\end{equation}
To deal with this and subsequent equations it is necessary to mention the properties of the following function
\begin{equation}\label{dopfunc}
p\left( r\right) =re^{-r}, \quad r\geq0.
\end{equation}
It is immediate that
$0\leq p(r)\leq p(1)=\frac{1}{e}$ and, for any $\uplambda \in(0,\frac{1}{e})$, the equation $p(r)=\uplambda$ has two solutions $0<r_1<1<r_2$. The equation $\frac{\uplambda\beta}{m}=\beta u e^{-\beta u}$ (which is equivalent to \eqref{stateqn}) has also two solutions, say $\kappa_1$ and $\kappa_2$, such that $0<\beta \kappa_1<1<\beta \kappa_2$ provided $\frac{\uplambda\beta}{m}<\frac{1}{e}$.
Summarizing, we conclude that under condition
\begin{equation}\label{smallparam-s}
\uplambda <\frac{m}{\beta e},
\end{equation}
the equation \eqref{CP} has two positive equilibrium solutions $u(t,x) \equiv \kappa_{1}$ and $u(t,x) \equiv \kappa_{2}$ such that
\begin{equation}\label{root-est}
0<\kappa_1 < \frac{1}{\beta} < \kappa_2.
\end{equation}
The properties of equilibrium solutions are established by our next proposition.
\begin{proposition}
Let condition \eqref{smallparam-s} holds.
The equilibrium solution $u^\ast (t,x) \equiv
\kappa_1\in\bigl(0,\frac{1}{\beta}\bigr)$ is uniformly stable in the sense of Lyapunov, i.e., for any
$\varepsilon >0 $, there exists $\delta>0$ such that,
for any $t_1\geq0$, the inequality
\begin{equation}\label{delta}
\| u(t_1) - u^\ast \|_\infty < \delta
\end{equation}
implies
\begin{equation}
\| u(t) - u^\ast \|_\infty < \varepsilon, \quad t\geq t_1.
\end{equation}
Moreover, $u^*$ is asymptotically stable in the sense of Lyapunov,
i.e. if the inequality \eqref{delta} holds for some $t_1\geq0$ and
$\delta>0$, then
\begin{equation}
\lim_{t\to\infty} \|u(t)-u^*\|_\infty =0.
\end{equation}
\end{proposition}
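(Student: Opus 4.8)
The plan is to control $u_t$ from above and from below by solutions of the scalar ordinary differential equation
\[
\dot y=G(y):=\uplambda-mye^{-\beta y},\qquad y\ge0,
\]
and to read off both assertions from the phase line of this ODE near $\kappa_1$. First note that the stationary equation \eqref{stateqn} is precisely $G(\kappa_1)=G(\kappa_2)=0$, and writing $G(y)=\tfrac{m}{\beta}\bigl(\tfrac{\uplambda\beta}{m}-p(\beta y)\bigr)$ with $p$ as in \eqref{dopfunc} and using that $p$ increases on $(0,1)$, decreases on $(1,\infty)$, together with $\beta\kappa_1<1<\beta\kappa_2$, one obtains $G>0$ on $(0,\kappa_1)$ and $G<0$ on $(\kappa_1,\kappa_2)$. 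Hence $\kappa_1$ is an equilibrium of $\dot y=G(y)$ whose basin contains $(0,\kappa_2)$; more precisely, for every closed interval $[\kappa_1-\delta,\kappa_1+\delta]\subset(0,\kappa_2)$ the solution of $\dot y=G(y)$ with initial value in that interval remains in it for all $t\ge0$ and converges monotonically to $\kappa_1$.

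The core step is to show that $\overline u(t):=\sup_{x}u_t(x)$ and $\underline u(t):=\inf_{x}u_t(x)$ are a subsolution and a supersolution of $\dot y=G(y)$, i.e.\ $D^{+}\overline u(t)\le G(\overline u(t))$ and $D_{+}\underline u(t)\ge G(\underline u(t))$ for $t\ge0$. Both functions are locally Lipschitz in $t$, since $|\overline u(t)-\overline u(s)|\le\|u_t-u_s\|_\infty\le\int_s^t\|\dot u_\tau\|_\infty\,d\tau$ and $\tau\mapsto\dot u_\tau$ is continuous into $C_b(\X)$. To bound $D^{+}\overline u(t_0)$, for small $h>0$ pick $x_h$ with $u_{t_0+h}(x_h)>\overline u(t_0+h)-h^2$; since $\|u_\tau-u_{t_0+h}\|_\infty=O(h)$ on $[t_0,t_0+h]$ one gets $u_\tau(x_h)\ge\overline u(\tau)-\rho(h)$ with $\rho(h)\to0$, and, using $\phi\ge0$ and $0\le u_\tau\le\overline u(\tau)$, also $(u_\tau\ast\phi)(x_h)\le\beta\overline u(\tau)$, so that
\[
\dot u_\tau(x_h)=\uplambda-mu_\tau(x_h)e^{-(u_\tau\ast\phi)(x_h)}\le\uplambda-m\bigl(\overline u(\tau)-\rho(h)\bigr)e^{-\beta\overline u(\tau)}\le G\bigl(\overline u(\tau)\bigr)+m\rho(h).
\]
Integrating over $[t_0,t_0+h]$, adding back $h^2$, dividing by $h$ and letting $h\to0^{+}$ gives $D^{+}\overline u(t_0)\le G(\overline u(t_0))$; the symmetric argument with near-minimizers and $(u_\tau\ast\phi)(x)\ge\beta\underline u(\tau)$ gives the inequality for $\underline u$. (Here I use $u_t\ge0$, which holds for the non-negative global solution under consideration.)

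Now the classical comparison theorem for first-order differential inequalities (with $G$ locally Lipschitz) applies: if $\overline y,\underline y$ solve $\dot y=G(y)$ with $\overline y(t_1)=\overline u(t_1)$ and $\underline y(t_1)=\underline u(t_1)$, then for all $t\ge t_1$ and all $x$,
\[
\underline y(t)\le\underline u(t)\le u_t(x)\le\overline u(t)\le\overline y(t),\qquad\text{so}\qquad \|u_t-u^\ast\|_\infty\le\max\bigl(\overline y(t)-\kappa_1,\ \kappa_1-\underline y(t)\bigr).
\]
Given $\varepsilon>0$ set $\delta:=\min\bigl(\varepsilon,\tfrac12\min(\kappa_1,\kappa_2-\kappa_1)\bigr)$, which depends only on $\varepsilon$. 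If $\|u(t_1)-u^\ast\|_\infty<\delta$, then $\overline u(t_1),\underline u(t_1)\in(\kappa_1-\delta,\kappa_1+\delta)\subset(0,\kappa_2)$, so by the phase-line analysis $\overline y(t),\underline y(t)$ stay in $(\kappa_1-\delta,\kappa_1+\delta)$ for all $t\ge t_1$ and both tend to $\kappa_1$ as $t\to\infty$. The displayed bound then gives $\|u_t-u^\ast\|_\infty<\delta\le\varepsilon$ for all $t\ge t_1$ (uniform Lyapunov stability) and $\|u_t-u^\ast\|_\infty\to0$ as $t\to\infty$ (asymptotic stability).

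I expect the only genuinely delicate point to be the rigorous derivation of the two differential inequalities for $\overline u$ and $\underline u$ from the mere fact that $u\in C^1\bigl([0,\infty)\to C_b(\X)\bigr)$ — the Dini-derivative and near-extremizer bookkeeping above — since $\X$ is non-compact and the supremum and infimum of $u_t$ need not be attained. Everything after that is the elementary one-dimensional phase-line analysis of $\dot y=G(y)$ together with the standard scalar comparison theorem.
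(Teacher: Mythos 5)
Your proof is correct, and it takes a genuinely different route from the paper's. The paper linearizes the right-hand side $f(u)=-mue^{-\phi\ast u}+\uplambda$ around the equilibrium $\kappa_1$, computes the Fr\'echet derivative $f'(\kappa_1)$, shows its spectrum lies strictly in the left half-plane (using that the jump generator $(Av)(x)=\uplambda\int\phi(x-y)(v(y)-v(x))\,dy$ has spectrum in the disc $\{|z+\uplambda\beta|\le\uplambda\beta\}$, combined with $\beta<1/\kappa_1$ from \eqref{root-est}), and then invokes the abstract spectral-stability theorem from Daleckii--Krein. You instead squeeze $u_t$ between two solutions of the scalar ODE $\dot y=G(y)$, $G(y)=\uplambda-mye^{-\beta y}$, by showing that the spatial supremum and infimum of $u_t$ are respectively a sub- and a supersolution, and then read off stability from the phase line. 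The two approaches buy different things: the linearization argument is short, standard, and cleanly modular, but it is inherently local and requires the abstract semigroup stability theorem as a black box. Your comparison argument is more elementary and explicit, and, as a by-product, identifies a large explicit basin of attraction --- any nonnegative initial datum with $\sup_x u_0(x)<\kappa_2$ is driven to $\kappa_1$ --- which is strictly more than the proposition asks for and which no local linearization can give. The only technically delicate point in your version, as you correctly flag, is deriving the Dini-derivative inequalities $D^+\overline u\le G(\overline u)$ and $D_+\underline u\ge G(\underline u)$ when the extrema are not attained; your near-extremizer argument handles this correctly, using the local Lipschitz continuity of $t\mapsto\overline u(t)$ (inherited from $u\in C^1([0,\infty)\to C_b(\X))$), nonnegativity of $u_t$ on $[t_1,\infty)$ (which follows from $\underline u(t_1)\ge\kappa_1-\delta>0$ and Proposition~\ref{exuniq}), and the two-sided bound $\beta\underline u(\tau)\le(u_\tau\ast\phi)(x)\le\beta\overline u(\tau)$.
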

\begin{proof}
It is well known, see e.g. \cite[Chapter VII]{DK1974}, that the statement will be proved by showing that the spectrum of the operator
$$
f'(u^*)v:=\frac{d}{ds}f(u^*+sv)|_{s=0},\quad v\in C_b(\X)
$$
belongs to the interior of the left half plane. Here, as in the proof of Proposition \ref{exuniq}, $
f(u)=-mue^{-u*\phi}+\uplambda.$ By definition, for
any $u,v\in C_b(\X)$,
\begin{align*}
f'(u)v&= -\Bigl( m v e^{- \phi
\ast (u+sv)}-m(u+sv)e^{- \phi
\ast (u+sv)}(\phi*v)\Bigr)\Bigr|_{s=0}\\&=-mve^{-\phi*u}+mue^{-\phi*u}(\phi*v).
\end{align*}
Taking into account that $u\equiv\kappa_1$ solves \eqref{stateqn}, we can assert that
\begin{align*}
\bigl(f'(\kappa_1)v\bigr)(x)&=-me^{-\beta \kappa_1}v(x)+m\kappa_1e^{-\beta
\kappa_1}(\phi*v)(x)\\&=\uplambda\Bigl((\phi*v)(x)-\frac{1}{\kappa_1}v(x)\Bigr)\\
&=\uplambda \int_\X \phi(x-y)\bigl( v(y)-v(x)\bigr)\,dx+\uplambda \Bigl(\beta-\frac{1}{\kappa_1}\Bigr)
v(x).
\end{align*}
%Since $\|\phi\ast v\|_\infty\leq \beta \|v\|_\infty$ and one has the equality for any constant $v$,
It is a simple matter to check that the spectrum of jump generator $(Av)(x):=\uplambda \int_\X \phi(x-y)\bigl( v(y)-v(x)\bigr)\,dy$ belongs to the circle $\{z\in\mathbb{C}: |z+\uplambda\beta|\leq \uplambda\beta\}$. The last claim is due to the fact that $\|\phi\ast v\|_\infty\leq \beta \|v\|_\infty$. From \eqref{root-est} we have $\beta<\frac{1}{\kappa_1}$ and, consequently,
the spectrum of $f'(\kappa_1)$ belongs to the interior of the left half plane which proves the statement. This finishes the proof.
\end{proof}

The reminder of this subsection will be devoted to the following refinement of Proposition \ref{exuniq}.

\begin{theorem}\label{Thbdd}
Let (cf. \eqref{smallparam-s})
\begin{equation}\label{smallparam}
 \uplambda \leq\frac{m}{\beta e}
\end{equation}
and let $\kappa_{1},\,\kappa_{2}$ be constant solutions to \eqref{stateqn}, in particular, $\kappa_1=\kappa_2$ if  \eqref{smallparam} is an equality.
Suppose that $0\leq u_0\in C_b(\X)$, with
$\Vert u_0\Vert_\infty\leq \kappa_2$. Then the equation \eqref{CP} has
a unique solution $0\leq u_t\in C_b(\X)$, $t\geq0$, such that
$\Vert u_t\Vert_\infty\leq \kappa_2$, for all $t\geq0$.

Moreover, for an arbitrary $c\in[\kappa_1, \kappa_2]$, the condition $\kappa_1\leq u_0(x)\leq c$, $x\in\X$, yields $\kappa_1\leq u_t(x)\leq c$, $x\in\X$.
\end{theorem}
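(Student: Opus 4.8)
The plan is to get the result from Proposition~\ref{exuniq} together with a comparison principle, so the heart of the matter is showing that the constant functions $\kappa_1$ and $\kappa_2$ (and the constant $c$) act as barriers for the evolution. First I would record the elementary fact that for the function $p(r)=re^{-r}$ on $[0,\kappa_2]$ one has: $p$ is increasing on $[0,1/\beta]$ after rescaling, and the map $r\mapsto \uplambda - m r e^{-\beta r}$ is $\geq 0$ for $r\in[0,\kappa_1]$, $\leq 0$ for $r\in[\kappa_1,\kappa_2]$, and $=0$ exactly at $r=\kappa_1,\kappa_2$; this is immediate from the discussion preceding \eqref{root-est} and from \eqref{smallparam}. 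These sign facts are what make the constants sub-/supersolutions of the space-homogeneous ODE $\dot r = \uplambda - mre^{-\beta r}$.

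Next I would establish the key comparison statement: if $0\le u_0(x)\le v_0(x)$ pointwise with $u_0,v_0\in C_b(\X)$, and $u_t,v_t$ are the corresponding nonnegative solutions from Proposition~\ref{exuniq} on $[0,\infty)$, then $u_t(x)\le v_t(x)$ for all $t\ge0$, $x\in\X$. The argument mimics the positivity proof in Proposition~\ref{exuniq}: set $w_t=v_t-u_t$, note $w_0\ge0$; if $w$ ever goes negative, let $t_0$ be the first time (at some $x_0$, using continuity) that $w_{t_0}(x_0)=0$ while $w$ is negative at times $t_n\downarrow t_0$ at $x_0$. Compute $\dot w_{t_0}(x_0) = m\bigl(u_{t_0}(x_0)e^{-(\phi*u_{t_0})(x_0)} - v_{t_0}(x_0)e^{-(\phi*v_{t_0})(x_0)}\bigr)$. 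At $t_0,x_0$ we have $u_{t_0}(x_0)=v_{t_0}(x_0)$, so this equals $m u_{t_0}(x_0)\bigl(e^{-(\phi*u_{t_0})(x_0)}-e^{-(\phi*v_{t_0})(x_0)}\bigr)\ge 0$ because $\phi\ge0$ forces $\phi*u_{t_0}\le\phi*v_{t_0}$ and the exponential is decreasing — hence $\dot w_{t_0}(x_0)\ge0$, contradicting the difference-quotient limit being $\le0$ exactly as in \eqref{poee}. (One must be slightly careful: if $u_{t_0}(x_0)=v_{t_0}(x_0)=0$ the bracket is still nonnegative, and if it happens that $\dot w_{t_0}(x_0)=0$ one differentiates once more or, more simply, replaces $v$ by $v+\delta e^{\uplambda t}$-type strict supersolution and lets $\delta\to0$; I would use the standard trick of comparing with a strict supersolution to avoid the degenerate-equality annoyance.) The main obstacle is precisely this non-strict-inequality issue at the contact point, and the cleanest fix is to prove comparison first for strict inequalities and then pass to the limit, using continuous dependence on initial data (which follows from the local Lipschitz bound in Proposition~\ref{exuniq}).

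With comparison in hand, I would finish as follows. The constant function $\kappa_2$ solves \eqref{CP} (it solves \eqref{stateqn}, hence is an equilibrium), so by Proposition~\ref{exuniq} it is \emph{the} global nonnegative solution with that initial datum; since $0\le u_0(x)\le\kappa_2$, comparison gives $0\le u_t(x)\le\kappa_2$ for all $t\ge0$, and uniqueness on $[0,a]$ from Proposition~\ref{exuniq} upgrades (by the usual continuation argument, the a priori bound $\|u_t\|_\infty\le\kappa_2$ preventing blow-up) to a unique global solution with $\|u_t\|_\infty\le\kappa_2$. For the last assertion, note that for $c\in[\kappa_1,\kappa_2]$ the constant function $c$ is a supersolution of \eqref{CP}: $\dot c - (\uplambda - mce^{-\beta c}) = -( \uplambda - mce^{-\beta c})\ge0$ by the sign facts from the first step; and $\kappa_1$ is an (actual) solution, hence a subsolution. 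Then $\kappa_1\le u_0(x)\le c$ together with comparison against the subsolution $\kappa_1$ on the left and the supersolution $c$ on the right yields $\kappa_1\le u_t(x)\le c$ for all $t\ge0$, $x\in\X$. The only thing to check carefully here is that comparison as proved above extends from "true solutions" to "solution vs. supersolution", which it does verbatim because the contradiction argument only used that the larger function satisfies $\dot v\ge \uplambda - mve^{-\phi*v}$.
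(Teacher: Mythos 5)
Your overall architecture — treat $\kappa_1$, $\kappa_2$, and any $c\in[\kappa_1,\kappa_2]$ as sub-/supersolutions (or equilibria) and squeeze the solution between them by a comparison principle — is sound and would prove the theorem. It is also genuinely different from the paper's proof. The paper does not establish a comparison principle first; it runs a Picard-type fixed point in the explicitly given integral form \eqref{Phi}, proves that the operator $\Phi$ maps the convex set $B_{c,T}^{+}$ into itself for $c\in[\kappa_1,\kappa_2]$, that the translated set $\{\kappa_1\le w\le c\}$ is also invariant for $\Phi$, and gets the statement by contraction plus time-continuation. The comparison principle in the paper (Remark after Theorem \ref{Thbdd}) is deduced \emph{afterwards} as a corollary of the monotonicity of $\Phi$ in both the iterate and the initial datum; it is not the tool used to prove the theorem.

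There is, however, a genuine gap in the way you establish your comparison principle. You fix $x_0$, let $t_0$ be the first time $w(\cdot,x_0)$ hits zero, and then claim $\phi\ast u_{t_0}(x_0)\le\phi\ast v_{t_0}(x_0)$ ``because $\phi\ge 0$''. Nonnegativity of $\phi$ alone gives nothing: one needs $u_{t_0}(y)\le v_{t_0}(y)$ for $y$ in the support of $\phi(x_0-\cdot)$, i.e.\ a \emph{spatially global} ordering at time $t_0$, whereas the first-contact construction at the fixed point $x_0$ only controls the trajectory $t\mapsto w(t,x_0)$ up to $t_0$. Nothing prevents $w$ from having gone negative at some other $x_1$ at an earlier time. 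If, to repair this, you replace $t_0$ by the global first time $\sup\{T: w_t\ge 0 \text{ on }\X \text{ for all } t\le T\}$, you do recover the global ordering at $t_0$, but then the contact point $x_0$ with $w_{t_0}(x_0)=0$ may fail to exist, because $\inf_x w_{t_0}(x)=0$ need not be attained on the unbounded domain $\X$. The strict-supersolution regularization $v\mapsto v+\delta e^{Kt}$ that you propose fixes the degenerate-derivative issue, but not this attainment/globality issue; both have to be addressed, e.g.\ by working with the quantity $\|w_t^-\|_\infty$ and a Dini-derivative/Gronwall argument, or simply by proving comparison from the integral formulation \eqref{Phi} via monotone Picard iteration as the paper's remark does. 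As written, the proof is incomplete at this central step.
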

\begin{proof}
Consider, for a fixed $T>0$, the Banach
space
\[
X_{T}=C\left( \left[ 0,T\right] \rightarrow C_b(\mathbb{R}^{d})\right)
\]%
with norm%
\[
\left\Vert w\right\Vert _{T}=\max_{t\in \left[ 0,T\right] }\left\Vert
w_{t}\right\Vert _{\infty }.
\]
For an arbitrary $c>0$, let $B_{c,T}^{+}$ be the set of all $w\in X_{T}$ such that $%
\left\Vert w\right\Vert _{T}\leq c$ and $w_{t}\left( x\right) \geq 0$ for
all $t\in \left[ 0,T\right] $ and for a.a. $x\in \mathbb{R}^{d}$. Clearly, $%
B_{c,T}^{+}$ with a metric induced by the norm $\left\Vert \cdot \right\Vert
_{T}$ constitutes a complete metric space.

For any $v\in $ $B_{c,T}^{+}$ and for any $0\leq u_0 \in C_b(\mathbb{R%
}^{d})$ with $\left\Vert u_0 \right\Vert _{\infty }\leq c$,
we consider a mapping $u=\Phi v$ which maps $v$ into the solution of the linear equation%
\[
\begin{cases}
\dfrac{\partial }{\partial t}u_{t}\left( x\right) =-mu_{t}\left( x\right)
e^{-\left( v_{t}\ast \phi \right) \left( x\right) }+\uplambda \\[2mm]
u_{t}\left( x\right) \bigr\vert _{t=0}=u_0 \left( x\right).
\end{cases}
\]%
Namely,%
\begin{align}
\notag\left( \Phi v\right) _{t}\left( x\right) =&\,\exp \left\{
-m\int_{0}^{t}e^{-\left( v_{s}\ast \phi \right) \left( x\right) }ds\right\}
u_0 \left( x\right) \\&+\uplambda \int_{0}^{t}\exp \left\{ -m\int_{\tau }^{t}e^{-\left(
v_{s}\ast \phi \right) \left( x\right) }ds\right\} d\tau.\label{Phi}
\end{align}
Since $0\leq v_{s}\left( x\right) \leq c$, $x\in\R^{d}$, and $\phi \geq 0$ it follows that
\[
-e^{-\left( v_{s}\ast \phi \right) \left( x\right) }\leq -e^{-\beta c}.
\]%
Therefore, taking into account that initial function $u_{0}$ satisfies $0\leq u_0 \left( x\right) \leq c$ we
get%
\begin{align}
0 &\leq \left( \Phi v\right) _{t}\left( x\right) \leq \exp \left\{
-me^{-\beta c}t\right\} c+\uplambda \int_{0}^{t}\exp \left\{ -me^{-\beta c}\left( t-\tau
\right) \right\} d\tau \notag\\
&=\exp \left\{ -me^{-\beta c}t\right\} c+\frac{\uplambda }{m} e^{\beta c}\left( 1-\exp \left\{
-me^{-\beta c}t\right\} \right) \notag\\
&\leq \max \left\{ c,\frac{\uplambda }{m} e^{\beta c}\right\}.\label{inside}
\end{align}%
By properties of the function \eqref{dopfunc} and inequality \eqref{smallparam} we conclude that
\[
\beta ce^{-\beta c}\geq \frac{\beta\uplambda }{m}, \quad c\in [\kappa_1,\kappa_2].
\]
Hence, for such $c$, $\max \left\{ c,\frac{\uplambda }{m} e^{\beta c}\right\}=c$, and \eqref{inside} shows that $\Phi :B_{c,T}^{+}\rightarrow B_{c,T}^{+}$.

Next, let us show that $\Phi $ is a contraction mapping on $B_{c,T}^{+}$. Let $v,w\in B_{c,T}^{+}$. Then, using an elementary inequality $\left\vert e^{-a}-e^{-b}\right\vert \leq \left\vert a-b\right\vert $ for $a,b\geq 0$, we obtain
\begin{eqnarray*}
&&\left\vert \left( \Phi v\right) _{t}\left( x\right) -\left( \Phi w\right)
_{t}\left( x\right) \right\vert \\
&\leq &\left\vert \exp \left\{ -m\int_{0}^{t}e^{-\left( v_{s}\ast \phi
\right) \left( x\right) }ds\right\} -\exp \left\{ -m\int_{0}^{t}e^{-\left(
w_{s}\ast \phi \right) \left( x\right) }ds\right\} \right\vert u_0 \left(
x\right) + \\
&&+\uplambda \int_{0}^{t}\left\vert \exp \left\{ -m\int_{\tau }^{t}e^{-\left(
v_{s}\ast \phi \right) \left( x\right) }ds\right\} -\exp \left\{ -m\int_{\tau
}^{t}e^{-\left( w_{s}\ast \phi \right) \left( x\right) }ds\right\}
\right\vert d\tau \\
&\leq &cm\left\vert \int_{0}^{t}e^{-\left( v_{s}\ast \phi \right) \left(
x\right) }ds-\int_{0}^{t}e^{-\left( w_{s}\ast \phi \right) \left( x\right)
}ds\right\vert \\&&+\uplambda m\int_{0}^{t}\left\vert \int_{\tau }^{t}e^{-\left( v_{s}\ast
\phi \right) \left( x\right) }ds-\int_{\tau }^{t}e^{-\left( w_{s}\ast \phi
\right) \left( x\right) }ds\right\vert d\tau \\
&\leq &cm\int_{0}^{t}\left\vert \left( v_{s}\ast \phi \right) \left( x\right)
-\left( w_{s}\ast \phi \right) \left( x\right) \right\vert
ds\\&&+\uplambda m\int_{0}^{t}\int_{\tau }^{t}\left\vert \left( v_{s}\ast \phi \right)
\left( x\right) -\left( w_{s}\ast \phi \right) \left( x\right) \right\vert
dsd\tau.
\end{eqnarray*}%
Next,
\[
\left\vert \left( v_{s}\ast \phi \right) \left( x\right) -\left( w_{s}\ast
\phi \right) \left( x\right) \right\vert \leq \int_{\mathbb{R}%
^{d}}\left\vert v_{s}\left( y\right) -w_{s}\left( y\right) \right\vert \phi
\left( x-y\right) dy\leq \beta \left\Vert v-w\right\Vert _{T}.
\]%
Therefore,%
\[
\left\vert \left( \Phi v\right) _{t}\left( x\right) -\left( \Phi w\right)
_{t}\left( x\right) \right\vert \leq c\beta mT\left\Vert v-w\right\Vert
_{T}+\uplambda \beta m \allowbreak \frac{T^{2}}{2}\left\Vert v-w\right\Vert _{T}.
\]%
As a result, for $T>0$ such that
\begin{equation}\label{timeint}
 \uplambda \beta m \allowbreak \frac{T^{2}}{2}+c\beta
mT<1,
\end{equation}
we obtain that there exists a unique fixed point of $\Phi
:B_{c,T}^{+}\rightarrow B_{c,T}^{+}$ that provides a unique nonnegative
solution to \eqref{CP} bounded by $c>0$ on the interval $\left[ 0,T%
\right] $.
Repeated application of the above proof to the initial function $0\leq u_{T}\left( x\right) \leq c$ enables us to extend solution to the time interval $\left[ T,2T\right] $ and hence to $\R_{+}$.

These arguments cover, clearly, the first statement of the theorem, when $c=\kappa_2$. What is left is to show that $u_0\geq\kappa_1$ implies $u_t\geq\kappa_1$.
%Of course, the set $B^+_{\kappa_1,\kappa_2,T}$ of all functions from $w\in X_T$, such that $\kappa_1\leq w_t(x)\leq \kappa_2$, %$x\in\X$, $t\in[0,T]$, is a metric space too.
Let us show that $u_0(x)\geq\kappa_1$, $v_t(x)\geq\kappa_1$ yields $(\Phi v)_t(x)\geq\kappa_1$, where $\Phi$ is given by \eqref{Phi}. Since $\kappa_1$ solves \eqref{stateqn}, one has
\begin{align*}
\notag\left( \Phi v\right) _{t}\left( x\right) &\geq \exp \left\{
-m\int_{0}^{t}e^{-\kappa_1\beta}ds\right\}
\kappa_1 +\uplambda \int_{0}^{t}\exp \left\{ -m\int_{\tau }^{t}e^{-\kappa_1\beta}ds\right\} d\tau \\&=
\exp\bigl\{-me^{-\kappa_1\beta}t\bigr\}\kappa_1+\frac{\uplambda }{m}e^{\kappa_1\beta}\Bigl(1-\exp\bigl\{-me^{-\kappa_1\beta}t\bigr\}\Bigr) =\kappa_1.
\end{align*}
According to the proof of the fixed point theorem, the fixed point $u$ of $\Phi$ (which will be the solution to \eqref{CP}) may be obtained as a limit of $\Phi^n u_0$ in the space $B^+_{c,T}$. If we subsequently choose $v=u_0$, $v=\Phi u_0$, $v=\Phi^2 u_0$, $\ldots$ in our previous arguments, we can assert that $u_0\geq \kappa_1$ implies $\Phi^n u_0\geq\kappa_1$, hence that $u\geq\kappa_1$ as well.
This completes the proof.
\end{proof}

\begin{remark}[Comparison principle]
It is easily seen that the proof of Theorem~\ref{Thbdd} is mainly based on the fact that the mapping \eqref{Phi} is increasing. Namely, if $0\leq v^1_t(x)\leq v^2_t(x)\leq\kappa_2$, $t\in[0,T]$, $x\in\X$, then $0\leq (\Phi v^1)_t(x)\leq (\Phi v^2)_t(x)\leq\kappa_2$, $t\in[0,T]$, $x\in\X$. On the other hand, the mapping \eqref{Phi} evidently depends on $u_0$. As a matter of fact, $\Phi=\Phi_{u_0}$, and this dependence is monotone: the condition
\begin{equation}\label{monot-init}
 0\leq u_0^1(x)\leq u_0^2(x)\leq\kappa_2, \quad x\in\X,
\end{equation}
implies $0\leq (\Phi_{u_0^1} v)_t(x)\leq (\Phi_{u_0^2} v)_t(x)$, $t\in[0,T]$, $x\in\X$. Combining both monotonicity, we deduce that for any $n\in\N$, the initial condition \eqref{monot-init} yields $0\leq (\Phi_{u_0^1}^n u_0^1)_t(x)\leq (\Phi_{u_0^2}^n u_0^2)_t(x)\leq\kappa_2$, $t\in[0,T]$, $x\in\X$. Hence, by the proof of the fixed point theorem, one can pass here to the limit in $n$ and derive that
\begin{equation}\label{monot-time}
 0\leq u_t^1(x)\leq u_t^2(x)\leq\kappa_2, \quad x\in\X,\quad t\in[0,T],
\end{equation}
where $u_t^1$, $u_t^2$ are solutions to \eqref{CP} with the initial conditions $u_0^1$, $u_0^2$ satisfying \eqref{monot-init}. The extension of the property \eqref{monot-time} on the whole $[0,+\infty)$ is evident. Taking into account that $u_t(x)\equiv\kappa_1$ is a solution to \eqref{CP}, one can obviously replace lower bound $0$ by $\kappa_1$ simultaneously in inequalities \eqref{monot-init} and \eqref{monot-time}.
\end{remark}

\subsection{Aggregation properties: growth process}

%As we shown above, for uniformly small (in space) initial data in
%\eqref{CP} we obtain uniformly small (in space and time) solution
%to \eqref{CP}. Below we consider the properties of a solution to \eqref{CP} under the assumption that the initial data has a big enough growth
%in a proper domain.
In this subsection, we proceed with the study of properties of solution to \eqref{CP}. We will be concerned with the situation when initial function is large enough in some domain of $\R^{d}$, contrary to the situation described in
Theorem \ref{Thbdd}.

%For the technical purposes we recall once again the most important properties of solutions to the problem $p(b):=be^{-b}=R$ (cf. \eqref{dopfunc}). As it was mentioned in the previous subsection, the latter equation has two solutions
%$b_{1}\leq1\leq b_{2}$ provided $R\leq 1/e$.

Let $A\in {\B}_{\mathrm{b}} (\X)$ be arbitrary. We set
\begin{equation}\label{bsf}
\Ph_{A}:=\inf_{x\in A}s_{A}(x),\quad \text{where}\quad
s_{A}(x):=\int_{A} \phi(x-y) dy.
\end{equation}

\begin{remark}
Suppose that there exists $\varepsilon >0$ such that $\phi(x)>0$, $x\in[-\varepsilon,\varepsilon]^d$ and $A$ contains at least one interior point. Then it is evident that
$\Ph_{A}>0$.
\end{remark}
\begin{remark}
Condition $\phi\in L^{1}(\R^{d})$ makes it obvious that $s_{A}(x)\rightarrow 0$ as $x\rightarrow \infty$.
\end{remark}

%For arbitrary $\theta\geq 0$ we set
%\begin{align}
%S_{A,\theta} \quad :=\quad & \quad[1/\Ph_{A}, \infty), \quad  \text{if} \quad \frac{\la\Ph_{A}}{me^{\theta}}>\frac{1}{e}\\
%& \quad [b_{\theta}, \infty), \quad \text{otherwise},
%\end{align}
%where $b_{\theta}$ is a larger solution to the following equation
%\begin{equation}
%p(\Ph_{A}b_{\theta})=\frac{\la\Ph_{A}}{me^{\theta}}.\label{dopeq}
%\end{equation}
Let us consider an equation similar to \eqref{stateqn}
\begin{equation}
b\,e^{-\Ph_{A}b}=\frac{\uplambda}{m}\theta,\label{dopeq}
\end{equation}
where $\theta\leq 1$ is an arbitrary constant. Let $\hat{b}=b(1/4)$ be the larger solution to the equation \eqref{dopeq} for $\theta=1/4$ provided $\frac{\uplambda\Ph_{A}}{4m}\leq 1/e$ and let $\hat{b}=1/\Ph_{A}$ otherwise.
\begin{theorem}\label{l-th1}
Let $A\in{\B}_{\mathrm{b}} (\X)$ be arbitrary. Suppose that $\Ph_{A}>0$ and $0\leq u_0\in C_b(\X)$ is  an initial condition to \eqref{CP} such that
\begin{equation}\label{u0m}
b < u(x,0) < \kappa b , \quad x\in A,
\end{equation}
for a pair $(b,\kappa)$ satisfying $b\geq \hat{b}$, $\kappa >1$, and
\begin{equation}
b \, e^{-\Ph_{A}b}<\frac{\uplambda}{m}\frac{1}{\kappa}\left(1-\frac{1}{\kappa}\right).\label{Bb}
\end{equation}
Then, the corresponding solution $u(x,t)$ to the equation \eqref{CP} grows on $A$ to infinity in such a way that
$$\uplambda/\kappa<v(b,\,\kappa)\leq \dot{u}(x,t)
\leq \uplambda,$$
where
 $v(b,\,\kappa):=\uplambda-{\kappa}mbe^{-\Ph_{A}b}$.
Consequently,
\begin{equation}\label{Gr}
b+\frac{\uplambda}{\kappa} t<b + v(b,\,\kappa)t \leq u(x,t) \leq \kappa b+\uplambda t =\kappa(b+\frac{\uplambda}{\kappa} t), \; x\in A, \; t \ge 0.
\end{equation}
If, additionally, $u(x,0) < \kappa b$ and $s_{A}(x)>0$ for some $x\in \R^{d}\setminus A$, then there exists a time $t(x)>0$ such that solution to the equation \eqref{CP}
monotonically increases at $x$ to infinity with the speed
\begin{equation}\label{sdsadas}
\dot{u}(x,t)\geq v(b,\,\kappa),\quad t \ge t(x).
\end{equation}
\end{theorem}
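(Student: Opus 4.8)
The plan is to exploit the fact that the right-hand side of \eqref{CP} is, essentially, $\uplambda$ minus a death term which is small as long as $u$ stays large on $A$ and $\Ph_A$ keeps the convolution $(u_t\ast\phi)(x)$ bounded below for $x\in A$. First I would establish the lower bound $(u_t\ast\phi)(x)\geq \Ph_A\,\inf_{y\in A}u_t(y)$ for $x\in A$, which follows from $(u_t\ast\phi)(x)=\int_\X \phi(x-y)u_t(y)\,dy\geq \int_A \phi(x-y)u_t(y)\,dy\geq \Ph_A\inf_{y\in A}u_t(y)$, using $\phi\geq0$ and the definition \eqref{bsf}. Combined with the global a priori bounds from Proposition~\ref{exuniq} (the solution is nonnegative and defined on all of $\R_+$), this gives, for $x\in A$, the differential inequality $\dot u_t(x)=\uplambda-mu_t(x)e^{-(u_t\ast\phi)(x)}\geq \uplambda-mu_t(x)e^{-\Ph_A \mu(t)}$, where $\mu(t):=\inf_{y\in A}u_t(y)$, and the upper bound $\dot u_t(x)\leq\uplambda$ trivially.

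The core step is a continuity/bootstrap argument on the set $A$. Let $T^\ast:=\sup\{t\geq0: \mu(s)\geq b+v(b,\kappa)s \text{ for all } s\in[0,t]\}$; at $t=0$ this holds by \eqref{u0m}, so $T^\ast>0$ by continuity (here one should note $\mu$ is continuous since $u\in C^1(\R_+\to C_b(\X))$ and the infimum over the bounded set $A$ of a uniformly continuous family is continuous). Suppose $T^\ast<\infty$. I would show that on $[0,T^\ast]$ one actually has $b\leq u_t(x)\leq \kappa b+\uplambda t$ for $x\in A$: the lower bound $u_t(x)\geq b$ because $u_t(x)\geq \mu(t)\geq b$ on $[0,T^\ast]$ and $v(b,\kappa)>0$ by \eqref{Bb} (indeed $v(b,\kappa)=\uplambda-\kappa mb e^{-\Ph_A b}>\uplambda-\uplambda(1-1/\kappa)=\uplambda/\kappa>0$ using \eqref{Bb}); the upper bound by integrating $\dot u_t(x)\leq\uplambda$ from the initial bound $u_0(x)<\kappa b$. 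Now plug this into the death term: for $x\in A$ and $t\in[0,T^\ast]$, the function $s\mapsto s e^{-\Ph_A b}$ is increasing, so $mu_t(x)e^{-\Ph_A\mu(t)}\leq mu_t(x)e^{-\Ph_A b}\leq m(\kappa b+\uplambda t)e^{-\Ph_A b}$. This is not quite uniform in $t$, so the cleaner route is: since the death rate $m u_t(x)e^{-(u_t\ast\phi)(x)}$ is what we must bound, and $(u_t\ast\phi)(x)\geq \Ph_A b$ on $[0,T^\ast]$, while we need $\dot u_t(x)\geq v(b,\kappa)$, it suffices that $mu_t(x)e^{-\Ph_A b}\leq \uplambda-v(b,\kappa)=\kappa mbe^{-\Ph_A b}$, i.e. $u_t(x)\leq\kappa b$. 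The bound $u_t(x)\leq\kappa b+\uplambda t$ alone is too weak; instead one should run the argument with the sharper invariant region $\{u_t(x)\leq \kappa b + \uplambda t \text{ and } u_t(x)\geq b+v(b,\kappa)t\}$ and observe that the \emph{death rate} obeys $mu_t(x)e^{-(u_t\ast\phi)(x)}\leq m u_t(x)e^{-\Ph_A\mu(t)}$, estimating $u_t(x)\leq\kappa b+\uplambda t$ and $\mu(t)\geq b+v(b,\kappa)t$ so that the product $u_t(x)e^{-\Ph_A\mu(t)}\leq(\kappa b+\uplambda t)e^{-\Ph_A(b+v(b,\kappa)t)}$, and one checks this is $\leq (\uplambda-v(b,\kappa))/m=\kappa b e^{-\Ph_A b}$ for all $t\geq0$, which reduces (after dividing by $e^{-\Ph_A b}$) to $(\kappa b+\uplambda t)e^{-\Ph_A v(b,\kappa)t}\leq\kappa b$; the left side has derivative in $t$ at $0$ equal to $\uplambda-\kappa b\Ph_A v(b,\kappa)$, which one must show is $\leq0$, i.e. $\kappa b\Ph_A v(b,\kappa)\geq\uplambda$, and then verify the left side is decreasing thereafter — this is where conditions \eqref{Bb} and $b\geq\hat b$ (hence $b\Ph_A\geq1$, using the definition of $\hat b$ via equation \eqref{dopeq} and the properties of $p(r)=re^{-r}$) enter. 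With that inequality in hand, $\dot u_t(x)\geq\uplambda-m(\kappa b+\uplambda t)e^{-\Ph_A(b+v(b,\kappa)t)}\geq\uplambda-\kappa mbe^{-\Ph_A b}=v(b,\kappa)$ for all $x\in A$, $t\in[0,T^\ast]$; integrating gives $u_t(x)\geq b+v(b,\kappa)t$ with \emph{strict} slack, and taking the infimum over $x\in A$ gives $\mu(t)>b+v(b,\kappa)t$ on $(0,T^\ast]$, contradicting maximality of $T^\ast$ by continuity. Hence $T^\ast=\infty$, which yields \eqref{Gr} together with the asserted two-sided bound on $\dot u(x,t)$, and the strict inequality $v(b,\kappa)>\uplambda/\kappa$ noted above gives the left half of the chain in \eqref{Gr}.

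For the final assertion about points $x\in\R^d\setminus A$ with $s_A(x)>0$: once \eqref{Gr} holds, for such an $x$ one has $(u_t\ast\phi)(x)\geq\int_A\phi(x-y)u_t(y)\,dy\geq s_A(x)(b+v(b,\kappa)t)\to\infty$ as $t\to\infty$ (using $\mu(t)\to\infty$ and $\int_A\phi(x-\cdot)\geq s_A(x)\cdot$ at the infimum, more precisely $\int_A\phi(x-y)u_t(y)\,dy\geq s_A(x)\,\mu(t)$). Hence the death rate $mu_t(x)e^{-(u_t\ast\phi)(x)}$ at this fixed $x$ tends to $0$ provided $u_t(x)$ does not grow faster than exponentially — but $\dot u_t(x)\leq\uplambda$ gives at most linear growth, so indeed $mu_t(x)e^{-(u_t\ast\phi)(x)}\to0$. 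Choose $t(x)$ so large that for $t\geq t(x)$ this death rate is $\leq\uplambda-v(b,\kappa)$; then $\dot u_t(x)\geq v(b,\kappa)>0$ for $t\geq t(x)$, giving the claimed monotone growth to infinity at $x$ with speed at least $v(b,\kappa)$, which is \eqref{sdsadas}. One small point to dispatch: that $t(x)$ exists requires knowing $u_t(x)$ is bounded on $[0,t(x)]$ and hence the death rate is continuous and eventually small; this is immediate from $0\leq u_t(x)\leq u_0(x)+\uplambda t$.

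\textbf{Main obstacle.} The delicate part is the bootstrap inequality $(\kappa b+\uplambda t)e^{-\Ph_A v(b,\kappa)t}\leq\kappa b$ for all $t\geq0$: one must check both that its derivative at $t=0$ is nonpositive, i.e. $\uplambda\leq\kappa b\Ph_A v(b,\kappa)$, and that this function is monotone decreasing (or at least stays below $\kappa b$) for all later $t$. Verifying $\uplambda\leq\kappa b\Ph_A v(b,\kappa)$ is precisely where the hypotheses $b\geq\hat b$ (which forces $\Ph_A b\geq1$ when $\hat b=1/\Ph_A$, and an analogous lower bound on $\Ph_A b$ coming from the larger root of \eqref{dopeq} in the other case) and the quantitative smallness \eqref{Bb} of $be^{-\Ph_A b}$ must be combined; I expect this to require a careful but elementary estimate using the monotonicity properties of $p(r)=re^{-r}$ established after \eqref{dopfunc}.
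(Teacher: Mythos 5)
Your argument is correct and takes a genuinely different route from the paper's. Where the paper uses a step-by-step iteration in time together with an auxiliary parameter $\theta^*\in(\kappa,\Theta(b)^{-1}(1-\kappa^{-1})]$ (strictly bigger than $\kappa$, to leave a cushion of length $\tau'=b(\theta^*-\kappa)/\uplambda$ that propagates the bound from $[n\tau,(n+1)\tau]$ to the next interval, eventually letting $\theta^*\downarrow\kappa$), you collapse the whole bootstrap into a single inequality on $[0,\infty)$, namely that $g(t):=(\kappa b+\uplambda t)e^{-\Ph_A v(b,\kappa)t}\leq\kappa b$ for all $t\geq0$. This is cleaner because $g$ turns out to be globally nonincreasing once $g'(0)\leq0$: indeed $g'(t)\leq0$ iff $\uplambda\leq\Ph_A v(b,\kappa)\bigl(\kappa b+\uplambda t\bigr)$, and the right side is increasing in $t$, so the sign of $g'$ never flips. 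The condition $g'(0)\leq0$, i.e.\ $\uplambda\leq\kappa b\,\Ph_A\,v(b,\kappa)$, is exactly your ``main obstacle,'' but it is quickly dispatched: $v(b,\kappa)>\uplambda/\kappa$ follows directly from \eqref{Bb}, and $b\Ph_A\geq1$ follows from $b\geq\hat b$ (if $\hat b=1/\Ph_A$ this is immediate; otherwise $\hat b$ is the larger root of $p(\Ph_A b)=\uplambda\Ph_A/(4m)<1/e$ and the larger root of $p(r)=c$ with $c<1/e$ always lies in $(1,\infty)$), whence $\kappa b\Ph_A v(b,\kappa)>\kappa\cdot 1\cdot\uplambda/\kappa=\uplambda$. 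So your reduction closes. For the extension past $T^*$, note that $g(t)<\kappa b$ strictly for $t>0$ gives a uniform-in-$x\in A$ positive slack $c(t):=\uplambda-mg(t)e^{-\Ph_A b}-v(b,\kappa)>0$, hence $\mu(T^*)\geq b+v(b,\kappa)T^*+\int_0^{T^*}c(s)\,ds>b+v(b,\kappa)T^*$, and continuity of $\mu$ pushes past $T^*$, yielding the contradiction; one should write this slack explicitly rather than appeal to strictness of $u_0>b$ on $A$, since $A$ is only bounded Borel and the infimum $\mu_0$ need not be strictly above $b$. Your argument for $x\in\X\setminus A$ (the death rate $m u_t(x)e^{-(\phi\ast u_t)(x)}\leq m(u_0(x)+\uplambda t)e^{-s_A(x)(b+v(b,\kappa)t)}\to0$, so pick $t(x)$ past the point where this is $\leq\uplambda-v(b,\kappa)$, and it stays there because this bound is eventually decreasing) is simpler than the paper's, which constructs a map $\Theta_x$ and sets $t(x)=(b(x)-b)/v(b,\kappa)$ with $b(x)=\Theta_x^{-1}(\Theta(b))$. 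The paper's more explicit $t(x)$ is not needed for this theorem, but it does feed into the quantitative estimate \eqref{txx} in the subsequent theorem, so the apparent extra complexity there is purposeful.
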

\begin{remark}
By the definition of $\hat{b}$ it is obvious that
$$
b\,e^{-\Ph_{A}b}<\frac{\uplambda}{4m}, \quad b\geq \hat{b}
$$
Thus, there exists $\kappa>1$ such that \eqref{Bb} holds.
\end{remark}
\begin{remark}
Let $\kappa>1$ be arbitrary and fixed. It follows easily that $v(b,\kappa)$ tends to $\uplambda$ as $b\rightarrow\infty$.
\end{remark}
\begin{remark}
Let $\kappa^*(b)>1$ be supremum over all $\kappa>1$ satisfying \eqref{Bb} for each fixed $b\geq \hat{b}$ . It is a simple matter to check that $\kappa^*(b)$ tends to $\infty$ as $b\rightarrow\infty$.
\end{remark}
\begin{proof}
We start with an evident representation
\begin{equation}\label{repres}
 u(x,t)=u(x,t_0)+\int_{t_0}^t \dot{u}(x,s)\,ds, \quad 0\leq t_0\leq t.
\end{equation}
By Proposition \ref{exuniq}, $u_0\geq0$ implies $u_t\geq0$, and hence,
\begin{equation}\label{estforder}
 \dot{u}_t(x)=\uplambda-mu_t(x)e^{-(\phi*u_t)(x)}\leq\uplambda, \quad t\geq0, \quad x\in\X.
\end{equation}
As a result, taking $t_0=0$ in \eqref{repres}, we immediately derive the
estimate from above in \eqref{Gr} for all $x\in\X$.

Our next concern will be the estimate from below. We see at once that one can establish bijective correspondence
$$\Theta:[\,\hat{b},\,\infty)\rightarrow \left(0, \min\left\{\frac14,\,\frac{m}{\uplambda e \Ph_{A} }\right\}\right]$$ between larger solution to \eqref{dopeq} and $\theta$. Namely, for each $b\geq \hat{b}$ there exists unique $\theta$ such that $b$ will be the larger solution to \eqref{dopeq} and
\begin{equation}
\Theta(b):=\theta.\label{thett}
\end{equation}
It is worth pointing out that $b>b'$ implies $\Theta({b'})>\Theta({b})$. Moreover, it is a simple matter to verify that
the estimate \eqref{Bb} is equivalent to
\begin{equation}
\Theta(b)<{\kappa}^{-1}\left(1-\kappa^{-1}\right). \label{Bb1}
\end{equation}
As a result, $\Theta(b)<\min\{{\kappa}^{-1},1/4\}$.  Another way of stating \eqref{Bb} for any fixed $\kappa>1$ is to say that $b\geq \hat{b}$ has to be larger than or equal to the minimal $b$ satisfying \eqref{Bb1}.
An easy computation shows that
\begin{equation}
v(b,\,\kappa):=\uplambda(1-\kappa\,\Theta(b)).\label{VV}
\end{equation}

Let us fix now an arbitrary $T>0$ and consider $t\in[0,T]$. Since $\dot{u}_t\in C\bigl([0,T]\to C_b(\X)\bigr)$ it follows that $\dot{u}_t$ is uniformly continuous on $[0,T]$. In particular, it means that for an arbitrary $\eps>0$, there exists $\delta>0$ such that, for any $t_0\in[0,T-\delta)$ and for any $t\in[t_0,t_0+\delta]$,
\begin{equation}\label{sasad}
 \|\dot{u}_t-\dot{u}_{t_0}\|_\infty<\eps.
\end{equation}
Let $\theta^*\in(\kappa,\,\Theta(b)^{-1}\left(1-\kappa^{-1}\right)]$ be arbitrary and fixed. Suppose that for some $t_0\in[0,T-\delta)$, one has the following bound
\begin{equation}\label{saddsa}
 \dot{u}(x,t_0)>v(b,\theta^*), \quad x\in A,
\end{equation}
where $v(b,\theta^*):=\uplambda-m \theta^* b e^{- \Ph_{A} b }=\uplambda\left(1-\theta^*\Theta(b)\right)$.
Then, for $\eps<v(b,\theta^*)$, we conclude from \eqref{sasad}, that
\begin{equation*}
 \dot{u}(x,t)>\dot{u}(x,t_0)-\eps\geq v(b,\theta^*)-\eps>0, \quad x\in A, \quad t\in[t_{0},t_{0}+\delta].
\end{equation*}
Thus, by \eqref{repres}, we see that the bound \eqref{saddsa} for the derivative at point $t_0$ yields the estimate
\begin{equation}\label{adsg}
 u(x,t)\geq u(x,t_0), \quad x\in A,\quad t\in[t_{0},t_{0}+\delta].
\end{equation}
For any $0\leq u\in C_b(\X)$, one has
\begin{align}
(\phi \ast u)(x) &= \int_\X u(y) \phi(x-y) dy \ge
 \int_A u(y) \phi(x-y) dy \notag\\&\ge
\min_{y \in A} u(y) \cdot \min_{x\in A} \int_{A} \phi(x-y) dy
= \min_{y\in A} u(y) \cdot \Ph_{A}, \quad x\in A. \label{estconv}
\end{align}
Therefore, \eqref{u0m} and $\kappa<{\theta^*}$ shows that for any $x \in A$,
\[
\phi \ast u_0(x) > b \Ph_{A},\qquad
m u_0
e^{- \phi \ast u_0} < m \kappa b e^{- \Ph_{A} b }<m {\theta^*} b e^{- \Ph_{A} b }.
\]
Consequently,
\begin{equation}\label{estdot0}
\dot{u}_0 = \uplambda - m u_0 e^{- \phi \ast u_0} > \uplambda-m \theta^* b e^{- \Ph_{A} b }=v(b,\theta^*)
 , \quad x\in A,
\end{equation}
that is \eqref{saddsa}, for $t_0=0$. Hence, by \eqref{adsg} and arguments above,
\begin{equation}\label{estontau}
 b <u(x, 0)\leq u(x, t) < \kappa b +\uplambda t, \quad t \in [0, \delta], \quad x\in A.
\end{equation}
Now we would like to get the inequality \eqref{estdot0} valid for small positive times. By \eqref{estconv} and \eqref{estontau} we have
\begin{equation}\label{estnewdop}
 m u_t (x) e^{-  \phi \ast u_t (x)} < m (\kappa b +
\uplambda t) e^{-  \Ph_{A} b}, \quad t \in [0, \delta], \quad x\in A.
\end{equation}
As a result, for all $t\in(0,\delta]$ satisfying
\begin{equation}\label{ontau}
  m \kappa b e^{-
\Ph_{A} b } + \uplambda m t e^{- \Ph_{A} b} \le
\uplambda - v(b,\theta^*)={\theta^*}mbe^{-\Ph_{A}b}
\end{equation}
we get
$$\dot{u}(x,t)=\uplambda -m u_t (x) e^{-  \phi \ast u_t (x)} >v(b,\theta^*).$$
Because of the strict inequality $\kappa <{\theta^*}$, there exists
\begin{equation}\label{sat1}
 \tau':= \frac{b\left({\theta^*}-\kappa\right) }{\uplambda}>0
\end{equation}
such that \eqref{ontau} holds for $t<\tau'$.
Then, for all $t\in(0,\min\{\tau',\delta\}]$
\begin{equation}\label{estderbel}
 \dot{u}(x,t) > v(b,\theta^*),  \quad x\in A,
\end{equation}
and hence, the representation \eqref{repres} yields the improved estimate for \eqref{estontau}
\begin{equation}\label{ytau}
b + v(b,\theta^*) t < u(x, t) < \kappa b +\uplambda t, \quad t\in [0,\min\{\tau',\delta\}], \quad x\in A.
\end{equation}
Set $\tau:=\min\{\tau',\delta\}$. Our next goal will be to show \eqref{ytau} for $t> \tau$. By \eqref{adsg}, \eqref{estderbel}, and \eqref{ytau} we conclude that \eqref{saddsa} holds for $t_0=\tau$ and
% Therefore, by  \eqref{adsg} and \eqref{ytau}
\begin{equation}\label{newest1}
 b+v(b,\theta^*)\tau< u(x,\tau)<u(x,t)<\kappa b+\uplambda t, \quad t\in[\tau,\tau+\delta], \quad  x\in A.
\end{equation}
To have \eqref{newest1} valid for $t\in[\tau,\tau+\delta]$,
we would like to obtain the estimate \eqref{estderbel} for $t>\tau$. First, we define $b_\tau := b + v(b,\theta^*) \tau$ and $\tau_1:=t-\tau$. Then for any $t\in[\tau,\tau+\delta]$ and  $x\in A$ we conclude from \eqref{estconv} and \eqref{newest1} that
\begin{equation}\label{sertgd}
 m u_{t} (x)e^{-\phi \ast u_t(x)} < m
(b_\tau + (\kappa -1)b +(\uplambda-v(b,\theta^*))\tau +\uplambda \tau_1) e^{-  \Ph_{A} b_\tau}
\end{equation}
$$
=m\left(b_\tau + (\kappa -1)b +v(b,\theta^*)\left(\frac{\theta^*\Theta(b)}{1 - \theta^* \Theta(b) }\right)\tau +\uplambda \tau_1\right) e^{-  \Ph_{A} b_\tau},
$$
which is clear from \eqref{VV}. Combining \eqref{Bb1} with $\theta^* \leq \Theta(b)^{-1} \left(1-\kappa^{-1}\right)$ we see that the latter expression will be less than or equal to
$$
m(b_\tau + (\kappa-1)b_{\tau}  +\uplambda \tau_1) e^{-  \Ph_{A} b_\tau}.
$$
Since $1<\kappa < \theta^* \leq \Theta(b)^{-1} \left(1-\kappa^{-1}\right)$ we have $\Theta(b)<1/\theta^*$. As a result,
$b_\tau>b>\Theta^{-1}(1/\theta^*)$, where $\Theta^{-1}$ is inverse transform to $\Theta$. Therefore
 $$be^{-\Ph_A b}>b_\tau e^{-\Ph_A b_\tau}$$ and, in consequence,
$$
m u_{t} (x)e^{-\phi \ast u_t(x)} < m \kappa be^{-  \Ph_{A} b} + \uplambda m \tau_1 e^{-  \Ph_{A} b}= \kappa \uplambda \Theta(b) + \uplambda m \tau_1 e^{-  \Ph_{A} b}.
$$
Thus, for all $\tau_{1}\leq \tau' = \frac{(\theta^*-\kappa)b}{\uplambda}$

$$m u_{t} (x)e^{-\phi \ast u_t(x)}<\kappa \uplambda \Theta(b) + \uplambda \tau_1 e^{-  \Ph_{A} b} < \uplambda \theta^{*} \Theta(b),$$
and hence
$$\dot{u}(x,t)=\uplambda -m u_t (x) e^{-  \phi \ast u_t (x)} >v(b,\theta^*).$$
Since $\tau\leq \tau'$ the inequality \eqref{estderbel} is also satisfied for $t\in[\tau,2\tau]$.

As was mentioned before, this gives the improved estimate \eqref{newest1}, namely \eqref{ytau} holds, for all $t\in[\tau,2\tau]$, $x\in A$. The same arguments show that \eqref{ytau} holds, for all $t\in[2\tau,3\tau]$ and so on. As a result, we obtain \eqref{ytau} on the whole $[0,T]$. Since $T>0$ and $\theta^*>\kappa$ are arbitrary, the first statement of the theorem is proved.

What is left is to show \eqref{sdsadas}. Let $x\in\R^{d}\setminus A$ be arbitrary and fixed. We consider the following cases: (a) $s_{A}(x) \geq \Ph_{A}$; (b) $s_{A}(x) < \Ph_{A}$.

(a) in this situation the proof of \eqref{sdsadas} is straightforward. Indeed, all bounds obtained in the proof of the first statement of the theorem up to \eqref{adsg} remain valid for each $x\in\R^{d}\setminus A$. Replacing  \eqref{estconv} by
\begin{align}
(\phi \ast u)(x) &= \int_\X u(y) \phi(x-y) dy \ge
 \int_A u(y) \phi(x-y) dy \notag\\&\ge
\min_{y \in A} u(y) \cdot s_{A}(x), \quad x\in \R^{d}\setminus A. \label{estconv1}
\end{align}
and using $s_{A}(x) \geq \Ph_{A}$ we see at once analogously to \eqref{estontau} that
\begin{equation}
u(x, t) < \kappa b +\uplambda t, \quad t \in [0, \delta], \quad x\in \R^{d}\setminus A,
\end{equation}
and all subsequent bounds are satisfied for $x\in\R^{d}\setminus A$, which is desired conclusion.

(b) we define $\hat{b}_x$ analogously to $\hat{b}$ for
$$
be^{-s_{A}(x)b}=\frac{\uplambda}{4m}\,
$$
and a bijective map $\Theta_{x}:[\hat{b}_x,\,\infty)\rightarrow \left(0, \min\left\{\frac14,\,\frac{m}{e \uplambda s_{A}(x) }\right\}\right]$ as an analogue to \eqref{thett}
\begin{equation}
be^{-s_{A}(x)b}=\frac{\uplambda}{m}\,{\Theta_{x}(b)}, \quad b\geq \hat{b}_x\label{dopeq11}.
\end{equation}
%In fact, the larger solution $b$ to \eqref{dopeq11} depends now on $x\in\R^{d}\setminus A$. To stress this fact we write $b(x)$ for such $b$.
Let $\Theta_{x}^{-1}$ denotes the inverse map. We define for any $(b,\kappa)$ satisfying \eqref{Bb}
\begin{equation}
t(x):=(b(x)-b)/v(b,\kappa),\label{timee}
\end{equation}
where  $b(x):=\Theta_{x}^{-1}(\Theta(b))> b$, which is clear from $s_{A}(x) < \Ph_{A}$.
%Let $b\geq b_{x}(1)\geq b(1)$. In this case we can proceed analogously to the proof of the first statement of the theorem.
%Let $b_{x}(1)\geq b \geq b(1)$. In fact, the larger solution $b$ to \eqref{dopeq11} depends now on $x\in\R^{d}\setminus A$. To stress this fact we write $b(x)$ for such $b$. Set $t(x):=(b(x)-b)/v(b,\kappa)$.
According to the arguments in the first part of the proof of this theorem we have
\begin{equation}\label{BB1}
u(y, t(x)) \geq b + v(b,\kappa) t(x) = b(x), \quad y \in A.
\end{equation}
Moreover, as it was mentioned before, \eqref{repres} and \eqref{estforder} imply that
\begin{equation}\label{BA1}
u(x, t(x)) \leq \kappa b + \uplambda t(x)  = b(x)+ (\kappa-1)b+(\uplambda-v(b,\kappa))t(x)
\end{equation}
It follows from estimates (\ref{estconv1}) and (\ref{BB1}) that
$ (\phi\ast u)(x, t(x))\geq s_{A}(x) b(x)$. Analogously to \eqref{sertgd} (with $\tau_1=0$,  $\tau$ replaced by $t(x)$, $\Ph_{A}$ replaced by $s_{A}(x)$ and $b_{\tau}$ replaced by $b(x)$) it may be concluded that
\begin{equation}\label{sad213}
 \dot{u}(x,t(x))\geq \uplambda-\kappa m b(x) e^{-s_{A}(x)b(x)}= \uplambda-\kappa m b e^{-\phi_{A}b}=v(b,\kappa),
\end{equation}
which is clear from the definition of $b(x)$.
Next, taking into account \eqref{Gr} and \eqref{BB1} we get
\begin{equation}\label{rrtt}
 u(y, t(x) + \tau) \geq u(y, t(x)) +v(b,\kappa) \tau \geq b(x)
+ v(b,\kappa) \tau, \; y \in A.
\end{equation}
Then, by \eqref{estconv1} and \eqref{rrtt},
\begin{equation}\label{sdaasdasd}
 (\phi \ast u)(x, t(x)+\tau) \geq s(x) (b(x)+ v(b,\kappa)
\tau).
\end{equation}
As a result,
\begin{align*}
m u(x, t(x)+\tau) e^{-\phi \ast u(x, t(x) + \tau)} &\leq m(\kappa b +   \uplambda(t(x)+\tau))
 e^{- s_{A}(x) (b(x)+ v(b,\kappa)
\tau)}
\end{align*}
$$
\leq m(b(x) +(\kappa -1)b + (\uplambda-v(b,\kappa))(t(x)+\tau) +v(b,\kappa)\tau)
 e^{- s_{A}(x) (b(x)+ v(b,\kappa)
\tau)}.
$$
Analogously to \eqref{sertgd} (with $\tau_1=0$,  $\tau$ replaced by $t(x)+\tau$, $\Ph_{A}$ replaced by $s_{A}(x)$ and $b_{\tau}$ replaced by $b(x)+v(b,\kappa)
\tau$) we can assert that
\begin{equation}\label{DotZ1}
\dot{u}(x, t(x)+\tau) \geq  \uplambda-\kappa m (b(x) + v(b,\kappa)
\tau) e^{-s_{A}(x)(b(x)+v(b,\kappa)
\tau)} \geq v(b,\kappa),
\end{equation}
for all $\tau >0$ and
$t(x)=(b(x)-b)/v(b,\kappa)$.
This finishes the proof.
\end{proof}

\subsection{Expansion of aggregation: bounded support of $\phi$}

In this subsection we study how aggregation expands in the case when
$\phi$ has a bounded support. To do this we consider a slight modification of the method used in the previous subsection.
%It is clear that any function $\phi$
%from the previous section can be approximated by a function with a
%bounded support.
For simplicity of notations and further constructions we assume
that
\[
d=1, \qquad \phi(x)=\chi_{\bigl[-\frac{1}{2}, \frac{1}{2}\bigr]}(x), \ \ x\in\R, \qquad
A=[-a,a],\ \ a\ge1/2.
\]
We consider the function, cf. \eqref{dopeq},
\begin{equation}\label{rttr}
 r(b):=be^{-\frac{b}{4}},\quad b\in\R.
\end{equation}
It takes  maximal value $4/e$ at $b=4$. We define $\hat{b}$ to be the larger (or unique) solution to the equation $r(b)=\frac{\uplambda}{4m}$ provided $\uplambda\leq\frac{16m}{ e}$. If $\uplambda>\frac{16m}{e}$, we choose $\hat{b}\geq {4}$ to ensure
$be^{-\frac{b}{4}}=\frac{\uplambda}{4m}e^{-\frac{1}{4}\hat{b}}$
has two solutions, that is just a condition $\uplambda<\frac{16m}{ e}e^{\frac{1}{4}\hat{b}}$. It is worth noting that the function $r$ decays for $b>\hat{b}$ and the equation
\begin{equation}\label{sdasda32}
be^{-\frac{b}{4}}=\frac{\uplambda}{4m}e^{-\frac{b'}{4}}
\end{equation}
has always two solutions, for any $b'>\hat{b}$.

\begin{theorem}
Let $u_0\in C_b(\R)$ be an initial condition to \eqref{CP} such that
\eqref{u0m} holds
for some $b>\hat{b}$ and $\kappa=2$. Then the statement of Theorem~\ref{l-th1} holds true. Moreover, there exists $C>0$ such that
\begin{equation}
t(x) \le C |x| \ln |x| +o(|x|\ln |x|)\quad \mbox{ as } \quad |x| \to \infty.\label{txx}
\end{equation}
\end{theorem}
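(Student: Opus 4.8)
The first assertion needs no new work: for $\phi=\chi_{[-1/2,1/2]}$ and $A=[-a,a]$ with $a\ge 1/2$ one has $s_A(x)=\bigl|[x-\tfrac12,x+\tfrac12]\cap[-a,a]\bigr|$, which equals $1$ on $[-a+\tfrac12,a-\tfrac12]$ and decreases to $\tfrac12$ at $x=\pm a$, so $\Ph_A=\tfrac12>0$; and since $\hat b>4$ is the larger root of $b\,e^{-b/4}=\tfrac{\uplambda}{4m}$, every $b>\hat b$ satisfies $b\,e^{-b/2}=\bigl(b\,e^{-b/4}\bigr)e^{-b/4}<\hat b\,e^{-\hat b/2}<\tfrac{\uplambda}{4m}$, that is, \eqref{Bb} holds with $\kappa=2$. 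Hence Theorem~\ref{l-th1} applies as it stands, which is precisely the first claim (growth \eqref{Gr} on $A$, and monotone blow-up at every point of $(a,a+\tfrac12)\cup(-a-\tfrac12,-a)$).

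For the bound \eqref{txx} the plan is to run the mechanism of Theorem~\ref{l-th1} \emph{in generations}, the structural point being that $\supp\phi=[-\tfrac12,\tfrac12]$ only lets the aggregation spread by less than $\tfrac12$ at a time. Suppose that at a time $T_k$ the solution satisfies the hypotheses of Theorem~\ref{l-th1} on $A_k=[-a_k,a_k]$ with level $b_k$ and spread $\kappa_k$. Then $u$ keeps growing at rate $\ge v(b_k,\kappa_k)$ on $A_k$, while at the points $x=\pm(a_k+\tfrac14)$, where $s_{A_k}(x)=\tfrac12-\tfrac14=\tfrac14<\Ph_{A_k}$, the part~(b) analysis inside the proof of Theorem~\ref{l-th1} applies: such a point becomes active after a time $(b_{k+1}-b_k)/v(b_k,\kappa_k)$, where $b_{k+1}$ is the larger root of \eqref{sdasda32} with $b'=b_k$, and thereafter $u$ grows there at rate $\ge v(b_k,\kappa_k)$. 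Putting the growth on $A_k$ together with this---and observing, via the elementary estimate $\dot u(x,\cdot)\ge\uplambda-m\,u(x,\cdot)$ and the fact that the collar $[a_k,a_k+\tfrac14]$ already lies inside the active region produced at generation $k$, that $u$ on that collar is already large well before generation $k+1$---one checks that at a suitable later time $T_{k+1}=T_k+\Delta_k$ the hypotheses of Theorem~\ref{l-th1} hold again on $A_{k+1}=[-(a_k+\tfrac14),\,a_k+\tfrac14]$ with level $b_{k+1}$. Iterating from $A_0=A$, $b_0=b$, $T_0=0$, the interval carrying level $b_k$ reaches $\{|x|=R\}$ after $n=4(R-a)+O(1)$ generations, so $t(x)\le T_n$ for $|x|=R$.

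It then remains to control the three sequences $a_k,b_k,\Delta_k$. Since $a_{k+1}-a_k=\tfrac14$, the number of generations needed is $n$ of order $R$. Taking logarithms in \eqref{sdasda32} gives $b_{k+1}-b_k=4\ln\tfrac{4m\,b_{k+1}}{\uplambda}=O(\ln k)$, so the level grows only mildly; and, because the constraint \eqref{Bb} becomes easier as $b$ increases while the collar observation keeps the per-generation catch-up time under control, one can arrange $\Delta_k=O(\ln k)$ as well. Summing, $T_n=\sum_{k=1}^n\Delta_k=O(n\ln n)=O(R\ln R)$, which is exactly \eqref{txx}, with $C$ absorbing the implied constants.

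The step I expect to be the main obstacle is precisely this last, simultaneous, choice of $(b_k,\kappa_k,\Delta_k)$: one must verify that the \emph{full} list of hypotheses of Theorem~\ref{l-th1} genuinely holds at the start of each generation for the function handed over by the previous one---above all the two-sided bound $b_k<u_{T_k}(x)<\kappa_k b_k$ on all of $A_k$ (the lower bound comes from \eqref{Gr}--\eqref{BB1}; the upper bound must be reconciled with the fact that $\dot u\le\uplambda$ lets $u$ grow linearly in time, which is exactly where the tension between keeping $b_k$---hence $\Delta_k$---small and keeping all the hypotheses valid is located) and that the collar $[a_k,a_k+\tfrac14]$ has really reached level $b_{k+1}$ by time $T_{k+1}$. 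Once a consistent such choice is pinned down, \eqref{txx} is just the asymptotics of the recursion $T_{k+1}=T_k+O(\ln k)$ together with $n$ of order $R$.
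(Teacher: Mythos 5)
The first paragraph of your proposal — checking $\Ph_A=\tfrac12$, verifying \eqref{Bb} for $\kappa=2$ and $b>\hat b$, and invoking Theorem~\ref{l-th1} — is correct and matches the set-up the paper uses.

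For \eqref{txx} your overall strategy (advance the excited region by $\tfrac14$ at a time; let the level $b_k$ satisfy a recurrence of the type \eqref{sdasda32} so that $b_{k+1}-b_k=O(\ln k)$; sum over $\sim 4|x|$ generations to get $O(|x|\ln|x|)$) is indeed the shape of the paper's argument. But the specific mechanism you propose — re-invoking the \emph{statement} of Theorem~\ref{l-th1} on the enlarged interval $A_{k+1}$, with a freshly chosen level $b_{k+1}$ and spread $\kappa_{k+1}$ — is not what the paper does, and the obstacle you yourself flag at the end (the need for a two-sided bound $b_k<u_{T_k}<\kappa_k b_k$ on the whole of $A_k$, together with a consistent choice of $\Delta_k$) is a genuine gap that your write-up does not close. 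The paper sidesteps this entirely by \emph{never} re-applying Theorem~\ref{l-th1}: it introduces the shifted windows $B_k$ of length $\tfrac12$ with $U_k=B_k\setminus U_{k-1}$ and proves \emph{directly} that $\dot u>\tfrac{\uplambda}{2}$ on $U_k$ for $t\ge t_1+\dots+t_{k-1}$, using only three ingredients: the trivial global upper bound $u(x,\tau)\le 2b+\uplambda\tau$ (which requires no localization and never degrades), the lower bound $u\ge d_{k-1}-d_{k-2}$ accumulated on $U_{k-1}$, and the recurrence $d_k e^{-d_k/4}=\tfrac{\uplambda}{4m}e^{-d_{k-1}/4}$ chosen precisely so that $2m\,u\,e^{-\phi\ast u}\le\tfrac{\uplambda}{2}$ persists on $U_k$. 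In particular the ``tension'' you worry about between the linear upper growth and keeping $b_k$ small simply does not arise in the paper's bookkeeping, because no upper bound sharper than $u\le 2b+\uplambda\tau$ is ever needed. Finally, your step ``$b_{k+1}-b_k=O(\ln k)$'' is asserted rather than proved; the paper derives it from a separate asymptotic proposition, $c_k=k\ln k+k\ln\ln k-(\mu+1)k+o(1)$ for $c_k=d_k/4$, via a careful substitution argument. Without that, the claim $\Delta_k=O(\ln k)$ is circular (you use the asymptotic to justify itself). So: right skeleton, but the two load-bearing pieces — the direct derivative estimate on $U_k$ replacing a re-application of Theorem~\ref{l-th1}, and the rigorous asymptotics of the recurrence — are missing.
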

\begin{proof}
We begin by proving \eqref{txx} for all $x$ in a neighborhood of $A$. We define $B_1$ to be a segment of length $\frac{1}{2}$ with center at the right edge of $A$. Namely,
\begin{align*}
B_1&:=\Bigr[a-\frac{1}{4},a+\frac{1}{4}\Bigr].
\end{align*}
Set
\begin{align*}
S_1:= B_1 \cap A = \Bigr[a-\frac14, a\Bigl], \quad
U_1: =B_{1}\setminus S_{1}=\Bigl(a, a+ \frac14\Bigr].
\end{align*}
We next study the behavior of $u(x,t)$ at any point $ x \in U_1$. As before, \eqref{repres} and \eqref{estforder} yield
\begin{equation}\label{yepoyre}
 u(x, \tau) \leq 2 b + \uplambda \tau, \quad \tau >0, \quad x \in U_1.
\end{equation}
According to Theorem ~\ref{l-th1}, the estimate \eqref{Gr} implies
\begin{equation*}\label{ryeore}
u(y, \tau)> b + \frac{\uplambda }{2} \tau, \quad y \in S_1\subset A,
\end{equation*}
for any $\tau >0$. Thus
$$
(\phi \ast u)(x, \tau) \geq \int_{S_1} \phi(x-y)u(y,\tau)\,dy> \Bigl(b + \frac{\uplambda }{2} \tau\Bigr)\int_{S_1}\phi(x-y)\,dy.
$$
Since $x \in U_1$, $y\in S_1$, we have $x-y\in\bigl[0,\frac{1}{2}\bigr]$, and so
\begin{equation}\label{fasr3}
 (\phi \ast u)(x, \tau) >\frac{1}{4}\Bigl(b + \frac{\uplambda }{2} \tau\Bigr), \quad \tau>0, \quad x\in U_1.
\end{equation}
Combining \eqref{yepoyre} with \eqref{fasr3} and using the fact that the function $r$ decreases for $b > \hat{b}$, we get
\begin{align*}
m u(x,\tau) e^{- (\phi \ast u)(x, \tau)} &< 2 m \Bigl(b +
\frac{\uplambda }{2} \tau\Bigr) e^{-\frac{1}{4}  \bigl(b + \frac{\uplambda }{2}
\tau\bigr)} < 2 m \hat{b} e^{- \frac14  \hat{b}} \leq
\frac{\uplambda }{2},
\end{align*}
for any $\tau >0$ and any $x \in U_1$.  As a result,
\begin{equation}\label{DotZ}
\dot{u}(x, \tau) > \frac{\uplambda }{2}, \quad \tau \ge 0, \quad
x \in U_1.
\end{equation}
That means that $u(x,t)$ is monotonically
growing to infinity on $U_1$ with speed estimated by \eqref{DotZ}.

We can now proceed analogously by considering the next segment $B_2$ of the
length $\frac12$ with center at the right edge of $U_1$. Set
$$
U_2 := B_2 \setminus U_1, \quad t_1 := \frac{2}{\uplambda }(d_1 - d_0),
$$
where $d_0:=b$ and $d_1$ is, by definition, the larger solution to the equation, cf. \eqref{sdasda32},
\begin{equation}\label{d2}
r(b) = \frac{\uplambda }{4m} e^{-\frac{d_0}{4}}.
\end{equation}
We now show that the function $u(x,t)$ grows monotonically on $U_2$ for all $t \ge t_1$. Namely, by \eqref{DotZ}, we have on $U_1$
the following uniform estimate from below:
\begin{equation}\label{F1}
u(x, t_1) \ge u(x, 0) + \frac{\uplambda }{2} t_1 \ge \frac{\uplambda }{2}
t_1 = d_1 - d_0, \quad x \in U_1.
\end{equation}
Then,
\begin{equation}\label{dassaddsa}
 (\phi*u)(x,t_1)\geq (d_1-d_0)\int_{U_1}\phi(x-y)\,dy=\frac{1}{4}(d_1-d_0), \quad x\in U_2.
\end{equation}
On the other hand, by \eqref{repres} and \eqref{estforder}, in a new region $U_2$ the following estimate from
above holds:
\begin{equation}\label{F2}
u(x, t_1) \le u(x, 0) + \uplambda t_1 \le 2b + 2(d_1 - d_0) = 2 d_1,
\quad x \in U_2.
\end{equation}
%Therefore, if we define now $d_1$ as the maximal root of the equation, cf. \eqref{sdasda32},
%\begin{equation}\label{d2}
%r(b) = \frac{\uplambda }{4m} e^{-
%\frac{1}{4} d_0}, \quad b > d_0,
%\end{equation}
Combining (\ref{dassaddsa})--(\ref{d2}) we can assert that
\begin{equation}\label{d22}
m u(x, t_1) e^{- (\phi \ast u)(x, t_1 )} \le 2 m d_1
e^{-\frac{1}{4}  (d_1 - d_0)} = \frac{\uplambda }{2}, \quad x \in
U_2.
\end{equation}
Moreover, for any $t= t_1 + \tau, \; \tau \ge 0$ the inequality
(\ref{d22}) is also valid on $U_2$. Indeed, using \eqref{DotZ}, one can obtain in the same way as in (\ref{dassaddsa})--(\ref{d22}) the following
\begin{align*}
&m u(x, t_1+\tau) e^{- (\phi \ast u)(x, t_1 +\tau)} \le m (2
d_1 + \uplambda \tau) e^{-\frac{1}{4} (d_1 - b + \frac{\uplambda }{2}
\tau)} \\=&\,
 2 m (d_1 + \frac{\uplambda }{2} \tau) e^{- \frac{1}{4} (d_1 -b +
\frac{\uplambda }{2} \tau)} < 2 m d_1 e^{- \frac{d_1}{4}}
e^{\frac{b}{4} } = \frac{\uplambda }{2}, \quad x \in U_2,
\end{align*}
since $d_1+\frac{\uplambda }{2} \tau>d_1>d_0=b>\hat{b}$ and the function $r$ decreases for $b>\hat{b}$.
Therefore,
\begin{equation}\label{Dd2}
\dot{u}(x, t_1+\tau) > \frac{\uplambda }{2}, \quad \tau \ge 0,
\ \ x \in U_2,
\end{equation}
and, as a result, $u(x,t)$ is monotonically growing on $U_2$
for $t \ge t_1$.

This procedure can be continued step by step using segments $B_k, \;
k=1,2,\ldots$ of the same length $\frac12$ shifted to the right of $U_{k-1}$ with step $\frac14$. The same arguments may be applied in the left direction form the origin $0$ to get estimate \eqref{txx} for all negative $x$. Finally we have the
iterative relation for $t_k$ and $d_k$:
\begin{equation}\label{IT}
t_k = \frac{2}{\uplambda }(d_k - d_{k-1}), \quad d_k e^{-
\frac{d_k}{4}} = \frac{\uplambda }{4m} e^{- \frac{d_{k-1}}{4}}.
\end{equation}
%Since, for any $x\in \R$, $|x|\in B_{k(x)}$, where
Let $k(x)$ be the number of steps in our scheme required to reach $x\in \R$.
%It is evident that $k(x)\leq [4|x|]+1$ for large enough $|x|$,
Then the time $t(x)$ at which the
function $u(x,t)$ starts to increase monotonically at $x$ with
$$
\dot{u}(x,t) > \frac{\uplambda }{2}, \quad t \ge t(x),
$$
can be bounded from above as
$$
t(x) \leq t_1+t_2+\ldots+t_{k(x)}= \frac{2}{\uplambda }(d_{k(x)}-b).
$$
In particular,
$$
t(x) \leq \frac{2}{\uplambda } d_{k(x)},
$$
where $d_{k(x)}$ is defined by recurrence relation (\ref{IT}). To
complete the proof of Theorem 5.7 we use the asymptotic for $d_k$.

We set
\[
c_k:=\frac{d_k}{4}, \qquad \mu:=\ln \frac{\uplambda}{16m},
\]
then \eqref{IT} yields
\begin{equation}\label{ITmod}
 c_ke^{-c_k}= e^\mu e^{-c_{k-1}}.
\end{equation}
We recall that we choose $c_0={b}/{4}$ in such a way that, for $k=1$, \eqref{ITmod} has a solution $c_1>c_0$, hence, for $k=2$, \eqref{ITmod} has a solution $c_2>c_1$ and so on. Taking logarithms of both sides of \eqref{ITmod}, we get
\begin{equation}\label{ITmod2}
 c_k-\ln c_k+\mu=c_{k-1}.
\end{equation}
The statement of the theorem will be proved once we prove the proposition below.
\begin{proposition}
The sequence of $c_k$ defined by a
recurrence relation \eqref{ITmod2} has the following asymptotic
representation
\begin{equation}\label{DK}
c_k = k \ln k + k \ln \ln k -(\mu+1) k + o(1), \quad k \to \infty.
\end{equation}
\end{proposition}
\begin{proof}
Substituting $c_k=k \ln k + k \ln \ln k + a k+g_k$ with $g_k=o(1)$, $k \to \infty$ into \eqref{ITmod2} we deduce that
 \begin{align*}
 \ln c_k & =\ln \bigl( k \ln k + k \ln \ln k + a k+g_k\bigr) \\
 & = \ln \bigl( k \ln k \bigr) +\ln\Bigl( 1+\frac{k \ln \ln k + a k+g_k}{k\ln k}\Bigr)\\
 & = \ln k + \ln\ln k +\frac{k \ln \ln k + a k+g_k}{k\ln k}+o(1), \quad k\rightarrow\infty
 \end{align*}
 and
 \begin{align*}
 c_{k-1}&=(k-1) \ln \Bigl(k\Bigl(1-\frac{1}{k}\Bigr)\Bigr) + (k -1)\ln \ln \Bigl(k\Bigl(1-\frac{1}{k}\Bigr)\Bigr) + a (k-1)+g_{k-1}\\
 &=(k-1)\ln k -(k-1)\frac{1}{k}+(k-1)\ln\Bigl(\ln k \Bigl(1-\frac{1}{k\ln k}\Bigr)\Bigr)+a(k-1)+o(1)\\
 &=(k-1)\ln k -(k-1)\frac{1}{k}+(k-1)\ln\ln k -(k-1)\frac{1}{k\ln k}+a(k-1)+o(1).
 \end{align*}
 As a result,
 \begin{align*}
 &\quad c_k-\ln c_k+\mu-c_{k-1}\\&=k \ln k + k \ln \ln k + a k -\ln k - \ln\ln k -\frac{k \ln \ln k + a k+g_k}{k\ln k}\\
 &\quad - (k-1)\ln k +(k-1)\frac{1}{k}-(k-1)\ln\ln k +(k-1)\frac{1}{k\ln k}-a(k-1)+\mu+o(1)\\
 &=o(1),
 \end{align*}
 provided that $a=-\mu-1$.
\end{proof}
The theorem is fully proved.
\end{proof}

\begin{remark}
 The relation \eqref{ITmod} may be rewritten using the so-called tree function $T(x)=-W(-x)$, where $W$ is the Lambert $W$-function and $x\in\bigl(-\frac{1}{e},0\bigl)$, see e.g. \cite{BFS2008} and the references therein. More precisely, the Lambert $W$-function which solves equation $W(x)e^{W(x)}=x$, $x\geq-\frac{1}{e}$ has two real branches: the principal branch $W_0$ which increases on $\bigl(-\frac{1}{e},+\infty\bigr)$ and the negative branch $W_{-1}$ which decreases on $\bigl(-\frac{1}{e},0\bigr)$. We are interested in the latter branch, namely, \eqref{ITmod} may be rewritten as
 \begin{equation}\label{IT_LW}
 c_k=-W_{-1}\bigl(-e^{\mu-c_{k-1}}\bigr), \quad c_{0}>\mu+1.
\end{equation}
Up to our knowledge, the asymptotic \eqref{DK} is new for the iteration \eqref{IT_LW} of the Lambert $W$-function.
\end{remark}

%GATHER{D:/_TeX/MyArticles/actual.bib}
%\bibliographystyle{is-abbrv}
%\bibliography{Macintosh HD/Users/Dr. O.Kutovyi /Documents/Math/2013/Articles/with Lena Zhizhina}
\def\cprime{$'$}

\end{document}